\documentclass[a4paper,USenglish,cleveref,autoref,thm-restate,nonumbering]{lipics-v2021}

\usepackage{amsthm,amsmath,amssymb}

 \usepackage{tcolorbox}

\newboolean{shortver}
\setboolean{shortver}{false}

\usepackage{graphicx} 

\usepackage{bm}
\usepackage[lined,boxed,commentsnumbered,ruled,vlined,noend,linesnumbered,boxed]{algorithm2e}
\usepackage{amsfonts}
\usepackage{tcolorbox}
\usepackage{mathtools}
\tcbuselibrary{skins}
\usepackage{xcolor}
\usepackage{cancel}
\usepackage{csquotes}
\usepackage[]{mdframed}
\usepackage{caption}
\usepackage{tcolorbox}
\tcbuselibrary{skins}
\usepackage[dvipsnames]{xcolor}
\usepackage{float}
\usepackage[bb=boondox]{mathalfa}
\usepackage{soul}

\colorlet{mix}{red!50!black}
\definecolor{DarkRed}{rgb}{0.8,0,0}
\definecolor{ForestGreen}{rgb}{0.1333,0.5451,0.1333}

\usepackage{hyperref}
\hypersetup{colorlinks={true},linkcolor={DarkRed},citecolor=ForestGreen}

\usepackage{tikz}
\usetikzlibrary{patterns}

\usepackage[colorinlistoftodos]{todonotes}

\usepackage{amsthm}
\usepackage{cleveref}

\newcommand{\nv}{\mathsf{NVis}}
\newcommand{\vis}{\mathsf{Vis}}
\newcommand{\bd}{\mathsf{\delta}}
\newcommand{\gc}{\mathsf{GCon}}

\newcommand{\vil}{\mathsf{N}}

\newcommand{\inte}{\mathsf{iN}}

\newcommand{\npo}{near polygon}

\newcommand{\pol}{\mathcal{P}}

\newcounter{sasanka}

\newcounter{satya}

\newtcolorbox{greenbox}[1][]{colback=green!60!white, colframe=green!50!black,
  boxrule=0.5pt, arc=2mm, left=2mm, right=2mm, top=1mm, bottom=1mm, #1}

\newcounter{debabrata}

\definecolor{bblue}{rgb}{0,0.1,0.55}

\definecolor{defblue}{rgb}{0.1,0.4,0.6} 
\let\emph\relax\DeclareTextFontCommand{\emph}{\color{bblue}\em}
 
\nolinenumbers

\title{{\sc Witness Set}: A Visibility Problem in ${\sf NP} \cap {\sf XP}$}

\titlerunning{{\sc Witness Set}: A Visibility Problem in ${\sf NP} \cap {\sf XP}$} 

\author{Satyabrata Jana}{Indian Statistical Institute, Kolkata, India}{satyamtma@gmail.com}{}{}

 \author{Debabrata Pal}
{Indian Statistical Institute, Kolkata, India}{debabratapal4521@gmail.com}{}{}

 \author{Bodhayan Roy}
{Indian Institute of Technology Kharagpur, Kharagpur, India}{bodhayan.roy@gmail.com}{}{}

 \author{Sasanka Roy}
{Indian Statistical Institute, Kolkata, India}{sasanka.ro@gmail.com}{}{}

\authorrunning{S. Jana, D. Pal, S. Roy, and B. Roy} 

\Copyright{Jane Open Access and Joan R. Public} 

\ccsdesc[500]{Theory of computation~Computational geometry}
\ccsdesc[300]{Theory of computation~Problems, reductions and completeness}

\keywords{Polygon, Visibility, Witness Set, {\sf NP}-hard,  {\sf XP}, Chordal Graphs, Maximum Independent Set} 

\EventEditors{John Q. Open and Joan R. Access}
\EventNoEds{2}
\EventLongTitle{42nd Conference on Very Important Topics (CVIT 2016)}
\EventShortTitle{CVIT 2016}
\EventAcronym{CVIT}
\EventYear{2016}
\EventDate{December 24--27, 2016}
\EventLocation{Little Whinging, United Kingdom}
\EventLogo{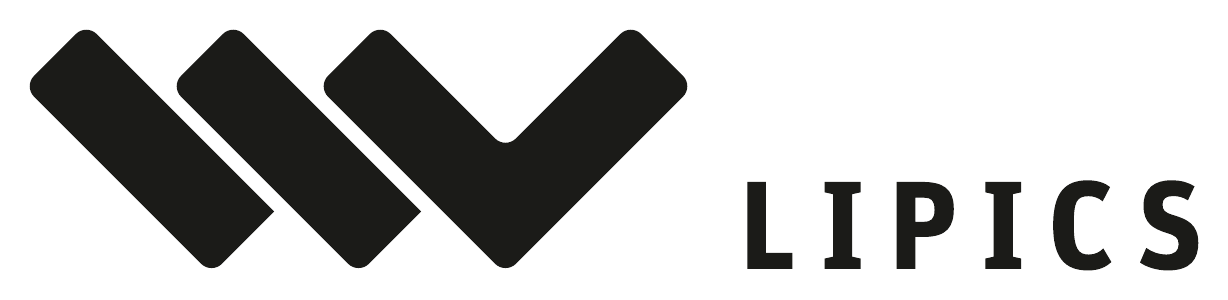}
\SeriesVolume{42}
\ArticleNo{23}

\begin{document}
 \maketitle

\begin{abstract}

We study the \textsc{Witness Set} problem, a natural dual\footnote{The \textsc{Art Gallery} problem is a covering problem, whereas the \textsc{Witness Set} problem can be viewed as a packing problem based on visibility regions. Moreover, the size of a maximum witness set provides a lower bound on the size of a minimum guard set.} to the classical \textsc{Art Gallery} problem. In the \textsc{Witness Set} problem, we are given a polygon $\mathcal{P}$ and an integer $k$ as input, and the objective is to determine whether $\mathcal{P}$ has a witness set of size at least $k$. A point set $X \subseteq \mathcal{P}$ is called a witness set if every point in $\mathcal{P}$ is visible from at most one point in $X$.

For simple polygons, we show that \textsc{Witness Set} lies in both $\mathsf{NP}$ and $\mathsf{XP}$. This stands in sharp contrast to its dual, the \textsc{Art Gallery} problem, which was recently shown to be $\exists\mathbb{R}$-complete by Abrahamsen et al.~\cite{DBLP:journals/jacm/AbrahamsenAM22}, and is therefore neither in {\sf NP} nor admits a polynomial-size discretization unless $\mathsf{NP} = \exists\mathbb{R}$. In contrast, we prove that \textsc{Witness Set} for simple polygons admits a finite discretization of size $n^{f(k)}$ for some function $f$.

For comparison, even for simple polygons, Efrat and Har-Peled~\cite{DBLP:journals/ipl/EfratH06} gave an algorithm for \textsc{Art Gallery} running in time $n^{\mathcal{O}(k)}$ using tools from real algebraic geometry, and it appears difficult to obtain such algorithms without this machinery (as discussed in~\cite{DBLP:journals/jacm/AbrahamsenAM22}). On the other hand, our approach for \textsc{Witness Set} is purely combinatorial and relies on discretization, leading to an $n^{f(k)}$-time algorithm.

Although Amit et al.~\cite[Section~3.3]{AmitMitchellPacker} claimed more than fifteen years ago that \textsc{Witness Set} is $\mathsf{NP}$-hard, no proof or reference was provided. We show that the discrete version of the \textsc{Witness Set} problem -- where the witness set must be chosen from a given finite point set $Q$ (instead of allowing witnesses to be chosen anywhere in the polygon), referred to as \textsc{Discrete Witness Set} -- is $\mathsf{NP}$-complete, even when the input is restricted to rectilinear polygons with holes. However, for simple polygons, \textsc{Discrete Witness Set} admits a polynomial-time algorithm by Das et al.~\cite{das2025witnesssetmonotonepolygons}. Thus, it remains an open question whether the \textsc{Witness Set} problem is {\sf NP}-hard.

\end{abstract}

\section{Introduction}

In a  polygon $\pol$, two points $p,q\in \pol$ are said to be \emph{visible} to each other if the line segment $\overline{pq}$ lies entirely
within $\pol$.  A point set $W\subseteq \pol$ is called a \emph{witness set} if no point
of the polygon is visible from more than one point of $W$ (see   Figure~\ref{witfirst} for an illustration). In contrast to
guarding problems, where the aim is to make every point of $\pol$ visible from
at least one guard, a witness set enforces exclusivity of visibility:
each location in the polygon ``corresponds'' to at most one witness.  A witness
set is {\em optimal} if it is of the largest possible size.  In
this work, we study the  decision version of the problem.

\begin{tcolorbox}[enhanced,
  title={\color{black}{\sc Witness Set}},
  colback=white, boxrule=0.4pt,
  attach boxed title to top center={xshift=-5cm,yshift*=-2.5mm},
  boxed title style={size=small,frame hidden,colback=white}]
  \vspace{-1mm}
  \textbf{Input:} A polygon $\pol$ and a non-negative integer $k$.\\
  \textbf{Question:} Does there exist a set $S$ of at least $k$ points in $\pol$ 
such that every point of the polygon $\pol$ is visible from at most one point of $S$?
\end{tcolorbox}


\begin{figure}[ht!]
    \centering
    \includegraphics[width=0.5\linewidth]{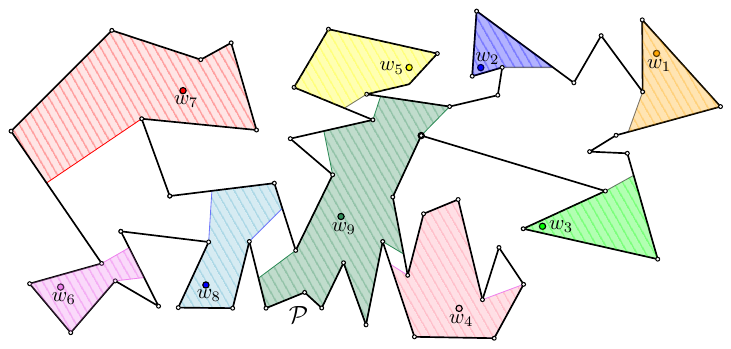}
    \caption{Example of witnesses $\{w_1,\ldots, w_9\}$  in a polygon $\mathcal{P}$.}
    \label{witfirst}
\end{figure}

\subsection{Motivation and Related Work}

Visibility-based optimization problems are a central theme in computational geometry.  
The classical \textsc{Art Gallery} problem asks for the minimum number of guards whose visibility regions cover a polygon~\cite{ghosh2007vis,ORourke87}.  
Other related problems include the \textsc{Hidden Set} problem, which seeks a maximum set of pairwise non-visible points, and the \textsc{Convex Cover} problem, which partitions a polygon into convex pieces. Let $\mathtt{ws}(\pol)$, $\mathtt{gs}(\pol)$, $\mathtt{hs}(\pol)$, and $\mathtt{cc}(\pol)$ denote the sizes of optimal witness sets, guard sets, hidden sets, and convex covers, respectively.  
These quantities satisfy
$
\mathtt{ws}(\pol)\le \mathtt{gs}(\pol)$ and $\mathtt{ws}(\pol)\le \mathtt{hs}(\pol)\le \mathtt{cc}(\pol),
$
since witnesses must be mutually exclusive in visibility and at most one hidden point can lie in any convex region.

\medskip

The \textsc{Art Gallery} problem is known to be $\mathsf{NP}$-hard~\cite{LeeL86} and APX-hard~\cite{EidenbenzSW01}.  
A major recent breakthrough by Abrahamsen, Adamaszek, and Miltzow~\cite{DBLP:journals/jacm/AbrahamsenAM22} shows that the problem is $\exists\mathbb{R}$-complete, even for polygons with integer coordinates.  
This implies that it is not in  the complexity class {\sf NP} and unlikely to admit polynomial-size discretization (certificates) or purely combinatorial algorithms unless $\mathsf{NP}=\exists\mathbb{R}$.  
Indeed, optimal guard placements may require irrational coordinates, even when only two or three guards suffice.

Despite decades of research, exact algorithms for \textsc{Art Gallery} remain limited.  
The only known approach, due to Efrat and Har-Peled~\cite{DBLP:journals/ipl/EfratH06}, runs in time $n^{\mathcal{O}(k)}$, where $k= \mathtt{gs}(\pol)$, using tools from real algebraic geometry~\cite{basu2006algorithms}, and appears difficult to replicate using purely combinatorial techniques.  Furthermore, strong inapproximability results are known: there is no $o(\log n)$-approximation for polygons with holes unless $\mathsf{P}=\mathsf{NP}$~\cite{EidenbenzSW01}. Moreover, under the Exponential Time Hypothesis ({\sf ETH}), there is no $n^{\mathcal{O}(k)}$-time algorithm for \textsc{Art Gallery}, even for simple polygons~\cite{DBLP:conf/esa/BonnetM16} and for polygons with holes~\cite{EidenbenzSW01}.

In contrast, the \textsc{Witness Set} problem imposes a fundamentally different constraint: instead of covering the polygon, it requires that visibility regions of selected points do not overlap.  
Although this definition is simple, it leads to a visibility structure that differs significantly from both guarding and hiding.  
This distinction raises the possibility that \textsc{Witness Set} admits more tractable combinatorial approaches.

\medskip 

The \textsc{Witness Set} problem was introduced by Amit, Mitchell, and Packer~\cite{AmitMitchellPacker}.  
Subsequent work has been limited.  
Daescu et al.~\cite{DBLP:journals/comgeo/DaescuFMPS19} gave a linear-time algorithm for two restricted subclasses of monotone polygons -- uni-monotone polygons and monotone mountains. 
In fact, they showed that $\mathtt{ws}(\pol)=\mathtt{gs}(\pol)$ when $\pol$ belongs to either of these classes. 
More recently, Das et al.~\cite{das2025witnesssetmonotonepolygons} extended these results to monotone polygons. 
They presented an exact algorithm for \textsc{Witness Set} running in time $r^{\mathcal{O}(k)} \cdot n^{\mathcal{O}(1)}$, based on a discretization approach, where $r$ and $n$ denote the number of reflex vertices and total vertices of the input polygon, respectively.. 
In addition, they designed a $(1+\epsilon)$-approximation scheme ({\sf PTAS}) with running time $r^{\mathcal{O}(1/\epsilon)} \cdot n^2$. 
They also showed that the \textsc{Discrete Witness Set} problem can be solved in polynomial time.

\medskip

Overall, while guarding problems are deeply connected to algebraic and continuous complexity, witness sets appear to admit a more combinatorial structure.  
This motivates a systematic study of their algorithmic and complexity-theoretic properties.

\subsection{Our Perspective}

This work is motivated by the following fundamental questions about Witness Set:

\begin{mdframed}[backgroundcolor=gray!10,
  topline=false,bottomline=false,leftline=false,rightline=false]
\centering
\begin{minipage}{0.9\linewidth}
\begin{itemize}
    \item Is \textsc{Witness Set} in $\mathsf{NP}$?
    \item Does \textsc{Witness Set} admit a finite discretization of size $n^{f(k)}$ for some computable function $f$?
\end{itemize}
\end{minipage}
\end{mdframed}

These two  questions are motivated by the stark contrast with the \textsc{Art Gallery} problem, whose decision version is $\exists\mathbb{R}$-complete~\cite{DBLP:journals/jacm/AbrahamsenAM22}, and therefore not in {\sf NP} as well as unlikely to admit polynomial-size certificates unless $\mathsf{NP}=\exists\mathbb{R}$. 

The second question concerns whether the continuous search space of candidate witness points can be reduced to a finite set. More precisely, given an instance of \textsc{Witness Set}, can we, in time $n^{f(k)}$, construct a finite point set $Q \subseteq \pol$ such that whenever $\pol$ admits a witness set of size at least $k$, the set $Q$ also contains one?

A positive answer would have strong algorithmic consequences. Combined with a polynomial-time algorithm for the discrete variant of \textsc{Witness Set}~\cite{das2025witnesssetmonotonepolygons}, such a discretization would yield an $n^{g(k)}$-time algorithm, for some computable function $g$, for deciding the existence of a witness set of size $k$.

Unlike the \textsc{Art Gallery} problem, which is governed by coverage constraints, \textsc{Witness Set} enforces pairwise non-overlap of visibility regions. This structural distinction enables a combinatorial discretization approach, which we exploit in this work.





\subsection{Our Contributions}

We present the first general algorithmic and complexity-theoretic framework
for the \textsc{Witness Set} problem in simple polygons. Our main results are summarized below.

\subparagraph{\Cref{sec:poly}. Discretization Framework.}

We show that \textsc{Witness Set} admits a finite discretization scheme. 
Our approach is based on a novel geometric ordering of witness points captured by a
\emph{neighbor witness graph}, whose chordality property plays a central role.

\begin{restatable}{theorem}{thmDisFPT}\label{thm:disfpt}
The {\sc Witness Set} problem in a simple polygon admits a finite discretization computable in
$n^{f(k)}$ time, for some computable function $f$. Moreover, within the same running time one can construct a finite set
$Q$ of size $n^{f(k)}$ such that $\pol$ contains a witness set of size $k$ if and only if $Q$ contains one.
\end{restatable}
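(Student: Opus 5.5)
We would prove Theorem~\ref{thm:disfpt} by a canonical-form argument: take an arbitrary witness set $W=\{w_1,\dots,w_k\}$ in $\pol$, move its points one at a time, never breaking the witness property, until every point lies in an explicitly computable finite set $Q$. Throughout we use the pairwise characterization: $W$ is a witness set if and only if $\vis(w_i)\cap\vis(w_j)=\emptyset$ for all $i\neq j$, where $\vis(p)$ denotes the visibility polygon of $p$. For a fixed set of other witnesses, the feasible region for $w_i$ is therefore the set $F_i$ of points $p$ whose visibility polygon avoids each $\vis(w_j)$, $j\neq i$. The idea is that $F_i$ is a union of cells of an arrangement of lines. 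These lines pass through pairs of points drawn from a set consisting of polygon vertices and of vertices of the polygons $\vis(w_j)$. Since $\vis(p)$ changes combinatorially only when $p$ crosses such a line, the witness property is preserved as long as $w_i$ stays inside the closure of its cell. So $w_i$ can be slid to a vertex of the arrangement and remain a witness. The complication is that this vertex depends on the \emph{current} positions of the other witnesses.

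To control these dependencies we would use the neighbor witness graph mentioned in the contributions. Its vertex set is $W$, and $w_i$ and $w_j$ are adjacent when their visibility regions are ``blocked'' by a common reflex vertex or window, so that moving one witness constrains the other. The key structural step is to show that this graph is chordal, or that after suitable local moves it can be made chordal. Chordality gives a perfect elimination ordering $w_{\sigma(1)},\dots,w_{\sigma(k)}$, and we process the witnesses in that order. When a witness is snapped, the coordinates it depends on come only from its later neighbors, and those neighbors form a clique. We then proceed inductively. The witnesses snapped first are pinned to vertices of an arrangement built from polygon vertices only ($n^{O(1)}$ candidates). Each later witness is pinned to a vertex of an arrangement whose defining points are polygon vertices together with vertices of visibility polygons of earlier-pinned witnesses. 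At level $t$ the candidate set has size $n^{g(t)}$ with $g(t)=O(g(t-1))$, so after $k$ levels $|Q|\le n^{f(k)}$ for some computable $f$ (of exponential type). $Q$ is built level by level: for each candidate point we compute its visibility polygon in $O(n)$ time, collect the new vertices, and form line arrangements. Hence the construction takes $n^{f(k)}$ time. The converse direction of the ``if and only if'' is trivial because $Q\subseteq\pol$.

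The main obstacle is the snapping step itself. Moving $w_i$ to a cell vertex may put it on the boundary of another witness's visibility region. Witness sets are defined with closed visibility, so touching counts as sharing a point, and this is exactly where a naive argument breaks. We would first try to argue that each witness can always be moved in a direction that \emph{shrinks} its visibility polygon, toward a reflex vertex or deeper into a pocket, rather than merely keeping it combinatorially fixed. Visibility polygons in a simple polygon are monotone with respect to such moves within pockets, so disjointness is preserved. The remaining task is to show that the final position is determined by constantly many previously fixed objects. This is also where the chordality and elimination-order argument must prevent cyclic dependencies, and we expect proving chordality of the neighbor witness graph to be the technical heart of the proof.
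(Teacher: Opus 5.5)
Your outline has the same skeleton as the paper: neighbor witness graph, chordality, perfect elimination ordering, and iterated chord arrangements with an exponent exponential in $k$. However, the two steps that carry the weight are missing or wrong.

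\textbf{First gap: what simpliciality buys.} You never identify what simpliciality gives geometrically, and what you do say points the wrong way. If a witness's new position ``depends only on its later neighbors'', then it depends on witnesses that are not yet pinned. Such a position cannot lie in a set $Q$ computed in advance from $(\pol,k)$. This also contradicts your claim that the first witnesses are pinned using polygon vertices only.

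In the paper the mechanism is geometric. Two witnesses $w_i,w_j$ are adjacent iff some path lies in $\vis(w_i)\cup\nv(W)\cup\vis(w_j)$. A Jordan-type argument shows that all witnesses on a cycle are attached to one common component of $\nv(W)$ (\cref{uniquecomponentinsidecycle}); otherwise the region between two such paths would enclose part of $\bd(\pol)$. This gives chordality. More importantly, it gives that a simplicial $w$ has $e_{w,v}=e$ for one window $e$ and all $v\in\vil(w)$ (\cref{onlyoneperfectwindowinvolvedforsimplicialvertex}).

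By \cref{onlyoneside}, every constraint $\vis(e_{v,w};w)$ coming from these not-yet-pinned neighbors lies on one side of the line through $w$ and the reflex base of $e$. The other side is therefore free of all their constraints. Its boundary contains a point of $Q$: either a vertex formed by walls and chords generated from already pinned witnesses, or the reflex base of a window of $\vis(w)$ (see \cref{ReplacingSimplicialVertexByAVertex} and the claim in the inductive step). You also need that deleting a simplicial witness creates no new adjacencies (\cref{equivalbetw}), so the elimination order stays meaningful geometrically. Your adjacency notion (``blocked by a common reflex vertex or window'') gives no handle on chordality or on this single-window property, and you leave chordality unproved.

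\textbf{Second gap: snapping to a cell vertex.} This step fails as stated, and your proposed repair is false. With the other witnesses fixed, the valid positions for $w_i$ form the complement in $\pol$ of the closed set of points seeing $\bigcup_{j\ne i}\vis(w_j)$. That is a relatively open set. A vertex of the closed cell can lie on its frontier, exactly where $\vis(w_i)$ touches another visibility region.

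There is no monotonicity to fall back on. Moving toward a reflex vertex rotates the windows at the other reflex vertices and can expose new regions, so the visibility polygons are not nested along such motions.

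The paper handles this in two stages. It first places each witness at a clone (an infinitesimal perturbation into the interior of its free region) of a point of $Q$. It then replaces the clones one by one by midpoints. \cref{removeclonetrick} produces $q_1,q_2\in Q$ on the boundary of the movable region with $iN(\overline{q_1q_2})\subset iN(MV(w))$; these are bases of windows or vertices formed by two non-window edges. The midpoint of $\overline{q_1q_2}$ is then a valid position. This is exactly why Algorithm~\ref{algo_1} inserts midpoints and keeps generating chords afterwards. A $Q$ made of arrangement vertices alone comes with no guarantee of containing any valid position.
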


The construction relies on a structural analysis of visibility regions, windows,
and non-visible components. Algorithm~\ref{algo_1} incrementally builds the set $Q$ by generating maximal visibility chords and their intersections, followed by midpoint insertions that ensure completeness of the discretization.

\subparagraph{\Cref{sec:witness}. Exact Algorithm.}

Using the above discretization, we obtain an exact algorithm for \textsc{Witness Set}.

\begin{restatable}{theorem}{thmfptalgorithm}\label{thm:fptalgorithm}
For simple polygons, {\sc Witness Set} can be solved in time $n^{f(k)}$, for some computable function $f$.
\end{restatable}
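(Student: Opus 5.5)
The plan is to combine \Cref{thm:disfpt} with an exact solver for the discrete problem on the candidate set $Q$. First, run the discretization of \Cref{thm:disfpt}. In $n^{f_1(k)}$ time it produces a finite set $Q\subseteq\pol$ with $|Q|\le n^{f_1(k)}$, and $\pol$ has a witness set of size $k$ if and only if $Q$ contains one. Since any subset of a witness set is again a witness set, ``at least $k$'' reduces to ``exactly $k$''. So it suffices to decide whether some $k$-subset of $Q$ is a witness set.

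The second step is this discrete decision. There are two routes, and either gives the bound.

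\textbf{Polynomial-time route.} Invoke the polynomial-time algorithm for \textsc{Discrete Witness Set} in simple polygons by Das et al.~\cite{das2025witnesssetmonotonepolygons}. Its running time is polynomial in $n$ and $|Q|$, hence $n^{O(f_1(k))}$ overall.

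\textbf{Self-contained brute-force route.} Enumerate all $\binom{|Q|}{k}\le n^{k f_1(k)}$ subsets $S\subseteq Q$ of size $k$. For each $S$, test whether it is a witness set. A set $S$ is a witness set exactly when $\vis(p)\cap\vis(q)=\emptyset$ for all distinct $p,q\in S$, because a point seen by two witnesses lies in both visibility polygons. For a pair $p,q$:
\begin{itemize}
\item Compute $\vis(p)$ and $\vis(q)$ in $O(n)$ time each; both are simple polygons of complexity $O(n)$.
\item Test whether the two closed regions intersect, in $O(n\log n)$ time, by checking for boundary crossings and then point containment.
\end{itemize}
Hence each subset is tested in $O(k^2 n\log n)$ time.

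For the total, the brute-force route costs
\[
n^{k f_1(k)}\cdot O(k^2 n\log n)\le n^{f(k)}
\]
for a suitable computable $f$, for example $f(k)=k f_1(k)+3$ absorbing constants for $n\ge 2$. The discretization itself costs $n^{f_1(k)}$, which is dominated by this term.

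Correctness follows directly from the equivalence in \Cref{thm:disfpt}. If the algorithm finds a $k$-witness set in $Q\subseteq\pol$, it is one in $\pol$. Conversely, if $\pol$ has one, then $Q$ has one, and the enumeration finds it.

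The only delicate point is the exact pairwise test. Points of $Q$ may have coordinates that are intersections of chords and midpoints, so their bit complexity must stay polynomial for the geometric primitives to run in the claimed time. I expect the construction in \Cref{thm:disfpt} to produce points by a bounded number, depending only on $k$, of intersection and midpoint operations. Each such operation increases bit length by at most a constant factor, so coordinate sizes remain bounded by $n^{O(f_1(k))}$-size arithmetic, and the bound on the total running time is preserved.
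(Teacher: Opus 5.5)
Your proof is correct, and your polynomial-time route is exactly the paper's argument. The paper runs \texttt{WitGen} to get $Q$ with $|Q|=n^{\mathcal{O}(4^{2k})}$, invokes the discrete algorithm of Das et al.\ in $\mathcal{O}(|Q|^3 n^3)$ time, and uses \Cref{mainthm} together with $Q\subseteq\pol$ for the two directions. Your brute-force alternative (enumerating $k$-subsets of $Q$ and testing pairwise disjointness of visibility polygons) is also valid. It removes the dependence on the external discrete solver at the harmless cost of an extra factor $k$ in the exponent, and your bit-complexity remark covers a detail the paper leaves implicit.
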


This follows by reducing the continuous problem to the discrete instance on $Q$. Since the discrete variant can be solved in polynomial time~\cite{das2025witnesssetmonotonepolygons}, the overall running time becomes $n^{f(k)}$. 

In particular, this implies that \textsc{Witness Set} belongs to the class {\sf XP} (slice-wise polynomial time). Recall that a parameterized problem is in {\sf XP} if it can be solved in time $n^{f(k)}$ for some computable function $f$, where $n$ denotes the input size and $k$ is the parameter. 

\subparagraph{\Cref{sec:np}. Membership in $\mathsf{NP}$.}

In contrast to the \textsc{Art Gallery} problem, which is not in $\mathsf{NP}$ unless $\mathsf{NP} = \exists\mathbb{R}$, we show that \textsc{Witness Set} lies in $\mathsf{NP}$.

\begin{restatable}{theorem}{thmNP}\label{thm:np}
The {\sc Witness Set} problem in simple polygons is in $\mathsf{NP}$.
\end{restatable}

The proof is based on a  certificate consisting of $k$ witness points, whose validity can be verified by computing visibility polygons and checking pairwise intersections in poly-time.

\subparagraph{\Cref{sec:hard}. Hardness of the Discrete Variant.}

Finally, we establish the computational hardness of the discrete version of the problem.

\begin{restatable}{theorem}{thmnph}\label{theo-hard1}
{\sc Discrete Witness Set} is $\mathsf{NP}$-complete for polygons with holes, even when restricted to rectilinear polygons.
\end{restatable}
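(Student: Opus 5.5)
Membership in $\mathsf{NP}$ is the easy half. A certificate is a subset $W \subseteq Q$ of size $k$. To verify it, compute the visibility polygon of each $w \in W$ inside $\pol$ (with holes, in polynomial time by angular sweep). Then check, for every pair, that the two visibility regions have interior-disjoint intersection, or more precisely empty intersection, by polygon intersection. Closedness issues can be handled by exactly the same test the definition requires.

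For hardness I would reduce from \textsc{Maximum Independent Set} on planar graphs of maximum degree $3$, which is $\mathsf{NP}$-complete. Alternatively, a cleaner source is planar \textsc{3-SAT} or Planar Monotone Rectilinear 3-SAT, encoded as packing. The key observation is that a witness set on a finite $Q$ is an independent set in the ``conflict graph'' on $Q$, where $q, q'$ conflict iff $\vis(q)\cap\vis(q')\neq\emptyset$. So it suffices to realize a hard graph as such a conflict graph, up to gadgets.

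Concretely, I would take an orthogonal grid drawing of the planar max-degree-3 graph $G$. First, subdivide each edge an even number $2t_e$ of times; this preserves the reduction, since $\alpha(G')=\alpha(G)+\sum_e t_e$. Next, build the rectilinear polygon with holes as a system of thin corridors. Each (subdivided) vertex becomes a small rectangular room containing one candidate point of $Q$. Consecutive rooms along an edge are joined by a short axis-parallel corridor, so that the visibility regions of the two points overlap inside the corridor. Holes (blocking rectangles) placed between non-adjacent rooms, together with bends at corridor corners, ensure that two candidate points see a common point only when they are adjacent in $G'$. Placing each point at a room position that sees only down its own corridors, e.g.\ a room with staggered small openings, lets me control exactly which corridors each candidate's visibility region penetrates. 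Finally, I set $k=\alpha(G)+\sum_e t_e$.

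The main obstacle is geometric non-interference. Visibility regions in rectilinear polygons leak through long straight lines, and overlap only needs a single common point, including boundary points. I must ensure that non-adjacent candidates' visibility regions are genuinely disjoint (closed sets) while adjacent ones do overlap. I would handle this with narrow corridors containing at least one bend (an L-shape) between rooms, and by placing points at room corners. With this placement, a point sees into an adjacent corridor up to a bend but never past two bends. Adjacent rooms' visible portions then overlap in the shared corridor segment between bends, and regions two steps apart are separated by a bend. Verifying this case analysis on the grid layout, with polynomial coordinates, completes the reduction.
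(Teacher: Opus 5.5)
Your proposal is correct in outline but takes a different route from the paper. The paper reduces from \textsc{Planar Monotone Rectilinear 3SAT} using three kinds of gadget:
a variable gadget with $m$ orange and $m$ red candidates whose conflicts alternate, so that taking $m$ points from it forces a single colour;
a clause gadget whose three blue candidates pairwise conflict;
and vertical corridors that make a blue candidate conflict with the opposite-colour candidates of its variable.
The threshold $k=mn+m$ then forces a consistent, satisfying choice. You instead observe that \textsc{Discrete Witness Set} is exactly maximum independent set in the visibility-conflict graph on $Q$. You then realize an even subdivision of a planar max-degree-$3$ graph as such a conflict graph, using one uniform gadget with one candidate per vertex. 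Your route is more modular. The counting is the standard identity $\alpha(G')=\alpha(G)+\sum_e t_e$, you need no triangle (clause) gadget and no $m$-fold copies, and all the geometry sits in one local corridor lemma. The paper's route gets its layout directly from the rectilinear PMR3SAT embedding, but it must certify several gadget invariants, which it asserts rather than constructs. The $\mathsf{NP}$-membership part is the same in both.

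The point you must state correctly is the bend count, where your text is inconsistent. Take a corridor of width $\varepsilon$ whose legs have length $L\gg\varepsilon$. A candidate lying in the axis strip of a leg sees that whole leg and the entire corner square of its first bend, but only an $O(\varepsilon^2/L)$ sliver past that bend. So you need \emph{exactly} one bend between consecutive candidates. Then adjacent candidates share the corner square, and candidates at distance two are separated by two bends and a room. With two or more bends (your ``at least one bend'' and ``segment between bends''), adjacent candidates share no point. With no bend, a line of sight can run on through the next room and create conflicts at distance two.

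A room can be entered and left on any two sides, so you can lay each subdivided edge out along its drawn channel as a staircase of L-shaped gaps, choosing $t_e$ to fit. Place each candidate in the intersection of the axis strips of all its incident corridors, so that it sees down each of them. Avoid accidental collinearities of walls, since grazing sight lines count. Finally, the threshold should be $k=k_0+\sum_e t_e$ for the given independent-set target $k_0$, not $\alpha(G)$.
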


The reduction is from \textsc{Planar Monotone Rectilinear 3SAT}~\cite{DBLP:journals/ijcga/BergK12}.

\medskip

Taken together, these results provide the first cohesive understanding of the algorithmic and complexity landscape of \textsc{Witness Set}. 
In particular, the problem admits a combinatorial discretization and exact algorithms in simple polygons, while becoming $\mathsf{NP}$-complete in the presence of holes. 
We believe that the structural insights developed here may also be useful for other visibility-based optimization problems.

\ifthenelse{\boolean{shortver}}{{\bf \em A full version of the paper containing all missing proofs and formal details  is appended at the end.}
}{}

\section{Preliminaries}
\begin{definition}[Polygon, Polygon with holes, Simple Polygon]
{\em A \emph{polygon} is a closed region in the plane bounded by a finite sequence of non-intersecting line segments, called its \emph{edges}. The endpoints of these edges are its \emph{vertices}. A \emph{polygon with holes} is a polygon whose boundary consists of more than one disjoint simple cycle: one outer boundary and one or more interior boundaries.  A polygon is \emph{simple} if it is neither self-intersecting nor contains holes.}
\end{definition}

For a polygon $\pol$, we denote its boundary by \emph{$\bd(\pol)$} and its interior by \emph{$\inte(\pol)$}. A vertex of $\pol$ is \emph{convex} if its interior angle is at most $180^\circ$; otherwise, it is \emph{reflex}. For a line segment $\overline{ab}$, we denote its interior by \emph{$iN(\overline{ab})$} $\coloneq\overline{ab}\setminus\{a,b\}$ and its boundary by \emph{$\bd(\overline{ab})$} $\coloneq \{a,b\}$.

\begin{definition}[Hairy polygon]\label{def:HairyPolygon}
{\em A \emph{hairy polygon} $\pol_h$ (\cref{hairypoly}) consists of (i) a simple polygon $\pol$ (the \emph{body}), and (ii) a finite set of line segments $\{e_1,\ldots,e_k\}$, called \emph{arms}, each attached to a distinct vertex of $\pol$ and lying outside of $\pol$ except at the attachment point.}
\end{definition}

  \begin{figure}[ht!]
        \centering
        \includegraphics[width=0.28\linewidth]{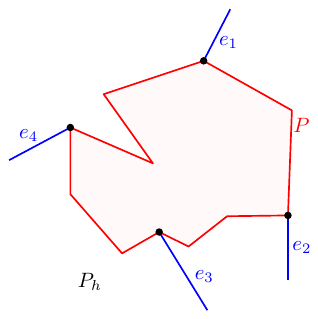}
        \caption{Example of a {hairy polygon}.}
        \label{hairypoly}
    \end{figure}
    
\begin{definition}[Visibility]
{\em Two points $x,y\in \pol$ are \emph{visible} to each other if the segment $\overline{xy}$ is contained entirely in $\pol$. The \emph{visibility region} of $p$ is $\vis(p)=\{q\in \pol : \overline{pq}\subseteq \pol\}$. For a set $S\subseteq \pol$, $\vis(S)=\bigcup_{p\in S}\vis(p)$.}
\end{definition}

If $\vis(p)$ is a hairy polygon, there may exist two distinct reflex vertices $r,r' \in \vis(p)$ such that $p, r,$ and $r'$ are collinear; see \cref{vishairypoly}.

\begin{figure}[ht!]
    \centering
    \includegraphics[width=0.7\linewidth]{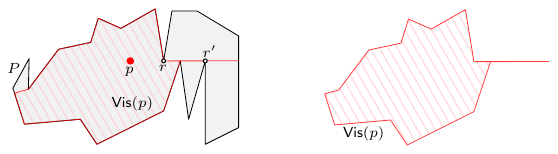}
    \caption{Example of {visibility region} which is a {hairy polygon}.}
    \label{vishairypoly}
\end{figure}


\begin{definition}[Window]
{\em A segment $l=[a,b]$ is a \emph{window} of $\vis(p)$ if $l\subseteq \bd(\vis(p))$, $l\cap \bd(\pol)=\{a,b\}$, and $l$ is maximal with this property. The endpoint closer to $p$ is called the \emph{base}, and the other is called the \emph{end}..  For an illustration, see \cref{window}.}
\end{definition}

\begin{definition}[Wall]\label{perfectprimary}
{\em A segment $l\subseteq \bd(\pol)\cap \bd(\vis(p))$ is a \emph{wall} of $\vis(p)$ if it is maximal in $\bd(\pol)\cap \bd(\vis(p))$.  For an illustration see \cref{window}.}
\end{definition}

\begin{figure}[ht!]
    \centering
    \includegraphics[width=0.5\linewidth]{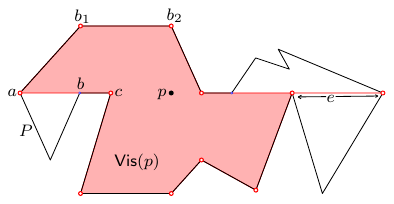}
    \caption{The red region denotes $\vis(p)$. $\overline{ab}$ represents a window of $\vis(p)$, whereas $\overline{bc}$ corresponds to a wall of $\vis(p)$. $b$ is the base of the window $\overline{ab}$ and $a$ is the end.}
    \label{window}
\end{figure}

\begin{lemma}\label{noconsecutivesecondary}
Let $e_1, e_2$ be two distinct {windows} of $\vis(p)$ such that neither of them is part of any arm (\cref{def:HairyPolygon}) of $\vis(p)$. Then $e_1$ and $e_2$ cannot be consecutive, i.e., $e_1\cap e_2=\emptyset$.

\end{lemma}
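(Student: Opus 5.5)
Assume for contradiction that $e_1$ and $e_2$ are consecutive windows of $\vis(p)$, i.e.\ they share an endpoint $x$. The plan is to use the fact that every window lies on a ray from $p$ through a reflex vertex, and to show that this forces either $e_1=e_2$ or one of them to lie on an arm, contradicting the hypothesis.

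\textbf{Step 1: windows lie on rays from $p$.} Every window $l=[a,b]$ of $\vis(p)$ is contained in the ray from $p$ through its base $b$, and $b$ is a reflex vertex of $\pol$. This holds because a boundary point of $\vis(p)$ not on $\bd(\pol)$ is a point where visibility from $p$ is blocked. Hence $l$ lies on a line through $p$.

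\textbf{Step 2: locate the common endpoint $x$.} Since $e_1\cap e_2=\{x\}$ and $x\in\bd(\pol)$, $x$ is a base or an end of each window. Write $\rho_1,\rho_2$ for the rays from $p$ carrying $e_1,e_2$.
\begin{itemize}
\item If $\rho_1\neq\rho_2$, two distinct rays from $p$ meet only at $p$. So $x=p$, and $p$ lies on $\bd(\pol)$ as the base of both windows. But the base of a window is the point of the window closer to $p$, and from $p$ every point of the segment $\overline{p\,b_i}$ is visible. So the window would begin strictly beyond a reflex vertex $b_i\neq p$, contradicting $x=p$. The degenerate case $p=b_i$ has to be treated separately. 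In that case $p$ is a reflex vertex, and the segments emanating from $p$ along the polygon's boundary edges are walls, not windows; I will argue that such a segment can only be an arm.
\item If $\rho_1=\rho_2$, the windows are collinear and consecutive along the same ray, meeting at $x$. Then $x$ is the end of one window and the base of the other. Moreover, $x$ must be a reflex vertex at which $\pol$ touches the ray from the side that does not block further visibility (as in \cref{vishairypoly}). I will show this configuration yields a segment of $\bd(\vis(p))$ with $\vis(p)$ locally on neither side, i.e.\ a one-dimensional part. That is exactly an arm of the hairy polygon $\vis(p)$, contradicting the hypothesis that neither window is part of an arm.
\end{itemize}

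\textbf{Main obstacle.} The main obstacle is the collinear case: carefully arguing, via a local analysis around $x$, that the two-dimensional part of $\vis(p)$ near the second segment is empty. Concretely, $\pol$ blocks on one side at $b_1$ and on the other side at $x$, so only the segment itself is visible beyond $x$. That segment is then an arm, and this uses the definition of hairy polygon. I will do this by examining points slightly off the ray on either side near $x$ and showing that the segments from $p$ to them cross $\bd(\pol)$.
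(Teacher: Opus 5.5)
\textbf{The non-collinear case.} Your case $\rho_1\neq\rho_2$ is essentially the paper's whole proof: distinct rays from $p$ meet only at $p$, so the common endpoint is $p\in\bd(\pol)$, which is impossible. Your handling of it is muddled, though.

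\begin{itemize}
\item If $x=p$, then $p$ is automatically the base, so the subcase $b_i\neq p$ is vacuous.
\item In the remaining subcase you should not try to show that the segment is an arm. No arm can be attached at $p$ either.
\item The clean argument: for small $\varepsilon$, the set $\pol\cap B_\varepsilon(p)$ is a sector with apex $p$. It is therefore star-shaped from $p$ and entirely visible. So near $p$ the boundary of $\vis(p)$ consists only of walls, and no window can end at $p$.
\end{itemize}

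\textbf{The gap: the collinear case.} You assert that $\pol$ blocks at $x$ on the side opposite to the one blocked at $b_1$. That is only one of two possibilities. The reflex vertex $x$ may instead touch the ray from the side that is already hidden behind $b_1$. Then $\vis(p)$ continues two-dimensionally on the same side past $x$, and no arm appears. Yet $[b_1,x]$ and $[x,a_2]$ are still two distinct windows sharing the point $x$. The definition forces the split at $x$, because a window meets $\bd(\pol)$ only in its endpoints.

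A concrete instance: take $p=(0,0)$ and $\pol$ with vertices $(-1,-1),(3,-1),(3,2),(2.5,2),(2,0),(1.5,2),(1,0),(0.5,2),(-1,2)$.
\begin{itemize}
\item Both notches hang down from above and touch the $x$-axis at $(1,0)$ and $(2,0)$.
\item $\vis(p)$ is the rectangle $[-1,3]\times[-1,0]$ together with the convex quadrilateral $(-1,0),(1,0),(0.5,2),(-1,2)$, and it has no arms.
\item $[(1,0),(2,0)]$ and $[(2,0),(3,0)]$ are consecutive windows.
\end{itemize}

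So no local analysis at $x$ can close your second case. The statement as literally written fails in this degenerate position. The paper's proof simply skips the collinear case by asserting that the common endpoint must be $p$.

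\textbf{The fix.} What is missing is a general-position hypothesis: $p$ lies on no line through two reflex vertices of $\pol$. The perturbation in the arm-removal lemma can guarantee this for every witness, since there are only finitely many such lines. Under this hypothesis the collinear case is excluded at once, because $b_1$ and the base $x$ of $e_2$ would be two reflex vertices on one ray from $p$. Your first case alone then proves the lemma.
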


\ifthenelse{\boolean{shortver}}{}{
 \begin{proof}
    If these windows are consecutive edges of $\vis(p)$, then their common endpoint is $p$, since any window is collinear with $p$. From the definition of window, $p\in \pol $. However, this is a contradiction, since it cannot happen that $p \in \bd(\pol)$ while the segments of $\bd(\vis(p))$ incident to $p$ are not walls of $\vis(p)$.
\end{proof}  
}

\begin{mdframed}[backgroundcolor=black!10,topline=false,bottomline=false,leftline=false,rightline=false]
    Throughout Sections $3$ and $4$, unless stated otherwise, all polygons are simple.
\end{mdframed}

\section{Neighbor Witness Graph is Chordal}\label{sec:chordal}

In this section, we introduce the \emph{neighbor witness graph} associated with a witness set and establish that this graph is chordal when restricted to simple polygons. The proof relies on structural properties of non-visible components and windows.
\begin{definition}[Near polygon]\label{nearpoly}
    {\em A region $\pol'$ is called a \emph{\npo} if there is a simple polygon $\pol$ such that $\pol'\subseteq \pol$ and $\pol'=\pol\setminus B$ for some $B\subseteq\bd(\pol)$. We define $\bd(\pol')\coloneq \bd(\pol)$. Two \npo s, $\pol_1$ and $\pol_2$ are said to be \emph{attached} to each other if $iN(\pol_1)\cap iN(\pol_2)=\emptyset$ and $\bd(\pol_1)\cap\bd(\pol_2)\neq \emptyset$. Note that, the closure of a near polygon is always a simple polygon.}
\end{definition}
\begin{definition}[Witness set]
{\em In a polygon $\pol$, a finite point set $W\subseteq \pol$ is said to be a \emph{witness set} of $\pol$ if $\vis(x)\cap\vis(y)=\emptyset\text{ for all distinct } x,y\in W.$ See Figure~\ref{witt1}.}
\end{definition}
\begin{definition}[Non-visible region]
{\em For a witness set $W$ of a polygon $\pol$, we define its \emph{non-visible region} as $\nv(W)=\pol\setminus \vis(W)$. For an illustration, see \cref{witt1}. Observe that, each  component of $\nv(W)$ is a \npo.}
\end{definition}

\begin{figure}[ht!]
    \centering
    \includegraphics[width=0.7\linewidth]{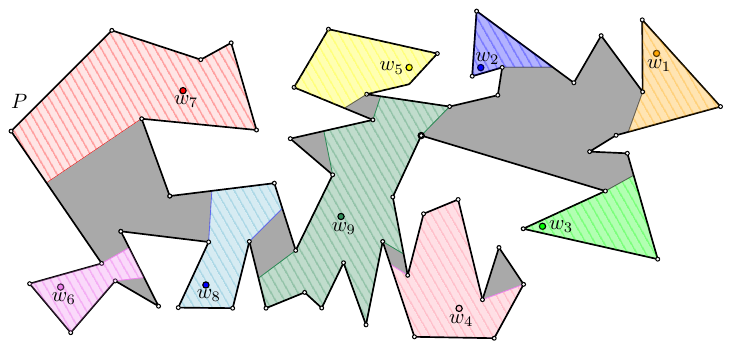}
    \caption{A polygon $\pol$ with a witness set $W=\{w_1, w_2,...,w_9\}$ and corresponding visibility region.  $\nv(W)$ is defined by the  union of the gray regions.}
    \label{witt1}
\end{figure}

 \begin{mdframed}[backgroundcolor=black!10,topline=false,bottomline=false,leftline=false,rightline=false]
     \begin{center}
         Let $W=\{w_1,\dots,w_k\}$ be a witness set of the polygon $\pol$.
     \end{center}
 \end{mdframed}

\begin{lemma}
    Let $w\in W$ be such that $\vis(w)$ has some arms. Then there exists  $w'\in \vis(w)$ such that $ (W\setminus\{w\})\cup\{w'\} $ is a witness set of size $|W|$ and $\vis(w')$ has no arms.
\end{lemma}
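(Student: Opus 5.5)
The plan is to show that some point $w'$ near $w$ inside $\vis(w)$ has a visibility region contained in $\vis(w)$ up to parts that meet no other witness, and whose visibility region is a genuine simple polygon with no arms. Arms arise only from degenerate collinearities: $w$ is collinear with two reflex vertices $r,r'$ (as in \cref{vishairypoly}), so that a segment $\overline{r'x}$ beyond $r'$ is visible from $w$ while a neighbourhood of it is not. Hence the natural move is to perturb $w$ slightly off every line through two vertices of $\pol$.

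\textbf{Step 1: pick the perturbation region.} Let $L$ be the finite set of lines through pairs of vertices of $\pol$. Every arm of $\vis(w)$ lies on a line of $L$ through $w$. Consider the points $w'$ in a small open disk $D$ around $w$, intersected with $\inte(\vis(w))$. If $w\in\bd(\pol)$, take the side of $D$ inside $\pol$. Then choose $w'\in D\cap \inte(\pol)$ avoiding all lines of $L$. This is possible since $L$ is finite and $D\cap\inte(\pol)$ has nonempty interior. For such $w'$, no two reflex vertices are collinear with $w'$. Also $w'$ lies on no edge line through a reflex vertex. Therefore $\vis(w')$ is a simple polygon without arms.

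\textbf{Step 2: the witness property is preserved.} We need $\vis(w')\cap\vis(w_j)=\emptyset$ for all $j\neq i$. The key claim is that for $D$ small enough, $\vis(w')\subseteq \vis(w)\cup N$, where $N$ is a thin set near the windows of $\vis(w)$. The argument is by continuity of visibility polygons: windows of $\vis(w')$ pivot slightly around the same reflex vertices. A point of $\vis(w')\setminus\vis(w)$ must therefore lie in a thin wedge behind some reflex vertex $r$ of a window or arm of $\vis(w)$. This wedge is bounded by the rays $\overrightarrow{wr}$ and $\overrightarrow{w'r}$.

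We must then show these wedges avoid every $\vis(w_j)$. Since $W$ is finite and each $\vis(w_j)$ is closed, it suffices to show that no $\vis(w_j)$ contains the window or arm segment itself, together with points arbitrarily close to it on the far side. Here the case split is as follows.

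\emph{Proper windows.} If $\vis(w_j)$ touched a window $\overline{ab}$ of $\vis(w)$ at some point, it would intersect $\vis(w)$, since windows belong to $\vis(w)$ (it is closed). That is a contradiction. So each $\vis(w_j)$ has positive distance from $\overline{ab}$.

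\emph{Arms.} Arms are contained in $\vis(w)$, so the same argument gives positive distance. Hence some $\varepsilon>0$ separates $\vis(w)$'s boundary segments from $\bigcup_{j\neq i}\vis(w_j)$.

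Choosing $D$ so that all wedges lie within distance $\varepsilon$ of the corresponding segments finishes the argument.

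\textbf{Main obstacle.} The delicate step is making the wedge-containment claim precise: we must show that $\vis(w')\setminus\vis(w)$ lies within distance $\varepsilon$ of the windows and arms of $\vis(w)$, uniformly over the whole (bounded) polygon. My plan is to handle this by viewing $\vis(w')$ angularly around $w'$. For each direction, the first hit on $\bd(\pol)$ varies continuously except at reflex-vertex directions. At those directions the discrepancy is confined to a segment collinear with $w'$ and a reflex vertex, which converges to the corresponding window or arm segment of $\vis(w)$ as $w'\to w$. Compactness of $\pol$ then yields the uniform bound.
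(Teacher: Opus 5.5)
Your proposal is correct and follows essentially the same route as the paper. Both perturb $w$ inside a small ball to a point off the finitely many degenerate collinearity lines, and both use that the parts of $\vis(w)$ across which the perturbed region can grow are at positive distance from every other $\vis(w_j)$ (the paper phrases this as a component of $\nv(W)$ surrounding each arm). Your version is more careful than the paper's, since you avoid all lines through vertex pairs and keep $w'\in\inte(\pol)\cap\vis(w)$.

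The step you flag as the main obstacle closes without any angular analysis. If $w_n\to w$, $x_n\in\vis(w_n)$ and $x_n\to x$, then $\overline{wx}\subseteq\pol$ because $\pol$ is closed. By compactness, $\vis(w')$ then lies in any prescribed $\varepsilon$-neighbourhood of $\vis(w)$ once $w'$ is close enough to $w$. Finally, the disjoint compact sets $\vis(w)$ and $\vis(w_j)$ are at positive distance, so taking $\varepsilon$ smaller than that distance preserves the witness property.
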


\ifthenelse{\boolean{shortver}}{}{
\begin{proof}
    Arms of $\vis(w)$ arise only from finitely many pairs of reflex vertices collinear with $w$. Choose $\varepsilon>0$ sufficiently small so that for all $w'\in B_\varepsilon(w)$, $\vis(w')\cap \vis(v)=\emptyset$ for all $v\in W\setminus\{w\}$. The existence of $\varepsilon$ is assured since there is always a component of $\nv(W)$ around each $e_i$, so if we move $w$ slightly in any direction, the new {visibility region} would not intersect the {visibility regions} of the other {witness points}. Pick $w'$ in the interior of $B_\varepsilon(w)$ that is not collinear with any such reflex pair. Then $\vis(w')$ has no arms, and $W'$ remains a witness set.
\end{proof}}

\begin{mdframed}[backgroundcolor=black!10,topline=false,bottomline=false,leftline=false,rightline=false]
Henceforth, w.l.o.g., we assume that for every $w\in W$, $\vis(w)$ is a simple polygon.
\end{mdframed}

We now introduce a graph, called the neighbor witness graph associated to the witness set $W$, which captures the idea of the neighborhood of a witness point.
\begin{definition}[Neighbor witness path]
{\em A \emph{neighbor witness path} between a pair of distinct witnesses $w_i$ and $w_j$ is a continuous path contained in $\vis(w_i)\cup\nv(W)\cup \vis(w_j)$.}
\end{definition}

\begin{figure}[ht!]
    \centering
    \includegraphics[width=0.7\linewidth]{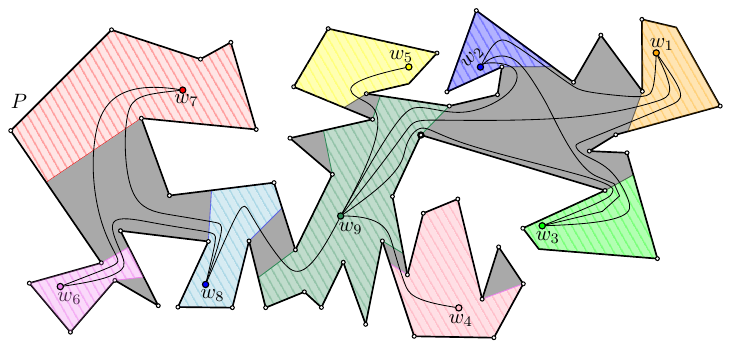}
    \caption{Black curves are {neighbor witness paths}.}
    \label{neighborwitnesspathimage}
\end{figure}

See \cref{neighborwitnesspathimage} for an illustration of neighbor witness path. Note that, it is possible that two witness points admit no neighbor witness path between them. For example, in \Cref{neighborwitnesspathimage}, there is no neighbor witness path between $w_1$ and $w_7$. The following observation follows immediately from the above definition.

\begin{observation}
    If $\gamma_{ij}$ is a neighbor witness path between $w_i$ and $w_j$, then $\gamma_{ij}\cap \vis(w_k)=\emptyset$ for each $w_k\in W\setminus\{w_i,w_j\}$.
\end{observation}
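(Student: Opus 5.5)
The claim follows by combining the definition of a neighbor witness path with the definition of a witness set; no geometric argument is needed.

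Let $\gamma_{ij}$ be a neighbor witness path between $w_i$ and $w_j$. By definition, $\gamma_{ij}\subseteq \vis(w_i)\cup\nv(W)\cup\vis(w_j)$. Fix any $w_k\in W\setminus\{w_i,w_j\}$. Since
\[
\gamma_{ij}\cap\vis(w_k)\subseteq \bigl(\vis(w_i)\cap\vis(w_k)\bigr)\cup\bigl(\nv(W)\cap\vis(w_k)\bigr)\cup\bigl(\vis(w_j)\cap\vis(w_k)\bigr),
\]
it suffices to show that each of the three sets on the right is empty.

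The first and third sets are empty because $W$ is a witness set: $\vis(x)\cap\vis(y)=\emptyset$ for all distinct $x,y\in W$, and $w_k\neq w_i$, $w_k\neq w_j$. The middle set is empty because $\nv(W)=\pol\setminus\vis(W)$ and $\vis(w_k)\subseteq\vis(W)$, so $\nv(W)$ and $\vis(w_k)$ are disjoint. Hence $\gamma_{ij}\cap\vis(w_k)=\emptyset$.

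The only point needing care is that $\vis$ here is the closed visibility region, so boundary points are included. This is harmless, because the witness-set definition already demands disjointness of these closed regions, and $\nv(W)$ is defined as the exact set-theoretic complement of $\vis(W)$ in $\pol$. So there is no real obstacle.
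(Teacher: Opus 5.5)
Your proof is correct and matches the paper, which gives no separate argument and simply notes that the observation follows immediately from the definitions. Splitting the path into its parts in $\vis(w_i)$, $\nv(W)$ and $\vis(w_j)$, and using pairwise disjointness of witness visibility regions together with $\nv(W)=\pol\setminus\vis(W)$, is exactly that immediate argument.
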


\begin{figure}[ht!]
    \centering
    \includegraphics[width=0.5\linewidth]{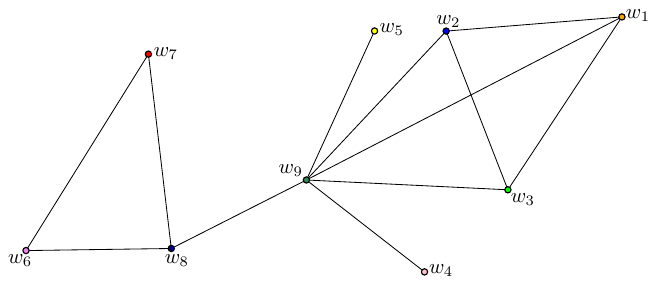}
    \caption{The graph $\gc(W)$ corresponding to the witness set in \cref{witt1}.}
    \label{graph1}
\end{figure}
\begin{definition}[Neighbor witness graph]\label{neighborwitnessgraph}
{\em Let $W$ be a witness set. The \emph{neighbor witness graph} $\gc(W)$ is the graph with vertex set $W$, where distinct vertices $w_i,w_j$ are adjacent if and only if a neighbor witness path exists between them. See \cref{graph1}.}
\end{definition}

\begin{definition}[Neighbor witness]\label{local}
{\em For $w\in W$, a point $v\in W\setminus\{w\}$ is a \emph{neighbor witness} of $w$ if $vw$ is an edge of $\gc(W)$. The set of all such neighbors is denoted by $\vil(w)$.}
\end{definition}

Observe that if a component of $\nv(W)$ attaches to $\vis(w)$ for some 
$w \in W$, then this attachment can occur only along a {window} 
of $\vis(w)$. This is because no portion of $\bd(\pol)$ can serve as a common 
boundary between a visibility region and a component of $\nv(W)$. 
Furthermore, whenever a window participates in such an attachment, 
the connection necessarily extends over its entire length.

\begin{lemma}[Uniqueness of attached component]\label{uniquenessofsecondary}
Let $w\in W$, and let $NV_1,NV_2$ be connected components of $\nv(W)$ attached to $\vis(w)$. Let $e_1,e_2$ be the windows of $\vis(w)$ along which $NV_1,NV_2$ are attached. Then $NV_1 = NV_2$ if and only if $e_1=e_2.$
\end{lemma}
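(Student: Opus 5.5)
The "if" direction is immediate. By the observation preceding the lemma, an attached component of $\nv(W)$ meets $\vis(w)$ along the \emph{entire} window. So if $e_1=e_2=e$, both $NV_1$ and $NV_2$ contain points arbitrarily close to every interior point of $e$ on the side of $e$ away from $\vis(w)$. Take $x\in iN(e)$. Since $\vis(w)$ has no arms and $e\cap\bd(\pol)$ consists only of the endpoints of $e$, a small disc around $x$ is split by $e$ into two half-discs. One half-disc lies in $\vis(w)$. The other lies in a single connected piece of $\nv(W)$: it cannot meet $\vis(w)$, because $e$ is a window, and it cannot meet $\vis(w')$ for any other witness $w'$, because $\vis(w')$ is closed and disjoint from $\vis(w)\ni x$. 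This connected half-disc meets both $NV_1$ and $NV_2$, so $NV_1=NV_2$.

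The "only if" direction is where the simplicity of $\pol$ enters. Suppose $e_1\neq e_2$ but $NV_1=NV_2=NV$. The plan is to build a closed Jordan curve and derive a contradiction.

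\textbf{Step 1 (the curve).} Pick interior points $x_1\in iN(e_1)$ and $x_2\in iN(e_2)$. Join them by a path $\alpha$ inside $\vis(w)$. One choice is $\overline{x_1w}\cup\overline{wx_2}$; this can be perturbed so that, apart from its endpoints, it avoids $\bd(\vis(w))$. Next join $x_2$ back to $x_1$ by a path $\beta$ inside $NV$, which exists because $NV$ is connected and open near the windows. The union $C=\alpha\cup\beta$ is a closed curve. After making it simple, it lies in $\pol$ and meets $\bd(\vis(w))$ only at $x_1$ and $x_2$.

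\textbf{Step 2 (an end piece trapped inside).} By \cref{noconsecutivesecondary}, $e_1$ and $e_2$ are disjoint. Each window $e_i$ has its base at a reflex vertex and its end on $\bd(\pol)$. The segment $e_1$ separates $\pol$ into two pieces: the piece containing $w$, and the "pocket" $\pol_1$ beyond $e_1$. The pocket contains $NV$ locally, and $NV$ must leave it to reach $e_2$. But $\pol_1$ is bounded by $e_1$ together with a boundary chain of $\pol$. Any path from the far side of $e_1$ that avoids $\vis(w)$ can therefore only exit $\pol_1$ by crossing $e_1$ itself or $\bd(\pol)$. Crossing $e_1$ is impossible, since that means entering $\vis(w)$. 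Crossing $\bd(\pol)$ is impossible, since the path stays in $\pol$.

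\textbf{Step 3 (the contradiction).} So $\beta$, which avoids $\vis(w)$ except at its endpoints, stays inside $\pol_1$. Hence $x_2\in\overline{\pol_1}$. But $e_2\subseteq\bd(\vis(w))$ lies on the $w$-side of $e_1$, and $e_2$ is disjoint from $e_1$. This is the contradiction. Simplicity of $\pol$ (no holes) is exactly what guarantees that the chord $e_1$ disconnects $\pol$.

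\textbf{Main obstacle.} The hardest step is justifying rigorously that a window chord of a simple polygon separates it, and that everything just beyond $e_1$ lies in the pocket. I would argue this via the Jordan curve theorem applied to the closed curve formed by $e_1$ and the boundary chain of $\pol$ between its endpoints. I would also need to handle degenerate cases where $x$ lies near endpoints of $e_1$, by choosing $x_1$ and $x_2$ in the interiors of the windows.
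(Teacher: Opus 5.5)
Your proof is correct. For the ``if'' direction ($e_1=e_2\Rightarrow NV_1=NV_2$) you follow the paper's argument: both components lie on the far side of $e$ and are attached along its whole length, so they meet. Your half-disc argument makes this precise.

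The difference is in the ``only if'' direction. The paper calls it immediate and gives only a one-line appeal to the simplicity of $\pol$ and \cref{noconsecutivesecondary}. In the analogous \cref{UniquenessComponentBetweenTwo} and \cref{uniquecomponentinsidecycle}, the paper closes up a curve through the visibility region and the component. It then argues that this curve encloses one of the two chains of $\bd(\vis(w))\setminus\{e_1,e_2\}$, each containing part of $\bd(\pol)$, contradicting \cref{finiteintersectionwithbdry}. Your Step~1 starts down that road. Steps~2--3 instead use only that $e_1$ is a chord of the simple polygon $\pol$. It splits $\pol$ into two closed pieces meeting in $e_1$. The connected set $NV$ avoids $e_1\subseteq\vis(w)$, so it is trapped in the pocket, while $e_2\subseteq\vis(w)$ lies in the other piece. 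This is cleaner: it needs only chord separation, not a claim about which chain a closed curve encloses.

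Two tidy-ups. First, the curve $C$ from Step~1 is never used and can be deleted. You also do not need the path $\beta$: apply the separation directly to the connected set $NV$, which avoids any path-connectedness issue. Second, justify explicitly that $\vis(w)$ lies in the piece containing $w$. Since $e_1$ lies on a ray from $w$, for $y\in\vis(w)\setminus e_1$ the segment $\overline{wy}$ misses $e_1$ (in general position). Hence $\vis(w)\setminus e_1$ is a connected subset of $\pol\setminus e_1$ containing $w$.
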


\ifthenelse{\boolean{shortver}}{}{
\begin{proof}
    The forward implication is immediate: whenever a component is attached to a visibility region, it is necessarily tied to exactly one window. This is a direct consequence of the fact that $\pol$ is a simple polygon and, by \cref{noconsecutivesecondary}, two windows cannot occur consecutively. For the reverse direction, assume that $e_1=e_2=e$. Observe that both $NV_1, NV_2$ are attached to $\vis(w)$ on the same side of $e$, otherwise, if $NV_1$ and $NV_2$ are on either side of $e$ then one of them must have a non-empty intersection with $\vis(w)$. Now $NV_1, NV_2$ lie on the same side of $e$ and the entire length of $e$ is involved for both attachments, so it follows that $NV_1\cap NV_2\neq \emptyset$. Since distinct connected components are necessarily disjoint, we conclude that $NV_1 = NV_2$.
\end{proof}}

\begin{lemma}\label{finiteintersectionwithbdry}
Let $a,b\in \pol$, and let $\gamma_1,\gamma_2\subseteq \pol$ be two simple paths from $a$ to $b$. Let $R$ be the closed region bounded by $\gamma_1\cup\gamma_2$. Then the interior of $R$ contains no points of $\bd(\pol)$.
\end{lemma}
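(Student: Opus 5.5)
The plan is to use the simple connectivity of $\pol$: since $\pol$ is simple, it is a closed topological disk and its complement $\mathbb{R}^2\setminus \pol$ is connected and unbounded. I will argue by contradiction. Suppose some point $x\in \bd(\pol)$ lies in the interior of $R$, where $R$ is the closed bounded region enclosed by the closed curve $\gamma_1\cup\gamma_2$. If the two paths meet at intermediate points, $R$ is the union of the closures of the bounded complementary components of $\gamma_1\cup\gamma_2$; its interior is still contained in the union of those bounded components together with the arcs separating them, so the argument below applies verbatim to each such component.

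\textbf{Step 1.} The first step is to show that every neighbourhood of $x$ contains points outside $\pol$. This holds because $x\in\bd(\pol)$ and $\pol$ is the closure of its interior, with boundary a polygonal Jordan curve; hence points of $\mathbb{R}^2\setminus\pol$ accumulate at $x$. Since $x$ lies in the open set $\operatorname{int}(R)$, I can pick $y\in\operatorname{int}(R)\setminus\pol$.

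\textbf{Step 2.} Next, because $\mathbb{R}^2\setminus\pol$ is path-connected and unbounded, there is a path $\sigma$ in $\mathbb{R}^2\setminus\pol$ from $y$ to a point far outside the bounding box of $R$. By the Jordan curve theorem, applied to the relevant bounded complementary component of $\gamma_1\cup\gamma_2$ containing $y$, the path $\sigma$ must leave that component. Therefore $\sigma$ meets $\gamma_1\cup\gamma_2\subseteq\pol$. This contradicts $\sigma\cap\pol=\emptyset$, so no such $x$ exists.

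\textbf{Main obstacle.} The delicate point is that $\gamma_1$ and $\gamma_2$ need not be internally disjoint, so "the region bounded by $\gamma_1\cup\gamma_2$" must be given a precise meaning: the union of $\gamma_1\cup\gamma_2$ with all bounded components of $\mathbb{R}^2\setminus(\gamma_1\cup\gamma_2)$. With this definition, the interior of $R$ consists of those bounded components together with possibly some points of the curves. Points of the curves that lie in $\operatorname{int}(R)$ are limits of points of bounded components, so the exit argument of Step 2 still applies: choose $y$ inside a bounded component near $x$. Equivalently, I can use the statement that the complement of a simply connected compact set is connected, which says that filling in the holes of a curve lying in $\pol$ stays inside $\pol$. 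This yields $\operatorname{int}(R)\subseteq\operatorname{int}(\pol)$, which is the claim.
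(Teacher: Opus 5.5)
Your proof is correct and uses the same idea as the paper: a simple polygon has no holes, so a point of $\bd(\pol)$ cannot be enclosed by paths lying in $\pol$. The paper just asserts this, while you make it precise by picking $y\in R\setminus\pol$ near $x$ and a path from $y$ to infinity in the connected set $\mathbb{R}^2\setminus\pol$, which is forced to cross $\gamma_1\cup\gamma_2\subseteq\pol$ (plain connectedness gives that crossing; the Jordan curve theorem is not needed).
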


\ifthenelse{\boolean{shortver}}{}{\begin{proof}
If $\bd(\pol)$ contained a point strictly inside $R$, then a portion of $\bd(\pol)$ would lie entirely inside $\pol$, contradicting the fact that $\pol$ is a simple polygon.
\end{proof}}

\begin{lemma}[Unique component between neighbors]\label{UniquenessComponentBetweenTwo}
Let $v,w\in W$ be neighbor witnesses. Then there exists exactly one component of $\nv(W)$ attached to both $\vis(v)$ and $\vis(w)$.
\end{lemma}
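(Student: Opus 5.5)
We prove existence and uniqueness separately. Throughout we use the observation preceding \cref{uniquenessofsecondary}: a component of $\nv(W)$ can attach to a visibility region only along a window, and then along the whole window.

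\textbf{Existence.} Since $v$ and $w$ are neighbor witnesses, there is a neighbor witness path $\gamma\subseteq \vis(v)\cup\nv(W)\cup\vis(w)$ from $v$ to $w$. The visibility regions of distinct witnesses are disjoint. Moreover, as noted before \cref{uniquenessofsecondary}, their closures cannot meet along $\bd(\pol)$. Hence $\gamma$ cannot pass directly from $\vis(v)$ into $\vis(w)$; it must traverse $\nv(W)$.

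Parametrize $\gamma$ and let $t_0$ be the last time it lies in $\vis(v)$. Let $t_1>t_0$ be the first later time it lies in $\vis(w)$. The open subpath $\gamma(t_0,t_1)$ is connected and contained in $\nv(W)$, so it lies in a single component $NV$. Its starting point $\gamma(t_0)$ lies on a window of $\vis(v)$, so $NV$ is attached to $\vis(v)$ along that window. Symmetrically, $NV$ is attached to $\vis(w)$.

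\textbf{Uniqueness.} Suppose $NV_1\neq NV_2$ are both attached to $\vis(v)$ and to $\vis(w)$. By \cref{uniquenessofsecondary}, $NV_1$ and $NV_2$ attach to $\vis(v)$ along distinct windows $e_1\neq e_2$, and to $\vis(w)$ along distinct windows $f_1\neq f_2$.

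We build a closed curve $C$ from four pieces:
\begin{itemize}
\item a path inside $\vis(v)$ from a point of $e_1$ to a point of $e_2$;
\item a path through $NV_2$ from $e_2$ to $f_2$;
\item a path inside $\vis(w)$ from $f_2$ to $f_1$;
\item a path through $NV_1$ from $f_1$ back to $e_1$.
\end{itemize}
Each piece exists because visibility regions are star-shaped, hence connected, and each component is connected. Choose the pieces so that $C$ is a simple closed curve. Viewing it as two simple paths between the same endpoints, \cref{finiteintersectionwithbdry} shows that the region $R$ bounded by $C$ contains no point of $\bd(\pol)$ in its interior. So $R\subseteq\pol$, and $\pol$ has no hole inside $C$.

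\textbf{Key step (the main obstacle).} We must show that $e_1$ separates $\pol$, and then derive a contradiction from the curve $C$.

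A window $e_1=[a,b]$ of $\vis(v)$ touches $\bd(\pol)$ only at its endpoints. Since $\pol$ is simple, $e_1$ is a chord, so $\pol\setminus e_1$ has exactly two components. One of them contains $\vis(v)\setminus e_1$. The other, call it $\pol_1$, contains the attached component $NV_1$; it lies on the non-visible side, because $NV_1$ is disjoint from $\vis(v)$.

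Now $w\notin\vis(v)$. We claim $w$ lies in $\pol_1$, i.e., on the far side of $e_1$. Indeed, $NV_1$ is connected and attached to $\vis(w)$, and $NV_1\subseteq\pol_1$. A path from $e_1$ through $NV_1$ reaches $f_1$ and then enters $\vis(w)$ without crossing $e_1$ again. The only way back across $e_1$ is through $e_1$ itself, which is part of the boundary of $\vis(v)$. So $f_1\subseteq\pol_1$. Since $\vis(w)$ is connected and disjoint from $e_1$, all of $\vis(w)$ lies in $\pol_1$.

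Applying the same argument with $e_2$ in place of $e_1$, all of $\vis(w)$ lies in $\pol_2$, the far side of $e_2$.

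The regions $\pol_1$ and $\pol_2$ are the pockets cut off by two distinct windows of the same star-shaped region $\vis(v)$. They are therefore disjoint: each is separated from the other by $\vis(v)$. This contradicts $\vis(w)\subseteq\pol_1\cap\pol_2$.

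This makes the curve $C$ unnecessary; the chord-separation argument alone gives the contradiction.

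The main care needed is to justify rigorously that each window is a genuine chord, meeting $\bd(\pol)$ only at its endpoints, and that the pockets of distinct windows are disjoint. This uses simplicity of $\pol$, essentially as in \cref{finiteintersectionwithbdry}.
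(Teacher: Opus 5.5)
Your proof is correct, but the uniqueness part takes a different route from the paper. The paper takes two neighbor witness paths, $\gamma_1$ through $e_1$ and $NV_1$ and $\gamma_2$ through $e_2$ and $NV_2$. It notes that $\bd(\vis(v))\setminus\{e_1,e_2\}$ splits into two chains $B_1,B_2$, each containing part of $\bd(\pol)$. It then asserts that the region enclosed by $\gamma_1\cup\gamma_2$ contains one of these chains, contradicting \cref{finiteintersectionwithbdry}.

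You instead use the fact that each window $e_i$ is a chord of the simple polygon $\pol$, so it cuts off a pocket $\pol_i$. Since $NV_i$ is attached to both regions, the connected set $\vis(w)$, which avoids $e_i\subseteq\vis(v)$, is forced into $\pol_i$. The pockets of distinct windows of $\vis(v)$ are disjoint, which gives the contradiction.

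Your route needs only the standard fact that a chord splits a simple polygon into two components, plus connectivity. It avoids the loosely stated \cref{finiteintersectionwithbdry} and the unproved claim that the enclosed region contains exactly one chain $B_i$. That claim tacitly needs $\gamma_1\cup\gamma_2$ to form a simple closed curve, which is the same difficulty you ran into with $C$. The paper's version is shorter and more visual. You also handle existence, which the paper simply asserts, correctly: the first-entry/last-exit parametrization works because the disjoint closed sets $\vis(v)$ and $\vis(w)$ are at positive distance.

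A few places to tighten:
\begin{itemize}
\item Delete the curve-$C$ construction. You never use it, and ``choose the pieces so that $C$ is simple'' is unjustified.
\item ``$NV_1\subseteq\pol_1$ because $NV_1$ is disjoint from $\vis(v)$'' is too quick, since the $v$-side of $e_1$ can also contain non-visible points lying in other pockets. Argue instead that near an interior point of $e_1$, the $v$-side half-disk lies in $\vis(v)$. So $NV_1$, being attached along $e_1$, meets the far half-disk, and connectivity then puts $NV_1$ in $\pol_1$.
\item The step $f_1\subseteq\pol_1$ is cleanest as follows: $f_1\subseteq\overline{NV_1}\subseteq\pol_1\cup e_1$, and $f_1\cap e_1\subseteq\vis(w)\cap\vis(v)=\emptyset$.
\item Pocket disjointness takes one line. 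The set $\pol_1\cup e_1$ is connected. It misses $e_2$, because $\pol_1\cap\vis(v)=\emptyset$ and $e_1\cap e_2=\emptyset$ by \cref{noconsecutivesecondary}. It contains $e_1\subseteq\vis(v)\setminus e_2$, which lies on the $v$-side of $e_2$. Hence $\pol_1$ lies on the $v$-side of $e_2$, so $\pol_1\cap\pol_2=\emptyset$.
\end{itemize}
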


\ifthenelse{\boolean{shortver}}{}{
\begin{proof}
    Since $ v $ and $ w $ are {neighbor witnesses} of each other, there must exist a component of $ \nv(W) $ that lies between their respective {visibility regions}. Our goal is to establish the uniqueness of such a component.    
    \begin{figure}[ht!]
    \centering
    \includegraphics[width=0.5\linewidth]{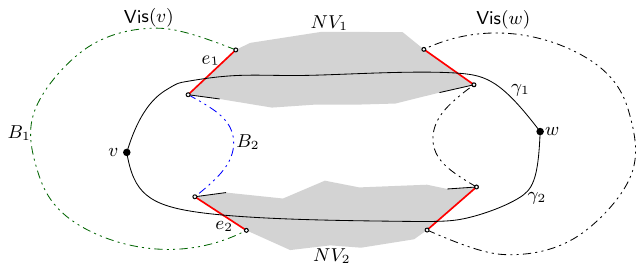}
    \caption{$\bd(\vis(v))\setminus\{e_1,e_2\}=B_1\cup B_2$.}
    \label{obsv4.10}
    \end{figure}
    Assume, for contradiction, that there are two distinct components $ NV_1 $ and $ NV_2 $ of $ \nv(W) $ that are attached to both $ \vis(v) $ and $ \vis(w) $. Now, using \Cref{uniquenessofsecondary}, suppose that $ NV_1 $ is attached to $ \vis(v) $ via the {window} $ e_1 $. Similarly, let $ NV_2 $ be attached to $ \vis(v) $ via the {window} $ e_2 $. Let $ \gamma_1 $ be a {neighbor witness path} between $ v $ and $ w $ that passes through $ e_1 $ and $ NV_1 $, and let $ \gamma_2 $ be another {neighbor witness path} that passes through $ e_2 $ and $ NV_2 $. Since $ e_1 $ and $ e_2 $ are both {windows} of $ \vis(v) $, they are not consecutive by \Cref{noconsecutivesecondary}. Therefore, $ \bd(\vis(v)) \smallsetminus \{e_1, e_2\} $ is the union of two disjoint {simple polygonal chains}, denoted $ B_1 $ and $ B_2 $, each containing a non-empty portion of $ \bd(\pol) $. Now, the interior of the region enclosed between $ \gamma_1 $ and $ \gamma_2 $ must contain exactly one of $B_1$ or $B_2$, which contradicts \Cref{finiteintersectionwithbdry}. See \cref{obsv4.10} for an illustration.
\end{proof}}

\begin{lemma}[Unique component for cycles]\label{uniquecomponentinsidecycle}
Let $C=(w_1,\dots,w_m)$ be a cycle of length $m\ge 3$ in $\gc(W)$. Then there exists exactly one component $NV_C$ of $\nv(W)$ such that $NV_C$ is attached to $\vis(w_i)$ for every $i$.
\end{lemma}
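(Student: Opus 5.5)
Uniqueness follows from existence, so the plan is to prove existence first and then derive uniqueness from \cref{UniquenessComponentBetweenTwo}. Indeed, if $NV$ and $NV'$ are both attached to every $\vis(w_i)$, then both are attached to $\vis(w_1)$ and $\vis(w_2)$. Since $w_1w_2$ is an edge of $\gc(W)$, that lemma gives $NV=NV'$.

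For existence, I would argue by contradiction using planarity and \cref{finiteintersectionwithbdry}. For each consecutive pair $(w_i,w_{i+1})$ (indices mod $m$), \cref{UniquenessComponentBetweenTwo} gives a unique component $NV_i$ attached to both $\vis(w_i)$ and $\vis(w_{i+1})$. By \cref{uniquenessofsecondary}, $NV_i$ meets $\vis(w_i)$ along a single window $e_i^{+}$ and meets $\vis(w_{i+1})$ along a single window $e_{i+1}^{-}$. The goal is to show that all the $NV_i$ coincide.

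Suppose not. Then some $w_i$ has $NV_{i-1}\neq NV_i$, and by \cref{uniquenessofsecondary} this forces $e_i^{-}\neq e_i^{+}$. I would build a closed curve $\Gamma$ by concatenating neighbor witness paths. For each $i$, the piece from $w_i$ to $w_{i+1}$ runs in $\vis(w_i)$ to $e_i^{+}$, through $NV_i$, and in $\vis(w_{i+1})$ to $w_{i+1}$. Each visibility region is star-shaped, so the pieces inside $\vis(w_i)$ can be taken as straight segments through $w_i$, and $\Gamma$ can be made a simple closed curve in $\pol$.

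At a vertex $w_i$ with $e_i^{-}\neq e_i^{+}$, \cref{noconsecutivesecondary} says the two windows are disjoint. Hence, as in the proof of \cref{UniquenessComponentBetweenTwo}, $\bd(\vis(w_i))\setminus\{e_i^-,e_i^+\}$ splits into two chains $B_1,B_2$, and each contains a portion of $\bd(\pol)$. The curve $\Gamma$ enters $\vis(w_i)$ through $e_i^{-}$ and leaves through $e_i^{+}$, so near $w_i$ it separates $B_1$ from $B_2$. One of them lies on the inner side of $\Gamma$. Writing $\Gamma$ as the union of two simple paths between two of its points, \cref{finiteintersectionwithbdry} then yields a contradiction, since a point of $\bd(\pol)$ would lie strictly inside the enclosed region.

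I expect the main obstacle to be making this topological step rigorous. The issue is that the boundary piece on the inner side must lie in the \emph{interior} of the region bounded by $\Gamma$, not merely touch $\Gamma$. Also, $\Gamma$ may revisit a component, for example when $NV_i=NV_j$ for non-adjacent $i,j$, which threatens simplicity of $\Gamma$. To handle the latter, I would first shortcut $\Gamma$: whenever two of the $NV_i$ coincide, splice the curve inside that component. This yields a simple closed curve that still passes through the interior of some $\vis(w_i)$ with distinct entry and exit windows. For the former, I would choose points of $B_1,B_2$ at positive distance from both windows, using that the witness chains contain nondegenerate boundary pieces because $\vis(w_i)$ is a simple polygon. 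If instead all the $NV_i$ are equal, that common component is the desired $NV_C$.
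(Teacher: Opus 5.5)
Your proposal is correct and follows essentially the same route as the paper. The paper likewise concatenates the neighbor witness paths $\gamma_1,\dots,\gamma_m$ into a closed curve and views it as two paths between $v_2$ and $v_m$; it then contradicts \cref{finiteintersectionwithbdry}, because one of the two chains of $\bd(\vis(v_2))$ between the distinct windows $e_1\neq e_2$ must be enclosed. Your explicit derivation of uniqueness from \cref{UniquenessComponentBetweenTwo}, and your shortcutting to obtain a simple closed curve, are details the paper leaves implicit. They are sound, provided that when you splice at a repeated component you keep the sub-loop whose consecutive components are distinct, not the one that collapses onto a single component.
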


\begin{figure}[ht!]
        \centering
        \includegraphics[width=0.5\linewidth]{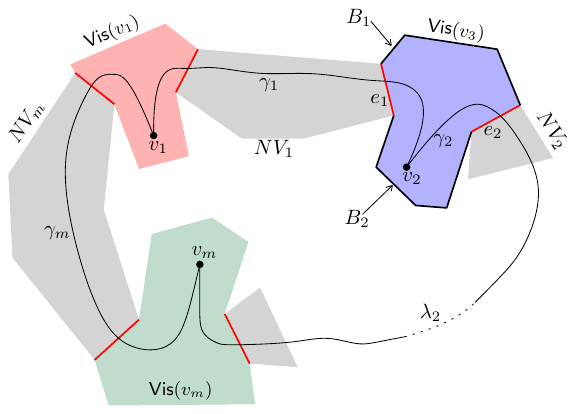}
        \caption{$B_2$ is contained inside the area enclosed by $\lambda_1=\gamma_1\cup \gamma_2$ and $\lambda_2 =\bigcup_{j=2}^{m-1}\gamma_j$. }
        \label{perfectprimarycontained}
\end{figure}

\begin{proof}
    Let $C$ be a cycle of length $m$ with vertices $v_1, \ldots, v_m \in W$. For each $i$, let $NV_i$ denote the component of $\nv(W)$ that is attached to both $\vis(v_i)$ and $\vis(v_{i+1})$ (indices taken modulo $m$). If these components were not all identical, then there would exist some $i$ for which $NV_i \ne NV_{i+1}$. Without loss of generality, assume $i = 1$; that is, $NV_1 \neq NV_2$. Let $\gamma_i$ be a {neighbor witness path} (corresponding to $NV_i$) between $v_i$ and $v_{i+1}$.  Let $NV_1$ and $NV_2$ be attached to $\vis(v_2)$ on the {windows}, $e_1$ and $e_2$ respectively. From \Cref{uniquenessofsecondary}, we have $NV_1\neq NV_2$ which implies $e_1\neq e_2$.  Now we consider two paths between $v_2$ and $v_m$. Let the first path $\lambda_1$ be $\gamma_1 \cup \gamma_m$, and let the second path $\lambda_2$ be $\bigcup_{j=2}^{m-1} \gamma_j$; see \Cref{perfectprimarycontained} for an illustration. Since $e_1$ and $e_2$ are both {windows} of $\vis(v_2)$, they cannot be consecutive by \Cref{noconsecutivesecondary}. Therefore, $\bd(\vis(v_2)) \smallsetminus \{e_1, e_2\}$ is the union of two disjoint {simple polygonal chains}, denoted $B_1$ and $B_2$ (see \Cref{perfectprimarycontained}), each containing at least one wall of $\vis(v_2)$. The interior of the region enclosed between $\lambda_1$ and $\lambda_2$ must contain exactly one of the $B_i$, $i \in \{1,2\}$, which contradicts \Cref{finiteintersectionwithbdry}. Hence $NV_1 = \cdots = NV_m$; that is, $\vis(v_1), \ldots, \vis(v_m)$ are attached to the same component.
\end{proof}
\begin{definition}[Chordal Graph]
    {\em A graph $G$ is said to be \emph{chordal} if every induced cycle in the graph  have exactly three vertices.}
\end{definition}
\begin{theorem}\label{chordalll}
For any witness set $W\subseteq \pol$, the neighbor witness graph $\gc(W)$ is chordal.
\end{theorem}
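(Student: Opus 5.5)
Suppose, for contradiction, that $\gc(W)$ contains an induced cycle $C=(w_1,\dots,w_m)$ with $m\ge 4$. By \cref{uniquecomponentinsidecycle} there is a single component $NV_C$ of $\nv(W)$ attached to every $\vis(w_i)$, and by \cref{uniquenessofsecondary} each attachment happens along exactly one window $e_i$ of $\vis(w_i)$. We will show that $w_1$ and $w_3$ are then adjacent in $\gc(W)$, which contradicts the assumption that the cycle is induced (for $m\ge 4$, $w_1$ and $w_3$ are not consecutive).

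The construction of the chord is direct. Choose a point $x_1$ in the relative interior of $e_1$ and a point $x_3$ in the relative interior of $e_3$. Since $NV_C$ is attached to $\vis(w_1)$ along all of $e_1$, and to $\vis(w_3)$ along all of $e_3$, the point $x_1$ lies in the closure of $NV_C$ and can be approached from inside $NV_C$; the same holds for $x_3$. The component $NV_C$ is a connected near polygon, and its closure is a simple polygon. Hence there is a continuous path $\sigma$ from $x_1$ to $x_3$ whose interior lies in $NV_C$. Extend $\sigma$ by the segment $\overline{w_1x_1}\subseteq\vis(w_1)$, which lies in $\vis(w_1)$ because $\vis(w_1)$ is star-shaped from $w_1$. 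Extend it on the other side by $\overline{x_3w_3}\subseteq\vis(w_3)$. The resulting curve lies in $\vis(w_1)\cup\nv(W)\cup\vis(w_3)$, so it is a neighbor witness path, and therefore $w_1w_3$ is an edge of $\gc(W)$.

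The main obstacle is justifying that the path $\sigma$ can be taken inside $NV_C$ and not merely inside its closure. A near polygon may have lost parts of its boundary $B\subseteq\bd(\pol)$, so $NV_C$ need not be closed, and the window points $x_1,x_3$ belong to $\vis(w_1)$ and $\vis(w_3)$ rather than to $NV_C$.

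To handle this, I would use the fact that a near polygon contains the interior of its closure, which is open and path-connected because the closure is a simple polygon. I would then route $\sigma$ through that interior and enter the endpoints $x_1$ and $x_3$ transversally from the side of $NV_C$. The remaining concern is that the path must avoid $\vis(w_k)$ for every $k\neq 1,3$. This holds because the interior of $NV_C$ lies in $\nv(W)$, which by definition is disjoint from every visibility region.
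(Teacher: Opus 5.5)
Your proof is correct and follows essentially the same route as the paper. Both arguments invoke \cref{uniquecomponentinsidecycle} to obtain a single component $NV_C$ of $\nv(W)$ attached to every $\vis(w_i)$ on the cycle, and then conclude that any two cycle vertices (in your case $w_1,w_3$) are joined by a neighbor witness path through $NV_C$, which gives a chord. The only difference is that you build that path explicitly, as $\overline{w_1x_1}$, then a path through the interior of the closure of $NV_C$, then $\overline{x_3w_3}$, whereas the paper treats its existence as immediate.
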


\begin{proof}
Consider any cycle $C$ of length $m\ge 4$ in $\gc(W)$. By Lemma~\ref{uniquecomponentinsidecycle}, there exists a unique component $NV_C$ of $\nv(W)$ attached to all visibility regions of vertices of the cycle. For any two distinct vertices $w_i,w_j$ in the cycle, both attach to $NV_C$. Thus, a neighbor witness path exists between them, implying the edge $w_iw_j$ belongs to $\gc(W)$. Every cycle of length at least four therefore contains a chord, so $\gc(W)$ is chordal.
\end{proof}

\section{{\sc Witness Set} in  Simple Polygons} \label{sec:poly}

 In this section, our aim is to find an algorithm for the {\sc  Witness Set} problem in polygons. To do that, we find a discrete set within a polygon $\pol$ that suffices to contain a solution of the {\sc Witness Set} in $\pol$. In \cref{sec:discrete}, we show that any witness point corresponding to a simplicial vertex in the neighbor witness graph can be replaced by a vertex of the polygon, without affecting the witness property of the set. The structural argument relies on the uniqueness of windows and non-visible components established in the previous section. In \cref{sec : algo}, we work with an arbitrary simple polygon $\pol$ and construct a finite point set $Q \subseteq \pol$ with the following property: For every witness set $W \subseteq \pol$ of size at most $k$, there exists a witness set $W' \subseteq Q$ with $|W'| = |W|$. Finally, in \cref{sec:witness}, we solve the {\sc Witness Set} problem for simple polygons.

\subsection{Replacing a Simplicial Witness by a Polygon Vertex} \label{sec:discrete}
\textbf{Notation:} For $w\in W$ and $v\in \vil(w)$, \emph{$e_{w,v}$} denotes the window of $\vis(w)$ along which the non-visible component between $\vis(w)$ and $\vis(v)$ is attached with $\vis(w)$.
\begin{lemma}\label{intersectper}
    Let $w\in W$ and $v\in \vil(w)$. Then every neighbor witness path between $w$ and $v$ must intersect both $e_{w,v}$ and $e_{v,w}$.
\end{lemma}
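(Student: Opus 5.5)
By the symmetric roles of $w$ and $v$, I will only show that an arbitrary neighbor witness path $\gamma$ between $w$ and $v$ meets $e_{w,v}$; the same argument with $w$ and $v$ swapped gives $e_{v,w}$. The idea is to follow $\gamma$ from $w$ and record where it first leaves $\vis(w)$. Since $\gamma$ ends at $v \notin \vis(w)$ (the visibility regions are disjoint), and $\vis(w)$ is closed, there is a last point $x$ of $\gamma$ in $\vis(w)$. Just after $x$, the path lies in $\nv(W)\cup\vis(v)$.

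The first step is to show that $x$ lies on a window of $\vis(w)$. Points of $\gamma$ immediately after $x$ lie in $\pol\setminus\vis(w)$. So $x \in \bd(\vis(w))$, and $x$ is not in the interior of a wall, because near an interior wall point the complement of $\vis(w)$ inside $\pol$ is empty. If $x$ is a window endpoint that lies on $\bd(\pol)$, it is in particular on that window. Hence $x$ lies on some window $e$ of $\vis(w)$.

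The second step is to show that $\gamma$ enters a component of $\nv(W)$ through $e$. Right after $x$ the path cannot already be in $\vis(v)$. Otherwise $\vis(v)$ and $\vis(w)$ would share the point $x$, or would be attached along a window with no non-visible region between them. That is impossible: as observed before \Cref{uniquenessofsecondary}, regions meet only through non-visible components (no part of $\bd(\pol)$ is shared, and $\vis(v)\cap\vis(w)=\emptyset$). So just after $x$ the path is inside a component $NV$ of $\nv(W)$ attached to $\vis(w)$ along $e$. Continuing along $\gamma$, the path leaves $NV$ only into $\vis(w)$ or $\vis(v)$, since it avoids all other visibility regions (Observation) and other components of $\nv(W)$ are separated from $NV$ by visibility regions. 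Because $x$ is the last point in $\vis(w)$, the path must next enter $\vis(v)$. Thus $NV$ is attached to both $\vis(w)$ and $\vis(v)$.

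The last step uses \Cref{UniquenessComponentBetweenTwo}: $NV$ is the unique component attached to both regions, so it is the component defining $e_{w,v}$. By \Cref{uniquenessofsecondary}, a component is attached to $\vis(w)$ along exactly one window, so $e=e_{w,v}$. Hence $x\in\gamma\cap e_{w,v}$.

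The main obstacle is the topological bookkeeping in the second step. I must rigorously show that a path leaving $\vis(w)$ and not re-entering it passes through a single component $NV$ directly into $\vis(v)$, and in particular that distinct components of $\nv(W)$ are never adjacent except through visibility regions. I would handle this by using closedness of the visibility regions and the fact that $\nv(W)$ components are maximal connected sets: two components touching would make their union connected.
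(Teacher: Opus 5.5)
Your proof is correct and takes the same route as the paper. Like the paper, you find the window $e$ of $\vis(w)$ through which $\gamma$ exits into the component of $\nv(W)$ lying between $\vis(w)$ and $\vis(v)$, then use \cref{UniquenessComponentBetweenTwo} (which the paper uses implicitly through the notation $e_{w,v}$) and \cref{uniquenessofsecondary} to get $e=e_{w,v}$; taking the last parameter at which $\gamma$ lies in $\vis(w)$, together with closedness of the visibility regions, simply makes rigorous the paper's one-line claim that such an exit exists.
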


\ifthenelse{\boolean{shortver}}{}{
\begin{proof}
    Let $\gamma$ be a neighbor witness path between $w$ and $v$. Since $\gamma$ lies entirely in $\vis(w)\cup\nv(W)\cup\vis(v)$, there exists a window $e$ of $\vis(w)$ through which $\gamma$ leaves $\vis(w)$ and enters the component $NV$ of $\nv(W)$ between $\vis(w)$ and $\vis(v)$. Then $NV$ is attached to $e$, and hence $e=e_{w,v}$ using \cref{uniquenessofsecondary}. Thus, $\gamma$ intersects $e_{w,v}$. Similar argument applies to $e_{v,w}$.
\end{proof}}

\begin{lemma}\label{something}
    Let $v', w \in W$. If a point $x \in \vis(w)$ is visible from some $y \in \vis(v')$, then there exists $v \in \vil(w)$ such that $x \in \vis(e_{v,w})$.
\end{lemma}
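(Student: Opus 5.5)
The plan is to walk along the segment $\overline{xy}$ starting from $x$, find the first witness region other than $\vis(w)$ that it meets, and show that the entry point lies on the right window. I assume $v'\neq w$, which is clearly the intended reading. If $v'=w$, no other witness is involved and the statement degenerates.

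Parametrize $\overline{xy}$ from $x$ to $y$. The set $\overline{xy}\cap \vis(W\setminus\{w\})$ is closed, because each $\vis(u)$ is a closed region and $W$ is finite. It is also non-empty, since it contains $y$. So it has a first point $z$ along the segment, and $z\in\vis(v)$ for some $v\in W\setminus\{w\}$. By the choice of $z$, the half-open subsegment $[x,z)$ avoids every $\vis(u)$ with $u\neq w$. Hence $[x,z)\subseteq \vis(w)\cup\nv(W)$.

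\textbf{Step 1: $v\in\vil(w)$.} Consider the path $\overline{wx}\cup\overline{xz}\cup\overline{zv}$. The pieces $\overline{wx}\subseteq\vis(w)$ and $\overline{zv}\subseteq\vis(v)$ hold because $w$ sees $x$ and $v$ sees $z$. Together with the previous paragraph, this path lies in $\vis(w)\cup\nv(W)\cup\vis(v)$. It is therefore a neighbor witness path, and $v\in\vil(w)$.

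\textbf{Step 2: $z$ lies on the window $e_{v,w}$.} Points of $[x,z)$ arbitrarily close to $z$ are not in $\vis(v)$, so $z\in\bd(\vis(v))$. These points cannot be in $\vis(w)$ near $z$. Otherwise $z$ would lie in the closed set $\vis(w)$, contradicting $\vis(w)\cap\vis(v)=\emptyset$. So a final open piece $(z',z)$ of the segment lies in $\nv(W)$. Choose $z'$ to be the last point of $[x,z)$ lying in $\vis(w)$; this point exists because $x\in\vis(w)$ and $\vis(w)$ is closed. Then $(z',z)$ lies in a single component $NV$ of $\nv(W)$, being a connected subset of $\nv(W)$. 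This component touches $\vis(w)$ at $z'$ and $\vis(v)$ at $z$, so it is attached to both. By \cref{UniquenessComponentBetweenTwo}, $NV$ is the component between $\vis(v)$ and $\vis(w)$. By the remark preceding \cref{uniquenessofsecondary}, its attachment to $\vis(v)$ occurs along a window. By \cref{uniquenessofsecondary}, that window is exactly $e_{v,w}$. Since $z$ is the point where the segment passes from $NV$ into $\vis(v)$, we get $z\in e_{v,w}$. Finally, $\overline{xz}\subseteq\overline{xy}\subseteq\pol$, so $x$ sees $z\in e_{v,w}$, that is, $x\in\vis(e_{v,w})$.

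\textbf{Main obstacle.} I expect Step 2 to be the delicate part. The difficulty is justifying that the entry point $z$ lies on a window and not on a wall, and that it lies on the window belonging to the component shared with $\vis(w)$. The segment might wander in and out of $\vis(w)$, or graze $\bd(\pol)$ at reflex vertices. The choice of $z'$ as the last point in $\vis(w)$ handles the first issue. For the second, I would argue that a neighborhood of $z$ in $\overline{xy}$ on the $x$-side lies in $\nv(W)$ and not in $\bd(\pol)$ alone, so that the boundary of $\vis(v)$ crossed at $z$ separates $\vis(v)$ from a non-visible component. If $z$ happens to be an endpoint of the window, i.e.\ a polygon vertex, it still belongs to the closed segment $e_{v,w}$, so the conclusion is unaffected.
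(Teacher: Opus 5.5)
Your argument is correct and shares its skeleton with the paper's Case~II. There, one reaches a neighbor $v$ of $w$ along $\overline{xy}$, forms the path $\overline{wx}\cup\overline{xz}\cup\overline{zv}$, and locates a point of $e_{v,w}$ on $\overline{xz}$. Your version differs from the paper's in two useful ways.

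First, the paper splits on whether $v'\in\vil(w)$. In each case it simply asserts that the relevant path is a neighbor witness path: $\overline{wx}\cup\overline{xy}\cup\overline{yv'}$ in Case~I, and $\overline{wx}\cup\overline{xz}\cup\overline{zv}$ for an arbitrary $z\in\vis(v)\cap\overline{xy}$ in Case~II. It never rules out that the segment crosses the visibility region of some third witness. Your choice of $z$ as the \emph{first} point of $\overline{xy}$ in $\vis(W\setminus\{w\})$ is exactly what guarantees this, and it removes the need for the case split altogether.

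Second, instead of citing \cref{intersectper}, you place $z$ on $e_{v,w}$ directly, using the last exit point $z'$ together with \cref{UniquenessComponentBetweenTwo} and \cref{uniquenessofsecondary}. This amounts to reproving \cref{intersectper} locally. Once your path is known to be a neighbor witness path, you could equally cite that lemma as the paper does.

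The step you leave as a sketch closes as follows. Near a point of $\bd(\vis(v))$ that lies only on walls, $\vis(v)$ agrees with $\pol$ locally. So $z$, being a limit of points of $NV$, lies on some window of $\vis(v)$. Other visibility regions stay at positive distance from $z$, so the component across that window near $z$ is $NV$. By \cref{uniquenessofsecondary}, that window is $e_{v,w}$.

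Your restriction to $v'\neq w$ is also the right reading, and it matches how the lemma is used in \cref{proofofcontainment}.
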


\ifthenelse{\boolean{shortver}}{}{\begin{proof}
We have two following cases.
\begin{description}
    \item[Case I : $v'\in \vil(w)$.]   In this case, $\overline{wx} \cup \overline{xy} \cup \overline{yv'}$ defines a neighbor witness path between $w$ and $v'$. By \cref{intersectper}, it intersects $e_{v',w}$. Let $z'$ be the intersection point of $\overline{xy}$ and $e_{v',w}$. Hence, $x$ is visible from $z' \in e_{v',w}$. See \cref{impobsv}.
    
\begin{figure}[ht!]
    \centering
    \includegraphics[width=1\linewidth]{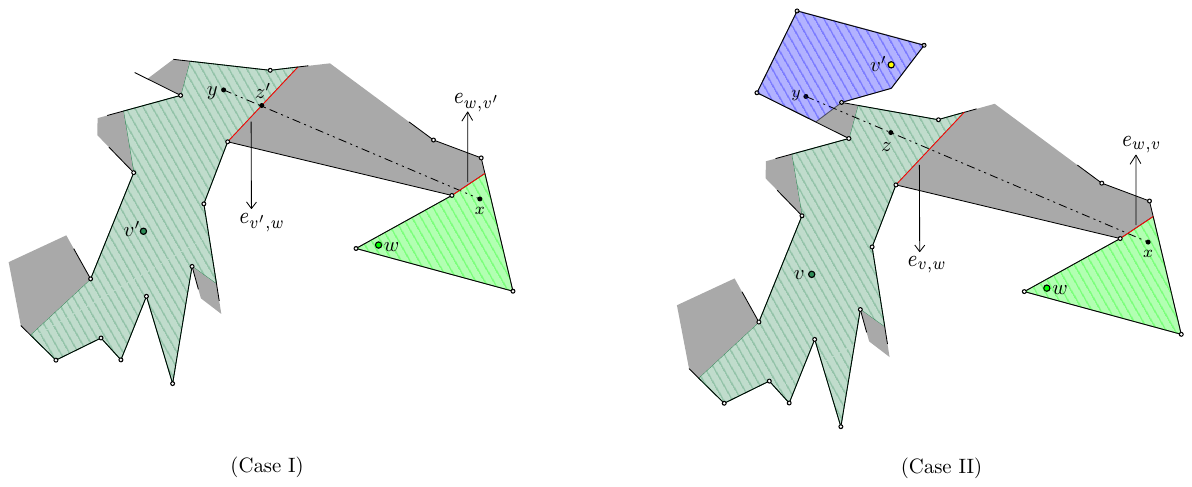}
    \caption{Proof of \cref{something}.}
    \label{impobsv}
\end{figure}

\item[Case II : $v'\notin \vil(w)$] If $\,\overline{xy}\,$ intersects no $\vis(v)$ for $v\in \vil(w)$, then the path $\gamma=\overline{wx}\cup \overline{xy}\cup \overline{yv'}$ lies entirely in $\vis(w)\cup \nv(W)\cup \vis(v')$, forming a neighbor witness path between $w$ and $v'$, contradicting $v'\notin \vil(w)$. Hence, there exists $v \in \vil(w)$. Let $z\in \vis(v)\cap \overline{xy}$. The path $\overline{wx}\cup \overline{xz}\cup \overline{zv}$ is a neighbor witness path, so it passes through the window $e_{v,w}$ using \cref{intersectper}. Thus, $\overline{xz}$ intersects $e_{v,w}$, which implies $x$ is visible from $e_v$. See \cref{impobsv}.   
\end{description}
\end{proof}
}

\begin{lemma}\label{WitnessToEachOther}
    Let $v, w $ be two points in the polygon $\pol$. If $v $ and $w$ are not visible to each other, and $w$ is not visible from any window of $\vis(v)$, iff $\vis(v) \cap \vis(w) = \emptyset$.
\end{lemma}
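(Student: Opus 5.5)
We must prove that $\vis(v)\cap\vis(w)=\emptyset$ if and only if $v,w$ are not mutually visible and $w$ sees no point of any window of $\vis(v)$. We prove the two directions separately, and the reverse direction carries the content.

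\textbf{Forward direction.} Assume $\vis(v)\cap\vis(w)=\emptyset$.
\begin{itemize}
\item If $v$ and $w$ were mutually visible, then $w\in\vis(v)\cap\vis(w)$, a contradiction.
\item Suppose $w$ sees some point $x$ of a window $e$ of $\vis(v)$. Since $e\subseteq\bd(\vis(v))$ and $\vis(v)$ is closed, we have $x\in\vis(v)$. Also $x\in\vis(w)$, which is again a contradiction.
\end{itemize}
So both conditions hold. This direction is immediate.

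\textbf{Reverse direction, setup.} Assume the two conditions hold, and suppose for contradiction that some $x\in\vis(v)\cap\vis(w)$ exists. Then $w\notin\vis(v)$ by the first condition. Consider the segment $\overline{wx}\subseteq\pol$, which runs from $w\notin\vis(v)$ to $x\in\vis(v)$. Let $z$ be the first point of $\overline{wx}$ (starting from $w$) that lies in the closed set $\vis(v)$. Then $z\in\bd(\vis(v))$, and $z$ is visible from $w$ because $\overline{wz}\subseteq\overline{wx}\subseteq\pol$.

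\textbf{Reverse direction, locating $z$ on a window.} The boundary of $\vis(v)$ consists of walls, which lie in $\bd(\pol)$, and windows. We have already assumed, via the preceding lemma, that $\vis(v)$ is a simple polygon with no arms. If $z$ lies on a window, we contradict the second condition and are done. So the case to rule out is that $z$ lies only on a wall, i.e.\ $z\in\bd(\pol)$ and $z$ is not on any window.

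\textbf{Reverse direction, the wall case (main obstacle).} Near a wall point that is not on a window, the visibility region $\vis(v)$ locally coincides with $\pol$ on the interior side. The open segment $\overline{wz}\setminus\{z\}$ lies in $\pol$, approaches $z$ from inside $\pol$, and stays outside $\vis(v)$. We must show this is impossible.
\begin{itemize}
\item If $z$ lies in the relative interior of a wall, then a small half-disk of $\pol$ around $z$ is contained in $\vis(v)$. Hence points of $\overline{wz}$ just before $z$ lie in $\vis(v)$, contradicting the choice of $z$ as the first such point.
\item If $z$ is a vertex of $\bd(\vis(v))$, it is either a polygon vertex where two walls meet, or an endpoint of a window. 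An endpoint of a window lies on that window, which is already excluded. At a polygon vertex between two walls, the cone of $\pol$ near $z$ lies in $\vis(v)$ because $v$ sees both incident wall pieces. A short argument shows that the triangles $v$-to-wall pieces cover a neighborhood of $z$ in $\pol$, so again a contradiction follows.
\item Degenerate cases, where $\overline{wz}$ grazes a reflex vertex collinear with $v$, put $z$ at a window endpoint and are covered by the window case.
\end{itemize}

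I expect this local analysis at $z$ to be the hardest step: one must justify that every boundary point of $\vis(v)$ that $\overline{wz}$ can reach from outside $\vis(v)$, while staying inside $\pol$, lies on a window. Once it is done, $z$ lies on a window of $\vis(v)$ and is visible from $w$, contradicting the second condition. Hence $\vis(v)\cap\vis(w)=\emptyset$.
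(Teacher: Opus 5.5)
Your proof is correct and follows the paper's route: disjointness implies the two conditions immediately, and for the converse you take $x\in\vis(v)\cap\vis(w)$ and show that $\overline{wx}$ meets a window of $\vis(v)$ at a point visible from $w$ --- the paper simply asserts this crossing, while you justify it by a local analysis at the first entry point $z$. One small caveat: the paper's no-arms convention covers only witness points, whereas here $v$ is an arbitrary point of $\pol$, but this costs nothing, since an arm of $\vis(v)$ is made up of windows and an entry point on an arm still lies on a window.
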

\ifthenelse{\boolean{shortver}}{}{\begin{proof}
   Assume for contradiction, that there exists a point $x\in \pol$ such that it is in $\vis(v)\cap\vis(w)$. As $x\in \vis(v)$ and the line segment $\overline{xw}$ completely lies inside $\pol$, $\overline{xw}$ must intersect some {window}, $e$ of $\vis(v)$ at a point $y$ (say). So $w$ is visible from $y\in e$ which is a contradiction to the assumption.

   For the reverse direction, since $\vis(v) \cap \vis(w) = \emptyset$, it follows immediately that $v$ and $w$ are not visible to each other. Furthermore, if $w$ were visible from a point $x$ lying on a window of $\vis(v)$, then $x\in\vis(v)\cap\vis(w)$, which contradicts our assumption.
\end{proof}
}

\begin{corollary}\label{proofofcontainment}
    Let $w\in W$ be a witness point and $w'\in \vis(w)\setminus\bigcup_{v\in \vil(w)}\vis(e_{v,w})$. Then $(W\setminus\{w\})\cup\{w'\}$ is a witenss set of size $|W|$.
\end{corollary}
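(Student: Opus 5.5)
We must show that for every $u\in W\setminus\{w\}$ we have $\vis(w')\cap\vis(u)=\emptyset$; pairs not involving $w'$ are unaffected. The natural tool is \cref{something}, which describes every point of $\vis(w)$ that can see into another witness's region. The defining condition on $w'$ says precisely that $w'$ avoids all of these points.

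The argument is by contradiction. Suppose some $x\in\vis(w')\cap\vis(u)$ exists for a $u\in W\setminus\{w\}$. Then $w'\in\vis(w)$ is visible from $x\in\vis(u)$, with $u\neq w$. Applying \cref{something} with the roles $x\mapsto w'$, $y\mapsto x$ and $v'\mapsto u$ gives some $v\in\vil(w)$ with $w'\in\vis(e_{v,w})$. This contradicts the choice of $w'$, so $\vis(w')\cap\vis(u)=\emptyset$ for every such $u$.

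The set size is preserved because $w'\neq u$ for all $u\in W\setminus\{w\}$. Indeed, $w'\in\vis(w)$, and $\vis(w)$ is disjoint from $\vis(u)\ni u$.

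The only step needing care is the degenerate case where $w'$ coincides with or sees $u$ directly. \cref{something} covers this: take $y=u$ itself, which lies in $\vis(u)$, so no separate treatment is needed. One should also check that \cref{something} was proved for arbitrary $x\in\vis(w)$ and $y\in\vis(v')$, which it was. That is the whole proof.
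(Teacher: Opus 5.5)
Your proof is correct and uses the same key ingredient as the paper: \cref{something}, applied to the original witness set, contradicts the defining property of $w'$. The only difference is that the paper first invokes \cref{WitnessToEachOther} to reduce to $w'$ seeing either $v'$ itself or a point on a window of $\vis(v')$, whereas you apply \cref{something} directly to an arbitrary common point $x\in\vis(w')\cap\vis(u)$, which is slightly more direct and makes that intermediate lemma unnecessary.
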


\ifthenelse{\boolean{shortver}}{}{
\begin{proof}
    Let $v'\in W\setminus \{w\}$. If $v'$ and $w'$ are visible to each other then $w'\in \vis(v')\cap\vis(w)$ which contradicts the fact that $v',w\in W$. Now if $w'$ is visible from some window $e$ of $\vis(v')$ then there exists a point $y\in e~(\subset \vis(v'))$ such that $w'$ is visible from $y$. Using \cref{something}, there exists $v\in \vil(w)$ such that $x\in \vis(e_{v,w})$ which is a contradiction to our assumption. So, for each $v'\in W\setminus\{w\}$, $\vis(v')\cap \vis(w')=\emptyset$ using \cref{WitnessToEachOther}. Hence $(W\setminus\{w\})\cup\{w'\}$ is a witenss set of size $|W|$.
\end{proof}}

\begin{definition}[Restricted visibility]
{\em For a point $p$ and a segment $L\subseteq \pol$, we define $\vis(L;p)\coloneq\vis(L)\cap \vis(p).$  For an illustration, see \cref{restvis}.}
\end{definition}

\begin{figure}[ht!]
    \centering
    \includegraphics[width=0.3\linewidth]{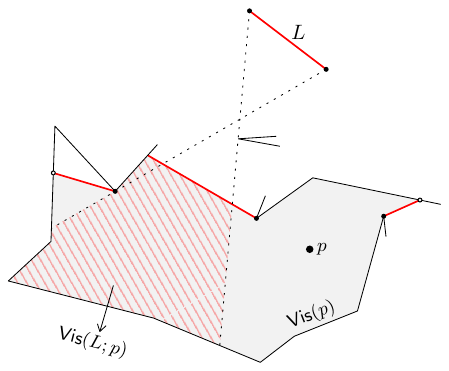}
    \caption{The gray region captures $\vis(w)$ and the red region is $\vis(L;p)$.}
    \label{restvis}
\end{figure}
Let $w\in W$ and $e$ be a window of $\vis(w)$ with base $d$, where $d'$ is the intersection of the extension of $e$ with $\partial\vis(w)$. Let the extension of $e$ divide $\vis(w)$ into two simple polygons $V_1^w(e)$ and $V_2^w(e)$. One of them contains $e$ on its boundary. Without loss of generality, let $e\subset \bd(V_1^w(e))$. See \cref{pokpok}. Under this setup, we have the following lemma :
\begin{lemma}\label{onlyoneside}
Let $v_1,\ldots,v_m\in \vil(w)$ be such that $e_{w,v_i}=e$ for each $i\in [m]$. Then $\vis(e_{v_i,w};w)\subseteq V_1^w(e)$ for all $i \in [m]$.
Moreover,
$
\overline{dd'}\cap \bigcup_{i=1}^m \vis(e_{v_i,w};w)\in \{\emptyset,\{d\}\}$.
\end{lemma}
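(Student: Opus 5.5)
We argue both claims by tracing a segment from a point of $\vis(e_{v_i,w};w)$ back to the window $e_{v_i,w}$ and showing that it must cross $e$. Fix $i$ and a point $x\in\vis(e_{v_i,w};w)$, so $x\in\vis(w)$ and $x$ sees some $y\in e_{v_i,w}\subseteq\vis(v_i)$.

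\textbf{Step 1: the segment $\overline{xy}$ crosses $e$.} The path $\overline{wx}\cup\overline{xy}\cup\overline{yv_i}$ is a neighbor witness path between $w$ and $v_i$. This holds because $\overline{xy}$ cannot meet $\vis(u)$ for any third witness $u$: otherwise $x$ and $y$ would both be visible from a point of $\vis(u)$, contradicting disjointness of the visibility regions. The subtle point is that $\overline{xy}$ might pass through $\vis(w)$ or $\vis(v_i)$ and then into $\nv(W)$, which is allowed in any case. By \cref{intersectper} this path crosses $e_{w,v_i}=e$. Since $\overline{wx}\subseteq\vis(w)$ and $\overline{yv_i}\subseteq\vis(v_i)$, the crossing with $e$ occurs on $\overline{xy}$, except possibly at the endpoint of $e$ lying on $\bd(\pol)$. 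So $\overline{xy}$ leaves $\vis(w)$ through $e$ at some point $z\in e$, and $\overline{xz}\subseteq\vis(w)$ is a segment ending on $e$.

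\textbf{Step 2: $x\in V_1^w(e)$.} The segment $\overline{xz}$ lies in $\vis(w)$ and ends on $e$. Suppose $x\in V_2^w(e)\setminus\overline{dd'}$. Then $\overline{xz}$ must cross the chord $\overline{dd'}$, which lies on the line through $e$. A segment meets a line in at most one point unless it lies on that line, and here $z$ is already on the line. Hence $\overline{xz}$ would lie on the line itself, forcing $x\in\overline{dd'}\cup e\subseteq V_1^w(e)$, a contradiction. Therefore $\vis(e_{v_i,w};w)\subseteq V_1^w(e)$, taking $\overline{dd'}$ as part of both pieces.

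\textbf{Step 3: the intersection with $\overline{dd'}$.} Suppose now that $x\in\overline{dd'}$ with $x\ne d$. Then $x$, $z$ and the line through $e$ are collinear, so $y$ lies on the extension of $e$ beyond $z$, away from $w$, or on $e$ itself. In the non-visible component $NV$ attached along $e$, $y$ would lie on the line of $e$, and so would part of $e_{v_i,w}$. The supporting line of the window $e_{v_i,w}$ passes through $v_i$, and the line of $e$ passes through $w$. If these two lines coincided, $w$ and $v_i$ would see a common point: the reflex vertex $d$ is visible from both along this line, contradicting $W$ being a witness set.

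The remaining possibility is that $y$ is an isolated point of $e_{v_i,w}$ on the line of $e$ beyond $e$. In that case $\overline{xy}$ runs along $e$ past its end, which lies on $\bd(\pol)$, so $\overline{xy}$ cannot continue inside $\pol$ into the open side. Hence $x=d$ is the only possible common point, and the intersection is $\emptyset$ or $\{d\}$.

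The main obstacle is this degenerate collinear case in Step 3. It requires careful use of the general-position-type assumption that the regions $\vis(\cdot)$ have no arms, so that a line through $w$ grazing the reflex vertex $d$ cannot also support a window of $\vis(v_i)$ without creating a common visible point.
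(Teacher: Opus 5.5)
Your route is to locate the point $z$ where $\overline{xy}$ crosses $e$, and then use that $z$ lies on the line $\ell$ containing $e$ and $\overline{dd'}$. This differs from the paper's proof, which argues directly from the position of the edges of $\pol$ at $d$. The route can be completed, but two steps fail as written.

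In Step~1, your reason for $\overline{xy}\cap\vis(u)=\emptyset$ is invalid. A point $q\in\overline{xy}\cap\vis(u)$ does see both $x$ and $y$, but that contradicts nothing, because visibility is not transitive. The general principle is in fact false: Case~II in the proof of Lemma~\ref{something} shows that a segment from $\vis(w)$ to $\vis(v')$ with $v'\notin\vil(w)$ necessarily meets a third visibility region. What you need here is a separation argument specific to this situation:
\begin{itemize}
\item $e$ is a chord of $\pol$;
\item $y\in e_{v_i,w}$ lies in the closure of the component of $\nv(W)$ attached along $e$, hence in the piece of $\pol$ that $e$ cuts off from $w$;
\item $y\notin e$, because $e\subseteq\vis(w)$.
\end{itemize}
So $\overline{xy}$ meets $e$ in exactly one point $z$. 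Moreover $\overline{xz}\subseteq\vis(w)$, because a segment cannot cross another window of $\vis(w)$ into its pocket and come back. This yields $z$ without \cref{intersectper}, and it also justifies your unproved assertion $\overline{xz}\subseteq\vis(w)$.

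In Step~2 you overlook the case $z=d$. There $\overline{xz}$ meets $\overline{dd'}$ only at $z$ itself, so the ``two points on a line'' argument gives no contradiction. This is exactly the case where the reflex vertex matters, and it is the content of the paper's one-line proof. Both edges of $\pol$ at $d$ lie on the $V_2^w(e)$ side of $\ell$. Hence, near $d$, the set $\pol\setminus\vis(w)$ is the wedge on that side between $e$ and the hidden edge at $d$. A segment whose part beyond $d$ lies in that wedge has its part before $d$ in the opposite closed half-plane. That half-plane meets $\vis(w)$ exactly in $V_1^w(e)$, since $\vis(w)$ is star-shaped about $w\in\ell$, so $x\in V_1^w(e)$. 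Without this observation the first claim is not proved.

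Step~3 is fine once $y\in\ell$ is established. Since $y\notin e$, the segment would have to run along $\ell$ past the end of $e$, where $\ell$ leaves $\pol$. That single argument covers every subcase, so your detour about the supporting line of $e_{v_i,w}$ coinciding with $\ell$ is unnecessary. Its claim that $v_i$ then sees $d$ is also unjustified, since $v_i$ and $d$ may lie in different components of $\ell\cap\pol$.

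This collinearity argument is actually sounder than the paper's own argument for the second claim. The paper concludes that a point of $\overline{dd'}\setminus\{d\}$ seen from $e_{v_i,w}$ lies in $\vis(v_i)$, which does not follow.
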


\begin{figure}[ht!]
    \centering
    \includegraphics[width=0.5\linewidth]{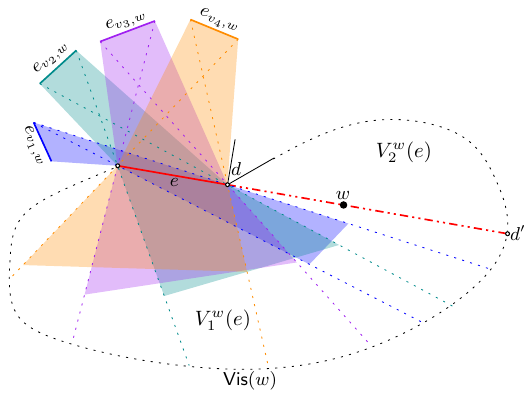}
    \caption{For each $i$, $\vis(e_i;w)\subset V_1$. }
    \label{pokpok}
\end{figure}
    
\ifthenelse{\boolean{shortver}}{}{
\begin{proof}
Let a point $y\in \vis(w)$ is visible from some point $x\in e_{v_i,w}$. Then $y\in V_1^w(e)$, otherwise, if $y\in V_2^w(e)\setminus \overline{dd'}$ then $\overline{xy}$ cannot remain in $\pol$ because of the fact that $V_2^w(e)$ and the edges incident to $d$ lie on the same side of the extension of $e$. Thus, $\vis(e_{v_i,w})\subseteq V_1^w(e)$.

If $z\in \overline{dd'}\setminus\{d\}$ were visible from some $x\in e_i$, 
then $w$ would be visible from $z$, since $z$ lies on a boundary ray from $w$.
But then $z\in \vis(w)\cap \vis(v_i)$, contradicting $w,v_i\in W$.
Hence, the intersection condition holds.
\end{proof}}

\begin{definition}[Simplicial Vertex]
    {\em In a graph $G$, a vertex $v$ is said to be a \emph{simplicial vertex} if the subgraph of $G$ induced by the neighbors of $v$ is a complete graph.}
\end{definition}
\begin{lemma}\label{onlyoneperfectwindowinvolvedforsimplicialvertex}
Let $w\in W$ be a simplicial vertex of $\gc(W)$.
Then there exists a window $e$ of $\vis(w)$ such that $e_{w,v}=e$ for every $v\in \vil(w)$.
\end{lemma}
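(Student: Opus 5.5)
We argue by contradiction, reusing the mechanism of \cref{uniquecomponentinsidecycle}. If $\vil(w)=\emptyset$ the statement is vacuous, and if $|\vil(w)|=1$ there is only one window $e_{w,v}$, so nothing needs to be shown. Assume therefore that $|\vil(w)|\ge 2$, and suppose there are $v_1,v_2\in\vil(w)$ with $e_1:=e_{w,v_1}\neq e_2:=e_{w,v_2}$. Let $NV_1$ be the component of $\nv(W)$ between $\vis(w)$ and $\vis(v_1)$, given by \cref{UniquenessComponentBetweenTwo}, and define $NV_2$ analogously for $v_2$. By \cref{uniquenessofsecondary}, $NV_i$ is attached to $\vis(w)$ exactly along $e_i$, and $e_1\neq e_2$ gives $NV_1\neq NV_2$.

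Since $w$ is simplicial, $v_1v_2$ is an edge of $\gc(W)$. Hence $(w,v_1,v_2)$ is a cycle of length $3$ in $\gc(W)$. Applying \cref{uniquecomponentinsidecycle} to this cycle yields a single component $NV_C$ of $\nv(W)$ attached to each of $\vis(w)$, $\vis(v_1)$ and $\vis(v_2)$. We next show $NV_C=NV_1$. Both components are attached to $\vis(w)$ and to $\vis(v_1)$, so \cref{UniquenessComponentBetweenTwo} applies and forces them to coincide. The same argument with $v_2$ in place of $v_1$ gives $NV_C=NV_2$. Therefore $NV_1=NV_2$, which contradicts $NV_1\neq NV_2$.

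This shows that all windows $e_{w,v}$ with $v\in\vil(w)$ are equal to one common window $e$, which proves the lemma. The step needing the most care is the identification $NV_C=NV_i$. \Cref{UniquenessComponentBetweenTwo} states uniqueness of the component attached to both visibility regions of a neighboring pair. We must confirm that $NV_C$ really qualifies as such a component, meaning it is attached to both $\vis(w)$ and $\vis(v_1)$ in the sense of \cref{nearpoly}. This follows directly from the conclusion of \cref{uniquecomponentinsidecycle}.

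If one prefers not to rely on this reading, there is a fallback argument that avoids the cycle lemma. Take a neighbor witness path $\gamma_1$ from $w$ to $v_1$ through $e_1$, a path $\gamma_{12}$ from $v_1$ to $v_2$, and a path $\gamma_2$ from $v_2$ back to $w$ through $e_2$. By \cref{noconsecutivesecondary}, removing $e_1$ and $e_2$ splits $\bd(\vis(w))$ into two chains, each containing a wall. The closed curve formed by $\gamma_1$, $\gamma_{12}$ and $\gamma_2$ must enclose one of these chains in its interior, which contradicts \cref{finiteintersectionwithbdry}.
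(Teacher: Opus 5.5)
Your proof is correct and follows the paper's route. The paper also uses that $\{w\}\cup\vil(w)$ is a clique, applies \cref{uniquecomponentinsidecycle} to obtain one shared non-visible component, and concludes a single attachment window via \cref{uniquenessofsecondary}; you additionally spell out, through the triangles $(w,v_1,v_2)$ and \cref{UniquenessComponentBetweenTwo}, why the cycle's component is the component between $w$ and each $v$, which the paper leaves implicit.
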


{\begin{proof}
Since $w$ is simplicial, the induced subgraph on $\{w\}\cup \vil(w)$ is a clique.
Thus, any two vertices of this subgraph lie in a common cycle.
By \cref{uniquecomponentinsidecycle}, 
all such vertices share a single non-visible component, hence a single attachment window using \cref{uniquenessofsecondary}.
\end{proof}

\begin{lemma}\label{ReplacingSimplicialVertexByAVertex}
Let $w\in W$ be a simplicial vertex in $\gc(W)$.
Then there exists a polygon vertex $w^*\in \vis(w)$ such that
$
W' := (W\setminus\{w\})\cup \{w^*\}
$
is a witness set of $\pol$.
\end{lemma}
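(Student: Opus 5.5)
By \cref{proofofcontainment}, it suffices to find a polygon vertex $w^*\in \vis(w)$ with $w^*\notin \bigcup_{v\in \vil(w)}\vis(e_{v,w})$. Since $w\in\vis(w)$, $w^*$ visible from $w$ means $w^*\in\vis(w)$, so we may restrict to $\vis(e_{v,w};w)$. The plan is to exploit simpliciality to confine all the "forbidden" regions to one side of a single window, and then to exhibit a polygon vertex on the other side.

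\textbf{Step 1: a single window carries all neighbors.} If $\vil(w)=\emptyset$, the condition of \cref{proofofcontainment} is vacuous, and any polygon vertex of $\vis(w)$ works. Such a vertex exists because $\vis(w)$ always has a wall, and hence contains vertices of $\pol$ (for instance, the endpoints of the edge of $\pol$ that the wall lies on that are visible to $w$). Otherwise, \cref{onlyoneperfectwindowinvolvedforsimplicialvertex} gives a window $e$ of $\vis(w)$, with base $d$, such that $e_{w,v}=e$ for every $v\in\vil(w)$.

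\textbf{Step 2: confine the forbidden set.} By \cref{onlyoneside}, every forbidden region $\vis(e_{v,w};w)$ lies in $V_1^w(e)$, and the union meets the chord $\overline{dd'}$ at most in $\{d\}$. So every point of $V_2^w(e)\setminus\{d\}$, and every point of $\overline{dd'}\setminus\{d\}$, is an admissible replacement.

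\textbf{Step 3: find a polygon vertex in $V_2^w(e)\setminus\{d\}$.} $V_2^w(e)$ is a simple polygon bounded by the chord $\overline{dd'}$ and a portion of $\bd(\vis(w))$. I will argue that this boundary portion contains a wall of $\vis(w)$, since by \cref{noconsecutivesecondary} windows are not consecutive and alternate with walls. A wall is a piece of an edge of $\pol$, and its endpoints are either vertices of $\pol$ or bases/ends of windows. I then walk along $\bd(V_2^w(e))$ from $d'$ back to $d$. Either some vertex of $\pol$ appears strictly before reaching $d$, or else the boundary is a single edge piece from $d'$ to $d$, which would make $V_2$ degenerate.

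The more robust choice is a vertex of $\pol$ lying on the wall chain of $\vis(w)$ immediately adjacent to the chord endpoint $d'$. The point $d'$ lies on an edge of $\pol$ (it is where the extension of $e$ hits $\bd(\vis(w))$). Following that edge into $V_2$, we reach one of its endpoints, or we reach a window base, which is a reflex vertex of $\pol$. In either case we obtain a polygon vertex $w^*\in V_2^w(e)$, visible from $w$, and distinct from $d$ (which lies at the other end of the chord).

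\textbf{Main obstacle.} The hardest step is Step 3. The difficulty is guaranteeing that the polygon vertex found lies in $V_2^w(e)\setminus\{d\}$ and not only on its boundary at $d$, and handling degenerate cases. One such case is when $d'$ is itself a vertex, which is fine: take $w^*=d'$. Another is when the wall edge through $d'$ ends exactly at $d$, which is impossible since $d$ is on the other side of the chord. I would handle these by a case analysis on whether $d'$ is a vertex of $\pol$. Once Step 3 is settled, applying \cref{proofofcontainment} with $w'=w^*$ finishes the proof.
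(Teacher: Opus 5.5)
Your overall plan is the paper's. \cref{onlyoneperfectwindowinvolvedforsimplicialvertex} gives one window $e$ serving all neighbours, \cref{onlyoneside} makes every point of $V_2^w(e)\setminus\{d\}$ an admissible replacement, and \cref{proofofcontainment} concludes. So everything hinges on exhibiting a polygon vertex in $V_2^w(e)\setminus\{d\}$, and that is where your Step~3 fails as written.

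When you follow the edge $f$ of $\pol$ through $d'$ into the $V_2^w(e)$ side, the visible part of $f$ is a wall. That wall ends either at an endpoint of $f$, or at a point $c$ beyond which $f$ is hidden from $w$. In the second case $c$ is the \emph{end} of a window $e''$ of $\vis(w)$. It generically lies in the interior of $f$, so it is not a vertex of $\pol$, and the base of $e''$ is a different point. Your sentence ``or we reach a window base, which is a reflex vertex'' conflates the two endpoints of $e''$, so the step does not yet deliver a polygon vertex.

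The missing observation, which is the actual content of the paper's proof, is that the base $c''$ of $e''$ is a reflex vertex lying on the segment $\overline{wc}$, because every window is collinear with $w$. The line through $e$ passes through $w$, and $c$ lies strictly on the $V_2^w(e)$ side of it. Since $\vis(w)$ is star-shaped from $w$, the segment $\overline{wc}$ lies in $\vis(w)$ and, apart from $w$, strictly on that side. Hence $c''\in V_2^w(e)\setminus\overline{dd'}$, in particular $c''\neq d$, and $w^*=c''$ works.

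The paper applies this at the other end of the chord. It takes the neighbour $c$ of $d$ on $\bd(V_2^w(e))$, along the wall incident to $d$, and if $c$ is not a polygon vertex it replaces $c$ by the base of its window. Starting from $d'$ as you do is equally fine once this step is added.

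The same fact fixes two smaller slips:
\begin{itemize}
\item In Step~1, the endpoints of the edge carrying a wall need not be visible from $w$. However, the base of any window of $\vis(w)$ is a visible vertex of $\pol$, and if there is no window then $\vis(w)=\pol$.
\item Your first ``walk from $d'$ to $d$'' dichotomy is only valid because each window on that chain brings its base onto the chain.
\end{itemize}
Lastly, the edge through $d'$ cannot end at $d$ because it crosses the line through $e$ transversally at $d'$. It is not because $d$ is ``on the other side of the chord'': $d$ is an endpoint of the chord.
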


\begin{proof}
Let $e$ be the window of $\vis(w)$ such that $e_{w,v}=e$ for each $v\in \vil(w)$ guaranteed by \cref{onlyoneperfectwindowinvolvedforsimplicialvertex}. Let $d$ be the base of $e$ and $d'$ be the intersection point of the extension of $e$ with $\partial\vis(w)$. Using \cref{onlyoneside}, $\vis(e_{v,w};w)\subseteq V_1^w(e)$ for each $v\in \vil(w)$ and hence $(V_2^w(e)\setminus\{d\})\subseteq \vis(w)\setminus\bigcup_{v\in \vil(w)}\vis(e_{v,w})$. Now we show that $V_2^w(e)\setminus\{d\}$ contains a polygon vertex. 
\begin{figure}[ht!]
        \centering
        \includegraphics[width=0.5\linewidth]{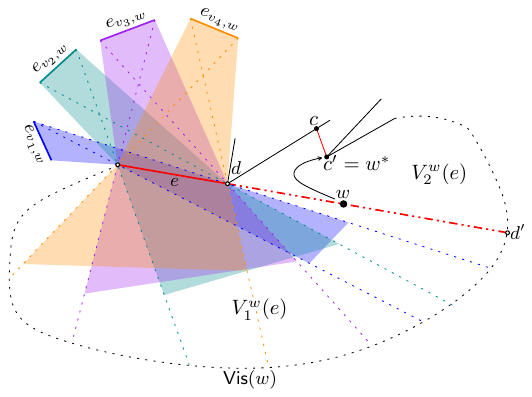}
        \caption{Proof of \cref{ReplacingSimplicialVertexByAVertex}.}
        \label{simplicialtovertex}
\end{figure}
Let $c$ be the vertex of $V_2^w(e)$ adjacent to $d$ (distinct from $d'$). If $c$ is a polygon vertex, choose $w^*=c$. Otherwise $c$ is an endpoint (not the base, since $c$ is not a polygon vertex) of a window $e'$ of $\vis(w)$. Let $c'$ be the base of $e'$. Since two windows cannot be consecutive, $c'\neq d$. Now $c$ and $e'$ lies on the same side of the extended $e$, since $e'$ does not cross the extended $e$. So $c'\in V_2^w(e)\setminus \{d\}$. We take $w^*=c'$. Thus, using \cref{proofofcontainment}, $W'$ is a witness set and $w^*$ is a polygon vertex since $c'$ is a reflex vertex. See \cref{simplicialtovertex}.
\end{proof}
\subsection{Placing Each Witness to an Algorithmically Generated Point}\label{sec : algo}
\subsubsection{The Discretization Algorithm}

The set $Q$ serves as a discretization of all possible witness locations.
The construction is entirely combinatorial and depends only on the input polygon
$\pol$ and given integer parameter $k$, the size of the witness set we plan to
replace. The following algorithm generates the set $Q$.

\IncMargin{1em}
\begin{algorithm}[ht!]\label{witgen}
\SetKwInOut{Input}{Input}\SetKwInOut{Output}{Output}

\Input{A simple polygon $\pol$ and an integer $k\in \mathbb{N}$.}
\Output{A finite point set $Q$.}

$E(\pol) \coloneqq$ set of all edges of $\pol$; \\
$R \coloneqq$ set of all reflex vertices of $\pol$; \\[2mm]

Initialization: $Q \gets V(\pol)$; $L \gets E(\pol)$; \;

\For{$i \gets 1$ \KwTo $2(k-1)$}{
    $L' \gets$ set of all maximal chords joining $v\in Q$ to $r\in R$ if $r$ is visible from $v$;\\
    $Q' \gets \{\, l'\cap l \mid l' \in L',~ l \in L,~ l'\cap l \neq \emptyset \,\}$;\\
    $L \gets L \cup L'$;\\
    $Q \gets Q \cup Q'$;\\
}

Choose any visible pair $v_1,v_2\in Q$;\\
$M \gets \left\{\, \frac{v_1+v_2}{2} \,\right\}$; \\
$Q \gets Q \cup M$;\\

\For{$i \gets 1$ \KwTo $2(k-2)$}{
    $L'' \gets$ set of all maximal chords joining $v\in Q$ to $r\in R$ if $r$ is visible from $v$;\\
    $Q'' \gets \{\, l''\cap l \mid l'' \in L'',~ l\in L,~ l''\cap l \neq \emptyset \,\}$;\\
    $L \gets L \cup L''$;\\
    $Q \gets Q \cup Q''$;\\

    Choose any visible pair $v_1,v_2\in Q$;\\
    $M \gets \left\{\, \frac{v_1+v_2}{2} \,\right\}$;\\
    $Q \gets Q \cup M$;\\
}

\Return{$Q$};

\caption{\texttt{WitGen}$(\pol,k)$: Generation of the discretizing set $Q$.}
\label{algo_1}
\end{algorithm}
\DecMargin{1em}

\subparagraph{Brief Description of the Algorithm :}

Here we provide a short description of \Cref{algo_1}. Initially, we set $Q = V(\pol)$ and $L = E(\pol)$. Iteratively, we construct chords from every point in the current set $Q$ (as updated in the previous iteration) by joining it with the reflex vertices visible from that point and extending the resulting segment until it meets the boundary of $\pol$. We refer to this procedure of generating chords from points in $Q$ as forming the \emph{arrangement}. In each iteration of the first \textbf{for} loop ($2(k-1)$ times), we draw an arrangement and then update $Q$ by adding the intersection points between the newly created chords and the existing line segments of $L$. These points serve as candidates for replacing the first $k-1$ witness points (by clone points, see \cref{def_clone}) among the $k$ witness points. A clone of a point refers to another point that lies infinitesimally close to the original one. Observe that the lines produced by the arrangement form a superset of the boundaries of the visibility regions of the points in $Q$ from the previous iteration. Next, between the two \textbf{for} loops, we compute the midpoints of all pairs of mutually visible points and update $Q$ by inserting these midpoints. This step replaces the last witness point with an algorithmically generated point (rather than a clone). In each iteration of the second \textbf{for} loop ($2(k-2)$ times), we construct arrangements, update $Q$, compute the midpoints of all mutually visible pairs of points in $Q$, and insert these midpoints into $Q$. This procedure adjusts the previous positions of the witness points to appropriate points in $Q$ (that is, from clones to mid-points).

The following observation is immediate from the construction of $Q$:
\begin{observation}\label{inq}
The following  holds for $Q$.
 Let $v_1, v_2 \in Q$, where $Q$ is obtained from an intermediate iteration. Then, in the next iteration, $Q$ includes the vertices of $\vis(v_i)$ for each $i$, as well as the intersection points of the edges of $\vis(v_1)$ and $\vis(v_2)$. In addition, in every iteration of the second for loop,  we add midpoints of all mutually visible pairs of points to $Q$.

    \end{observation}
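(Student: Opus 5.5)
The plan is to unpack the two kinds of boundary pieces of a visibility region and match each one to an object the algorithm explicitly creates. Fix an intermediate iteration, and let $v_1,v_2\in Q$ at the end of it. By the standing assumption, each $\vis(v_i)$ is a simple polygon. Its boundary consists of walls and windows. A wall is a subsegment of an edge of $\pol$, so it lies on a segment of $E(\pol)\subseteq L$, which is present from initialization. A window $e=[a,b]$ of $\vis(v_i)$ is collinear with $v_i$, and its base $a$ is a reflex vertex $r\in R$ visible from $v_i$. The maximal chord through $v_i$ and $r$ is exactly one of the chords placed into $L'$ (respectively $L''$) in the next iteration, since $v_i\in Q$ and $r$ is visible from $v_i$. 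So, in the next iteration, every edge of $\vis(v_1)$ and of $\vis(v_2)$ is contained in some segment of the updated $L$.

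Next I would check the vertices of $\vis(v_i)$. Each vertex is one of three kinds. It may be a vertex of $\pol$, which lies in $Q$ from initialization. It may be the base of a window, which is a reflex vertex and hence again in $V(\pol)\subseteq Q$. Or it may be the end $b$ of a window. In that case $b$ is the point where the maximal chord through $v_i$ and $r$ meets an edge of $\pol$. That edge belongs to $E(\pol)\subseteq L$, so $b=l'\cap l$ with $l'\in L'$ and $l\in L$, and $b$ is added in $Q'$ (or $Q''$). This settles the first claim.

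For intersection points $x$ of an edge of $\vis(v_1)$ with an edge of $\vis(v_2)$ I would split into cases.
\begin{itemize}
\item Wall--wall: two walls lie on edges of $\pol$, which meet only at polygon vertices. Collinear overlaps have endpoints that are vertices of some $\vis(v_i)$, which were handled above. In both situations $x\in Q$.
\item Wall--window: $x$ is the intersection of a new chord with a segment of $E(\pol)\subseteq L$, so $x\in Q'$.
\item Window--window: $x$ is the intersection of the chord through $v_1$ and a reflex vertex with the chord through $v_2$ and a reflex vertex.
\end{itemize}
In the window--window case, if one of $v_1,v_2$ already belonged to $Q$ one iteration earlier, its chord is already in $L$, and $x=l'\cap l$ is added directly. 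The delicate situation is when both chords are created in the same iteration.

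I expect this same-iteration window--window case to be the main obstacle. To handle it, I would read the update of lines 7--8 (and 15--16) as a simultaneous update: $Q'$ collects $l'\cap l$ for $l'\in L'$ and $l\in L\cup L'$, that is, against the segment set as it stands once $L'$ is included. With this reading, two chords of the same arrangement are intersected with each other. This changes only the polynomial size bound of each round, not the $n^{f(k)}$ form, so the window--window intersection also enters $Q$ in the next iteration.

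Finally, the midpoint claim is read off directly from lines 18--20. In every iteration of the second loop, midpoints of visible pairs in the current $Q$ are inserted, and I would interpret ``any visible pair'' as ranging over all such pairs, as the description of the algorithm states.
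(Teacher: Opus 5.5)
Your matching argument is the natural expansion of what the paper only calls ``immediate from the construction of $Q$''. Walls lie on edges in $E(\pol)\subseteq L$, and windows lie on chords regenerated from $v_i$ in the next round. So polygon vertices, window ends and wall--window crossings are in $Q$ after the next round, exactly as you say.

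The one place to be careful is the case you flagged yourself. There your step is not an interpretation of Algorithm~\ref{algo_1} but a change to it. The pseudocode builds $Q'$ from the \emph{old} $L$ and only afterwards executes $L\gets L\cup L'$. Suppose $v_1,v_2$ both first enter $Q$ in round $t$. In round $t+1$, the chords carrying windows of $\vis(v_1)$ and of $\vis(v_2)$ all lie in $L'$, and in general none of them lies in $L$, so their crossing is not added. It is added only in round $t+2$, when these chords are regenerated in $L'$ while their round-$(t+1)$ copies sit in $L$.

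So with the algorithm as written, the observation read literally (``in the next iteration'') fails in exactly this case; the statement itself needs adjusting. You should do one of two things. Either present your intersection of $L'$ with $L\cup L'$ openly as a modification of the algorithm, in which case your remark that it keeps $|Q|=n^{f(k)}$ is the needed justification. Or keep the algorithm and prove the statement with a delay of two rounds. The same remark applies to the midpoint step: the pseudocode says ``choose any visible pair'', while the prose and the size analysis use all visible pairs.

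A minor point: the standing assumption that visibility regions are simple was arranged only for the (perturbed) witnesses, not for points of $Q$. Points of $Q'$ lie on chords through reflex vertices and are often collinear with two polygon vertices, so $\vis(v_i)$ can well be a hairy polygon. Your argument is unaffected, since the far endpoints of arms are again intersections of a chord through $v_i$ and a reflex vertex with an edge of $\pol$. Still, you should not cite that assumption here.
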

\subsubsection{Correctness of the Construction}

To prove the correctness of the algorithm, we consider an arbitrary witness set $W$ and show that there exists a corresponding replacement of $W$ within $Q$. 

\begin{definition}[Perfect Elimination Ordering]
    {\em A \emph{perfect elimination ordering} of a graph $G$ on $n$ vertices is an ordering $\{v_1, \ldots, v_n\}$ of $G$'s vertices such that, for each $i$, the vertex $v_i$ is simplicial in the subgraph induced by $\{v_i, \ldots, v_n\}$. It is well known that every chordal graph has a perfect elimination ordering, and can be computed in polynomial time. Consequently, every chordal graph contains a simplicial vertex.}
\end{definition}

Let $|W|=k$ and $G_1 \coloneqq \gc(W)$. Since $G_1$ is chordal (from \cref{chordalll}), let $\{w_1,\ldots, w_k\}$ be a perfect elimination ordering of $G_1$. Define $G_i$ to be the subgraph of $G_1$ induced by the vertex set $W_i=\{w_i,\ldots, w_k\}$. Clearly, $w_i$ is a simplicial vertex of $G_i$. 

Now consider the analogous argument for the polygon $\pol$ with the witness set $W=\{w_1,\dots,w_k\}$, similar to the construction of the sequence of graphs $G_i$. Initially, we begin with the witness set $W=W_1=\{w_1,\dots,w_k\}$ and its associated non-visibility region $\nv(W_1)=\pol\setminus \left(\bigcup_{j=1}^k \vis(w_j)\right)$. After removing $w_1$, we update the witness set to $W_2$, and the corresponding non-visible region becomes $\nv(W_2)\coloneq \nv(W_1)\cup \vis(w_1) .$ More generally, at the $(i-1)$th step, after removing $w_{i-1}$, the witness set $W_{i-1}$ is updated to $W_i$ and the corresponding non-visible region is updated as $\nv(W_i)\coloneq \nv(W_{i-1})\cup \vis(w_{i-1}).$

From the lemma below, we obtain $G_i = \gc(W_i)$.
\ifthenelse{\boolean{shortver}}{}{Although this may seem immediate, one might worry that deleting some of the witness points $w_1,\dots,w_{i-1}$ could create new edges between the remaining witnesses; however, this does not occur in our setting.}
Thus, deleting a witness point from $\gc(W_i)$ is equivalent to removing the corresponding node from $G_i$. For each $w \in W_i$, we define
\emph{$\vil_i(w)$} as the neighbors of $w$ in $G_i$.

\begin{lemma}\label{equivalbetw}
    $G_i=\gc(W_i)$ for each $i \in [k]$.
\end{lemma}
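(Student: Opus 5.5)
The plan is to prove, by induction on $i$, the stronger one-step identity
\[
\gc(W_{i+1}) = \gc(W_i) - w_i ,
\]
that is, the graph on $W_{i+1}$ obtained from $\gc(W_i)$ by deleting $w_i$. Since $G_{i+1}$ is by definition $G_i - w_i$, the lemma then follows immediately. The base case $G_1=\gc(W_1)$ holds by definition.

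\textbf{Easy inclusion.} Fix $i$ and distinct $a,b\in W_{i+1}$. Since $\nv(W_i)\subseteq \nv(W_{i+1})$, any neighbor witness path between $a$ and $b$ with respect to $W_i$ is also one with respect to $W_{i+1}$. Hence every edge of $\gc(W_i)-w_i$ is an edge of $\gc(W_{i+1})$.

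\textbf{Reverse inclusion.} Let $\gamma$ be a neighbor witness path for $a,b$ with respect to $W_{i+1}$. Then
\[
\gamma\subseteq \vis(a)\cup\nv(W_i)\cup\vis(w_i)\cup\vis(b).
\]
If $\gamma$ avoids $\vis(w_i)$, it is already a neighbor witness path with respect to $W_i$, and we are done. Otherwise, follow $\gamma$ from $a$ to $b$ and record the sequence of visibility regions among $\vis(a),\vis(w_i),\vis(b)$ that it visits. These regions are pairwise disjoint and closed. I would first replace $\gamma$ by a polygonal path inside the same union, which is possible since all sets involved are finite unions of polygonal pieces. This guarantees the path switches regions only finitely often.

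Between two consecutive visited regions, say those of $x$ and $y$, the sub-path runs from $\vis(x)$ to $\vis(y)$ through $\nv(W_i)$ only. Prepending $\overline{x\,\cdot}$ and appending $\overline{\cdot\, y}$ inside $\vis(x)$ and $\vis(y)$ gives a neighbor witness path for $x,y$ with respect to $W_i$. This produces a walk from $a$ to $b$ in $\gc(W_i)$ whose internal vertices all equal $w_i$. Either the walk contains a step $a\to b$ (or $b\to a$), giving the edge $ab$ in $\gc(W_i)$ directly, or it contains steps $a\to w_i$ and $w_i\to b$. In the latter case $a,b\in\vil_i(w_i)$. By the induction hypothesis $\gc(W_i)=G_i$, and $w_i$ is simplicial in $G_i$ by the perfect elimination ordering, so $ab$ is an edge of $\gc(W_i)$. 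In all cases $ab$ is an edge of $\gc(W_i)-w_i$.

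\textbf{Main obstacle.} I expect the delicate step to be the finiteness/decomposition of $\gamma$ into finitely many region-switches. A general continuous path could oscillate infinitely often near a shared boundary point. I would handle this either by the polygonal approximation above, using that $\nv(W_i)$ is a union of near polygons so small perturbations stay in the union, or by a first-exit/last-entry argument. In the latter, take the last time $\gamma$ lies in $\vis(a)\cup\vis(b)$ before first touching $\vis(w_i)$, and symmetrically after the last time; this avoids any countable-oscillation issue. The only other subtlety is that boundary points shared between visibility regions and $\nv$-components are handled correctly, which follows from the attachment-along-windows observation.
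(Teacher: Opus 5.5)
Your proposal is correct and follows essentially the same route as the paper: induction on $i$, the inclusion $\nv(W_i)\subseteq\nv(W_{i+1})$ for the easy direction, and simpliciality of $w_i$ in $G_i=\gc(W_i)$ to rule out any new edge. The one difference is that you justify a step the paper only asserts: a neighbor witness path for $a,b$ with respect to $W_{i+1}$ either already yields the edge $ab$ in $\gc(W_i)$, or it forces $a,b\in\vil_i(w_i)$. Of your two fixes, the first-exit/last-entry version (using that the visibility regions are closed and pairwise disjoint) is the cleaner one.
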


\ifthenelse{\boolean{shortver}}{}{\begin{proof}
    We prove this by induction. For the base case we have $G_1=\gc(W_1)$ which follows from the definition of $G_1$ and $\gc(W_1)$.
    
    Let for some $l~(1\le l<k)$, $G_l=\gc(W_l)$. We prove $G_{l+1}=\gc(W_{l+1})$. We have that the vertex set of $\gc(W_{l+1})$ is the same as the vertex set of $G_{l+1}$. Now an edge which is in $G_{l+1}$ must be also in $\gc(W_{l+1})$ since for $j_1,j_2>l$, if there is a path $\gamma\subset \vis(w_{j_1})\cup\nv(W)\cup\vis(w_{j_2})$ (which produces an edge in $G_1$ as well as in $G_{l+1}$) between $w_{j_1}$ and $w_{j_2}$ then $\gamma$ is also contained inside $\vis(w_{j_1})\cup\nv(W_{l+1})\cup\vis(w_{j_2})$ as $\nv(W)\subset \nv(W_{l+1})$.

    Now assume, for contradiction, removal of $w_l$ and addition of $\vis(w_l)$ with $\nv(W_l)$, gives rise to a new edge between $w_{j_1}$ and $w_{j_2}$ which was not in $G_l$ (and hence not in $G_{l+1}$ either). Then $w_{j_1}$ and $w_{j_2}$ must be in $\vil_l(w_l)$. But since $w_l$ is simplicial in $G_l$, so $\vil_l(w_l)$ forms a clique in $G_l$. Thus no new edge can be added between $w_{j_1}$ and $w_{j_2}$ which is a contradiction to our assumption. And hence $G_{l+1}=\gc(W_{l+1})$. 
\end{proof}}


\subparagraph{Step 1 (Replacing vertices by clones):}

\subparagraph{\underline{Base Case}:}
Invoking \cref{ReplacingSimplicialVertexByAVertex}, we may assume, without loss of generality, that $w_1 \in V(\pol)$. Thus we define $w_1^*\coloneq w_1\in Q$, and we have that $\{w_1^*, w_2, \ldots,w_k\}$ is a witness set.
\begin{definition}[Clone]\label{def_clone}
 		{\em For a point $v \in \pol$, we denote by $Cl(v) \in \pol$ a point that is infinitesimally close to $v$. We refer to $Cl(v)$ as a \emph{clone} of $v$.}  
 \end{definition} 

For a point $v$, since $Cl(v)$ is infinitesimally close to $v$, we may regard $Cl(v)$ as essentially coincident with $v$ for subsequent arguments.

\subparagraph{\underline{Inductive Step:}} For this step, assume that $\{w_1^*,\ldots, w_i^*,w_{i+1},\ldots, w_k\}$ is a witness set of size $k$, such that for every $1 \leq j \leq i$, $w_j^*$ is a clone of some point $v_j\in Q$ and $w_j^*\in \vis(w_j)$. We show that there exists $v_{i+1}\in Q$ and a clone $w_{i+1}^*$ of $v_{i+1}$ such that $\{w_1^*,\ldots, w_{i+1}^*,\ldots, w_k\}$ is a witness set of size $k$ and $w_{i+1}^*\in \vis(w_{i+1})$.



We want to find a point of $Q$ from $M_{i+1}\coloneq \vis(w_{i+1})\setminus\bigcup_{v\in N(w_{i+1})}\vis(e_{v,w_{i+1}})$ for the replacement of $w_{i+1}$, as justified by \cref{proofofcontainment}. Now $\vis(w_{i+1})\setminus\bigcup_{{w_j^*\in \vil(w_{i+1}),\,j\leq i }}\vis(e_{w_j^*,w_{i+1}})$ is a disjoint union of some \npo s $P_1, \ldots, P_{n_{i+1}}$. Furthermore, $w_{i+1}$ is contained in exactly one of these \npo s. Without loss of generality, assume $w_{i+1} \in P_1$. We define the closure of the \npo, $P_1$ to be the \emph{$i$-th movable region} associated with $w_{i+1}$, denoted by $MV_i(w_{i+1})$ (see \cref{recursivestepa}). 

\begin{figure}[ht!]
  \centering
  \subfloat[Example of $MV_i(w_{i+1})$. $j_1,j_2\le i$]{%
    \includegraphics[width=0.45\linewidth]{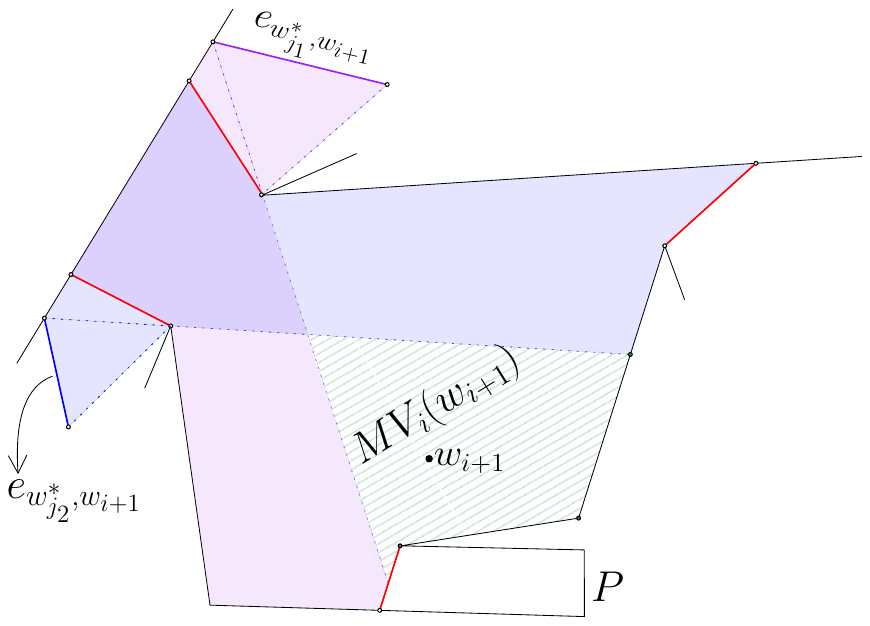}
    \label{recursivestepa}
  }
  \hfill
  \subfloat[Example of $MV_i^{(1)}$ and $MV_i^{(2)}$.]{%
    \includegraphics[width=0.45\linewidth]{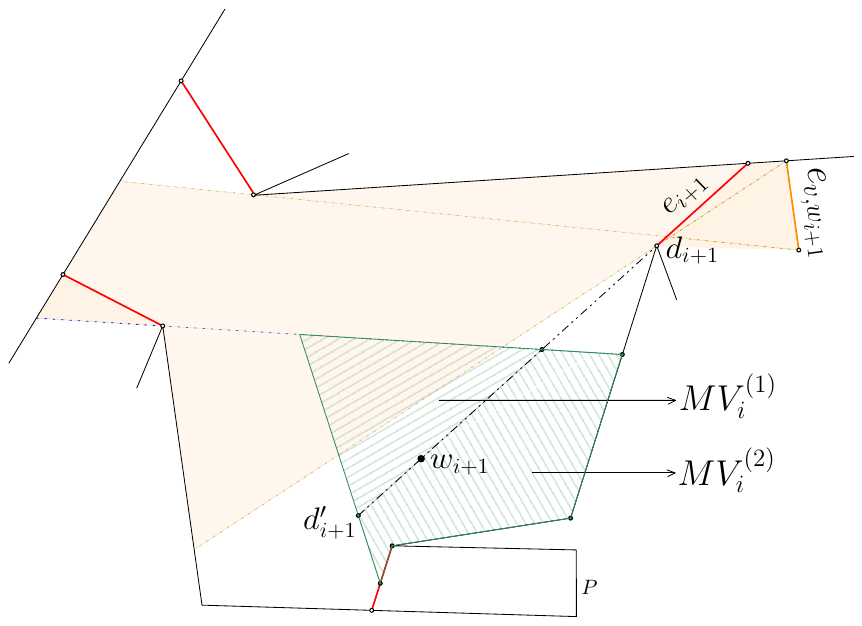}
    \label{recursivestep}
  }
  \caption{Illustration of the inductive step.}
\end{figure}


    

Since $w_{i+1}$ is a simplicial vertex of $G_{i+1}$, \cref{equivalbetw} and \cref{onlyoneperfectwindowinvolvedforsimplicialvertex} ensures the existence of the window $e_{i+1}$ of $\vis(w_{i+1})$ such that for every $v \in \vil_{i+1}(w_{i+1})$ in $G_{i+1}$, $e_{w_{i+1},v}=e_{i+1}$. We show that the region $MV_i(w_{i+1})\setminus\bigcup_{v\in\vil_{i+1}(w_{i+1})}\vis(e_{v,w_{i+1}})\,(\subseteq M_{i+1})$ contains at least one point of $Q$. Since $e_{i+1}$ is a window of $\vis(w_{i+1})$, the point $w_{i+1}$ lies on the line supporting $e_{i+1}$, and its {base} $d_{i+1}$ is a reflex vertex. Extend the segment from $d_{i+1}$ through $w_{i+1}$ until it intersects the boundary of $MV_i(w_{i+1})$ at a point $d_{i+1}'$. Define $l_{i+1} \coloneq  \overline{d_{i+1} d_{i+1}'}$. Because $w_{i+1} \in MV_i(w_{i+1})$ and $l_{i+1}$ passes through $w_{i+1}$, this segment partitions $MV_i(w_{i+1})$ into two sub-polygons $MV_{i}^{(1)}$ and $MV_i^{(2)}$, situated on opposite sides of $l_{i+1}$ (see \cref{recursivestep}). 
By \cref{onlyoneside}, we assume, without loss of generality, that $\vis(e_{v,w_{i+1}})\cap MV_i(w_{i+1})\subseteq MV_i^{(1)}$ for each $v\in \vil_{i+1}(w_{i+1})$ which ensures that $(MV_i^{(2)}\setminus\{d_{i+1}\})\subseteq M_{i+1}$. Therefore, by applying \cref{proofofcontainment}, for any point $v$ in $iN(MV_i^{(2)})$ the collection $\{w_1^*,\ldots, w^*_i,v,w_{i+2},\ldots, w_k\}$ is a witness set of size $k$.

\begin{claim}\label{General CaseexistanceOfgeneratedVertex}
    The boundary of $MV_i^{(2)}$ must contain at least one point from $Q$.
\end{claim}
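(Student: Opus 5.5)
We want a point of $Q$ on $\bd(MV_i^{(2)})$. The plan is to describe this boundary piece by piece and show that one of its vertices is produced by \Cref{algo_1}. By construction $MV_i(w_{i+1})$ is the closure of a \npo\ component of $\vis(w_{i+1})\setminus\bigcup_{j\le i}\vis(e_{w_j^*,w_{i+1}})$. Each edge of $\bd(MV_i(w_{i+1}))$ is therefore of one of three kinds:
\begin{itemize}
\item a wall of $\vis(w_{i+1})$, lying on $\bd(\pol)$;
\item a window of $\vis(w_{i+1})$;
\item a piece of the boundary of some $\vis(e_{w_j^*,w_{i+1}})$ with $j\le i$.
\end{itemize}
Cutting by $l_{i+1}=\overline{d_{i+1}d'_{i+1}}$ then gives $MV_i^{(2)}$. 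Its boundary is $l_{i+1}$ together with the chain of $\bd(MV_i(w_{i+1}))$ running from $d'_{i+1}$ back to $d_{i+1}$ on the side away from the windows $e_{v,w_{i+1}}$, as guaranteed by \cref{onlyoneside}. Note that $d_{i+1}$ is a reflex vertex of $\pol$, so it already lies in $Q$. I would nonetheless aim for a point other than $d_{i+1}$, since the goal is a point of $iN(MV_i^{(2)})$, or a clone of a boundary point. Following the proof of \cref{ReplacingSimplicialVertexByAVertex}, let $c$ be the vertex of $MV_i^{(2)}$ adjacent to $d_{i+1}$ other than $d'_{i+1}$, and walk along the chain from $c$.

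Case analysis on $c$.
\begin{itemize}
\item If $c$ is a polygon vertex, then $c\in V(\pol)\subseteq Q$ and we are done.
\item If $c$ is the end of a window $e'$ of $\vis(w_{i+1})$, then, as in \cref{ReplacingSimplicialVertexByAVertex}, its base $c'$ is a reflex vertex lying on the same side of $l_{i+1}$. Also $c'\ne d_{i+1}$ by \cref{noconsecutivesecondary}, so $c'\in V(\pol)\subseteq Q$ lies on $\bd(MV_i^{(2)})$.
\item The new case is when $c$, or the first non-wall vertex along the chain, is created by the cuts $\vis(e_{w_j^*,w_{i+1}})$.
\end{itemize}

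The new case is handled using Observation~\ref{inq}. Each $w_j^*$ with $j\le i$ is a clone of some $v_j\in Q$ produced at an earlier stage. The boundary of $\vis(e_{w_j^*,w_{i+1}})$ inside $\vis(w_{i+1})$ consists of segments of maximal chords through reflex vertices, namely extensions of segments from endpoints of $e_{w_j^*,w_{i+1}}$ through reflex vertices. The endpoints of that window are vertices of $\vis(v_j)$, hence lie in $Q$ after one more round of the first loop. Consequently the chords bounding $\vis(e_{w_j^*,w_{i+1}})$ belong to $L$ after at most two further rounds. Any vertex of $\bd(MV_i^{(2)})$ on such a chord is then one of the following:
\begin{itemize}
\item an intersection of this chord with $\bd(\pol)$ (an edge in $E(\pol)\subseteq L$);
\item an intersection with another such chord;
\item an intersection with a window of $\vis(w_{i+1})$.
\end{itemize}
In the first two cases the point is added to $Q$ in the next round, via $Q'$.

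Two things must be checked: that the $2(k-1)$ rounds of the first loop suffice (about two rounds per inductive step, matching the loop bound), and how to handle intersections with windows of $\vis(w_{i+1})$. The latter is a problem because $w_{i+1}$ itself is not in $Q$, so its windows need not be in $L$. To deal with it I would again walk along the chain. Starting from $c$, move through consecutive edges of the chain until reaching either a polygon vertex, the base of a window of $\vis(w_{i+1})$ (a reflex vertex, hence in $Q$), or an intersection of two cut-chords, or of a cut-chord with $\bd(\pol)$ (in $Q$ by the above). Such a point must occur before the walk returns to $d'_{i+1}$, since the chain contains at least one wall or window base. This mirrors the argument in \cref{uniquecomponentinsidecycle} that each of $B_1,B_2$ contains a wall.

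I expect the main obstacle to be showing that this walk stops at a point of $Q$ and not at $d'_{i+1}$. The point $d'_{i+1}$ depends on $w_{i+1}$ and is not generated. I would argue that the chain from $c$ to $d'_{i+1}$ cannot consist solely of windows of $\vis(w_{i+1})$ and pieces of cut-chords that meet only at non-generated points. The reasons are that windows are never consecutive (\cref{noconsecutivesecondary}), and that any window's base is a reflex vertex on that side of $l_{i+1}$. It remains to ensure the iteration count covers the depth of chords needed.
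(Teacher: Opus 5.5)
Your decomposition of $\bd(MV_i^{(2)})$ into a piece of $l_{i+1}$, window pieces, and wall/cut pieces is the paper's Types I--III. You also correctly isolate the troublesome vertex: a wall or cut edge meeting a window of $\vis(w_{i+1})$ (the paper's Type B), which need not be generated because $w_{i+1}\notin Q$. But the proposal never disposes of this vertex, and the step you rely on is not valid.

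The claim that the chain from $c$ to $d'_{i+1}$ must contain a wall or a window base does not follow from the argument of \cref{uniquecomponentinsidecycle}. There, $B_1$ and $B_2$ are pieces of $\bd(\vis(v_2))$. Here, the walls of $\vis(w_{i+1})$ on that side can be entirely removed by the cut regions $\vis(e_{w_j^*,w_{i+1}};w_{i+1})$. A window can also appear on the chain only as a sub-segment between two cut vertices, without its base.

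Likewise, knowing that a window's base is a reflex vertex ``on that side of $l_{i+1}$'' does not put it in $MV_i^{(2)}$. In \cref{ReplacingSimplicialVertexByAVertex} this was automatic because $V_2^w(e)$ is star-shaped from $w$, but here the cut regions could separate the base from $MV_i^{(2)}$. The same objection applies to your second case ($c$ the end of a window $e'$). So your walk has no proved stopping point, which is exactly the obstacle you flag at the end.

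The missing idea is the paper's triangle argument, which also removes the need for a long walk. Start at $d'_{i+1}$. Either it is already a Type A vertex, hence in $Q$, or it is where $l_{i+1}$ meets a wall/cut edge $\overline{d'_{i+1}d}$. In the second case, $d$ is either generated or lies on a window $e'''$ of $\vis(w_{i+1})$. If $d$ lies on $e'''$, the base $c$ of $e'''$ is a reflex vertex collinear with $w_{i+1}$ and $d$, so $c\in\overline{w_{i+1}d}$.

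Since $w_{i+1}\in l_{i+1}$, the paper observes that $\triangle d'_{i+1}w_{i+1}d$ lies in $MV_i^{(2)}$; two of its sides lie on $l_{i+1}$ and on the boundary edge $\overline{d'_{i+1}d}$. Hence $c\in MV_i^{(2)}$, and in fact $c$ lies on its boundary, since $c\in\bd(\vis(w_{i+1}))$. So a non-generated window vertex is harmless: it certifies a reflex vertex of $Q$ one step away, and at most two steps from $d'_{i+1}$ are needed.

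The direction of the walk is not the real issue. The same containment applied to $\triangle d_{i+1}w_{i+1}c$ would settle all your cases for $c$ at once.

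As for the iteration count you leave open, the paper does no separate bookkeeping here. It takes vertices formed by two Type III edges to be in $Q$ directly from \cref{inq}, using that each $w_j^*$ is a clone of a point of $Q$.
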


\ifthenelse{\boolean{shortver}}{}{
\begin{proof}
An edge of $MV_i^{(2)}$ can be of one of the following three types:-

\begin{description}
    \item[Type I.] It is part of $l_{i+1}$.
\item[Type II.] It is part of some {window} of $\vis(w_{i+1})$.
\item[Type III.] It is part of some {wall} of $\vis(w_{i+1})$ or, it is part of some edge of $\vis(e_{w_j^*,w_{i+1}};w_{i+1})$ where $j\leq i$. 
    \end{description}
    

Because there is exactly one Type I edge, no vertex of $MV_i^{(2)}$ can arise from the intersection of two Type I edges. Furthermore, \cref{noconsecutivesecondary} guarantees that windows never appear consecutively, implying that no vertex of $MV_i^{(2)}$ is formed by the intersection of two Type II edges. Now suppose the segment $l_{i+1}$ meets a window $e$ of $\vis(w_{i+1})$. Such an intersection can occur only at a {reflex vertex}, which is the base of $e$; otherwise, $e$ would not be collinear with $w_{i+1}$, since $w_{i+1} \in l_{i+1}$. Consequently, any vertex of $MV_i^{(2)}$ obtained as the intersection of a Type I edge with a Type II edge must be a reflex vertex of the polygon $\pol$. Thus we get only three types of vertices of $MV_i^{(2)}$:

\begin{description}
    \item[Type A.]  A vertex which is an intersection point of two Type III edges or an intersection point of a Type I and a Type II edge (where the latter is always a {reflex vertex} of $\pol$).

     \item[Type B.]  A vertex which is an intersection point of a Type II edge and a Type III edge.
     
    \item[Type C.]  A vertex which is an intersection point of a Type I edge and a Type III edge.
\end{description}


\begin{figure}[ht!]
    \centering
    \includegraphics[width=0.3\linewidth]{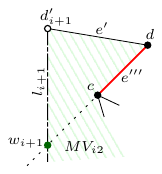}
    \caption{The reflex vertex $c$ is contained in $MV_i^{(2)}$.}
        \label{pfofmvi2}
\end{figure}

Observe that, using \cref{inq}, if a vertex is of Type A then it is in $Q$ (since $w_j^*$ is a clone of some point in $Q$), so we aim to find a point of this type in $MV_i^{(2)}$. Since $d_{i+1}' \in l_{i+1}$, it must be either a Type A or a Type C vertex. If it is Type A, nothing more needs to be shown. Suppose instead that $d_{i+1}'$ is of Type C. In this case, let $e'$ denote the Type III edge whose intersection with $l_{i+1}$ gives rise to $d_{i+1}'$. Let $d$ be the vertex on $e'$ that is adjacent to $d_{i+1}'$. Because $d$ is incident to a Type III edge, it can only be of Type A or Type B. If $d$ is Type A, we are done again. 
If $d$ is of Type B, then there exists a Type II edge $e''$ whose intersection with $e'$ produces the vertex $d$. This edge $e''$ lies on a {window} of $\vis(w_{i+1})$; call this window $e'''$. Let $c$ denote the base of $e'''$. This point $c$ is necessarily a reflex vertex of the polygon $\pol$. See \cref{pfofmvi2} for an illustration. Observe that the triangle $\triangle d_{i+1}' w_{i+1} d$ lies completely inside $MV_i^{(2)}$. Because $e'''$ is collinear with $w_{i+1}$, the point $c$ must lie on the segment $\overline{w_{i+1} d}$. Consequently, $\overline{w_{i+1} d} \subseteq \triangle d_{i+1}' w_{i+1} d \subseteq MV_i^{(2)},$ which shows that $c \in MV_i^{(2)}$. Moreover, since $c$ is a reflex vertex, $c$ is in $Q$. 
\end{proof}}

Using \cref{General CaseexistanceOfgeneratedVertex}, we obtain a point $v_{i+1}\in Q$ contained in $MV_i^{(2)}$. We then substitute $w_{i+1}$ with a clone $Cl(v_{i+1})$ of this point, chosen so that it lies inside $iN(MV_i^{(2)})$. Let this new point be denoted by $w_{i+1}^*$. Since the clone is positioned infinitesimally close to $v_{i+1}$, we may treat $w_{i+1}^*$ as effectively identical to $v_{i+1}$ in all subsequent reasoning. Note that $v_{i+1} \in Q$, and the updated collection $ \{w_1^*, \ldots, w_{i+1}^*, w_{i+2}, \ldots, w_k\}$ is a witness set with cardinality $|W|$.

Thus, at the end of this step, we have obtained a updated witness set $\{w_1^*,\ldots, w_{k-1}^*,w_k\}$ of size $k$ using the principle of mathematical induction.

\subparagraph{Step 2 (Replacing $w_k$):}

Let $w\in W$ be a witness point. $\vis(w)\setminus\bigcup_{v\in \vil(w)}\vis(e_{v,w})$ is disjoint union of \npo s $P_1^w,\ldots, P_{n_w}^w$. Moreover, $w$ lies inside exactly one of those. Without loss of generality, let $w\in P_1^w$. We call the closure of $P_1^w$ as the \emph{movable region} of $w$ and denote it by $MV(w)$. Observe that for any point $v\in iN(MV(w))$, $(W\setminus\{w\})\cup\{v\}$ is a witness set of size $k$ using \cref{proofofcontainment}. Under this framework, we now present the following observation, which plays a key role in this step.
\begin{lemma}\label{removeclonetrick}
   Let $w \in W$ be a witness point such that every other witness point $v \in W \setminus \{w\}$ is already placed at some point of $Q$. Then the polygon $MV(w)$ contains at least two vertices $q_1^w, q_2^w \in Q$ such that $iN(\overline{q_1^w q_2^w}) \subset iN(MV(w))  ~\text{and}~ \overline{q_1^w q_2^w} \subset MV(w).$ \ifthenelse{\boolean{shortver}}{}{In other words, the segment $\overline{q_1^w q_2^w}$ lies entirely inside $MV(w)$, and its interior is contained in the interior of $MV(w)$.}
    \end{lemma}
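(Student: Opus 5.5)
Since $MV(w)$ is the closure of a near polygon containing $w$ with non-empty interior, it is a simple polygon with non-empty interior. The plan is to show that at least two of its vertices lie in $Q$ and see each other through the interior of $MV(w)$. Once two such vertices are found, their midpoint is added to $Q$ by the midpoint step of \cref{algo_1}, and that midpoint lies in $iN(MV(w))$. By \cref{proofofcontainment} it can then replace $w$ without using a clone.

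\textbf{Step 1: vertices of $MV(w)$ lie in $Q$.} I would classify the edges of $MV(w)$ exactly as in the proof of \cref{General CaseexistanceOfgeneratedVertex}. The edges are portions of walls of $\vis(w)$, windows of $\vis(w)$, or edges of $\vis(e_{v,w};w)$ for $v\in\vil(w)$. There is no Type I segment here, since we do not cut by $l$.

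Every other witness $v$ is already placed at a point of $Q$, which I treat as the point itself, as with clones. Each window $e_{v,w}$ is then bounded by a reflex vertex and lies on a chord through $v$. The edges of $\vis(e_{v,w})$ lie on chords through reflex vertices and points generated from $Q$ in a previous iteration. So, using \cref{inq}, every intersection of two such non-window edges, and every polygon vertex, is in $Q$. The remaining vertices are window/non-window intersections (Type B). For these I would reuse the argument of the Claim: walk along the adjacent Type III edge and replace such a vertex by the base $c$ of the window. The base is a reflex vertex, hence in $Q$, and it lies on $\bd(MV(w))$ because windows are collinear with $w$.

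\textbf{Step 2: two points of $Q$ on $\bd(MV(w))$ with an interior chord.} $MV(w)$ has at least three vertices. Step 1 gives a point of $Q$ in each maximal run of non-window edges, and by \cref{noconsecutivesecondary} windows are isolated, so there are at least two distinct such $Q$-points on the boundary. I would triangulate $MV(w)$ restricted to its $Q$-vertices, or take an ear. Any simple polygon with $\geq 4$ vertices has a diagonal whose interior lies in the interior of the polygon. If only three $Q$-points exist, I would use two adjacent $Q$-points not joined by a boundary edge. Alternatively, I would take a reflex $Q$-vertex $c$ and the point where the ray from $w$ through $c$ exits $MV(w)$. That exit point is an intersection of chords through $Q$-points and hence is in $Q$ at the next iteration.

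The main obstacle is Step 2 in degenerate cases. If the only $Q$-points on $\bd(MV(w))$ are joined by a boundary edge, or if $MV(w)$ is a triangle, the segment between them is not interior. There I would fall back on the chord through $w$ toward a reflex $Q$-vertex. This produces a segment whose endpoints are both in $Q$ (a reflex vertex and a chord/boundary intersection) and which passes through $iN(MV(w))$. I also need to check that the iteration budget $2(k-2)$ suffices for these points to appear.
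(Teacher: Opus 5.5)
\textbf{There is a genuine gap in Step 2.} Your Step 1 correctly identifies which points of $\bd(MV(w))$ are available in $Q$: intersections of two non-window edges, by \cref{inq}, and bases of windows, which are reflex vertices. Step 2, however, never shows that two of these points are joined by a segment whose interior lies in $iN(MV(w))$.

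\begin{itemize}
\item The fact that every simple polygon with at least four vertices has a diagonal only gives a diagonal between \emph{some} two vertices of $MV(w)$. Those vertices may be window/non-window vertices, which are not in $Q$.
\item The ``polygon restricted to its $Q$-vertices'' is in general neither simple nor contained in $MV(w)$, so its diagonals prove nothing.
\item Two $Q$-points on $\bd(MV(w))$ that are not joined by an edge need not see each other at all when $MV(w)$ is non-convex.
\item Your count of ``one $Q$-point per maximal run of non-window edges'' yields only one run when $MV(w)$ has a single window piece.
\item The fallback also fails. The ray from $w$ through $c$ lies on a line through $w$, and $w$ is the original, arbitrary witness position, not a point of $Q$. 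So this line is not among the chords generated by \cref{algo_1}, and its exit point is in general not in $Q$. Moreover, if $c$ is a window base, that ray simply continues along the window, that is, along $\bd(MV(w))$.
\end{itemize}

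\textbf{The missing idea is to use $w$ as the apex of a triangle, not as an endpoint of a chord.} This is what the paper does.
\begin{itemize}
\item Take a vertex $b$ of $MV(w)$ where a window piece lying on a window $e$ meets a non-window edge $f=\overline{bb'}$. Choose $b$ as the endpoint of that piece farther from $w$, so that $r\neq b$ below.
\item The paper observes that $\triangle bwb'\subseteq MV(w)$. Since $e$ is collinear with $w$, its base $r$ (a reflex vertex, hence in $Q$) lies on $\overline{wb}$.
\item If $b'$ is an intersection of two non-window edges, it is in $Q$, and you take $q_1^w=r$, $q_2^w=b'$.
\item Otherwise $b'$ lies on another window $e'$ whose base $r'$ lies on $\overline{wb'}$, and you take $q_1^w=r$, $q_2^w=r'$.
\end{itemize}
In both cases the open segment crosses the interior of $\triangle bwb'$, hence lies in $iN(MV(w))$. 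Both endpoints are already in $Q$, so your worry about the iteration budget disappears: the midpoint is added at the next midpoint step.

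Only when no such $b$ exists are all vertices of $MV(w)$ in $Q$. There, provided $MV(w)$ has at least four vertices, your diagonal argument is the right tool.
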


    \ifthenelse{\boolean{shortver}}{}{
\begin{proof}
    An edge of $MV(w)$ can be of the following types:

    \begin{description}
        \item[Type 1.] It is part of some edge of $\vis(e_{v,w};w)$, $v\in \vil(w)$ or, it is part of some {wall} of $\vis(w)$.

        \item[Type 2.] It is part of some {window} of $\vis(w)$.
    \end{description}

    Since no two {windows} are consecutive (using \cref{noconsecutivesecondary}), there does not exist any vertex of $MV(w)$ which is an intersection point of two {Type $2$} edges. Consequently, the vertices of $MV(w)$ fall into only two possible categories:

    \begin{description}
      \item[Type a.] A vertex which is an intersection point of two {Type $1$} edges.

         \item[Type b.] A vertex which is an intersection point of a {Type $1$} edge and a {Type $2$} edge.
    \end{description}

    Observe that any vertex of {Type a} automatically belongs to $Q$ using \cref{inq}. Thus, if all vertices of $MV(w)$ are of Type a, the claim follows immediately. Otherwise, suppose that $MV(w)$ contains a vertex of {Type b}; denote this vertex by $b$.

    \begin{figure}[ht!]
        \centering
        \includegraphics[width=0.2\linewidth]{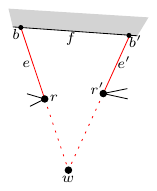}
        \caption{$iN(\overline{q_1^wq_2^w})\subset iN(MV(w))$ and $\overline{q_1^wq_2^w}\subset MV(w)$, where $q_1^w=r,$ $q_2^w$ can be $b'$ or, $r'$.}
        \label{wktovetr}
    \end{figure}
    
    Since $b$ is of {Type b}, it is the intersection point of a window $e$ of $\vis(w)$ and an edge $f$ of $MV(w)$ of {Type a}. Let $b'$ denote the other endpoint of $f$, and let $r$ be the {base} of the window $e$. Observe that the triangle $\triangle bwb'$ lies entirely within $MV(w)$. Because $e$ is collinear with $w$, we have $ r \in \overline{bw} \subset \triangle bwb' \subset MV(w). $ We now consider two cases.

    \begin{itemize}
        \item \textbf{Case I ($b'$ is of {Type a}):} In this case we take $q_1^w=r$ and $q_2^w=b'$ as $r$ being the {base} of a {window} is a {reflex vertex}. And $iN(\overline{q_1^wq_2^w})\subset iN(MV(w))$ and $\overline{q_1^wq_2^w}\subset MV(w)$ follow from the fact that $\triangle bwb'\subset MV(w)$.

        \smallskip 

        \item     \textbf{Case II ($b'$ is of {Type b}):} Let $b'$ be the intersection point of the segment $f$ with a {window} $e'$ of $\vis(w)$. Denote by $r'$ the {base} of $e'$. Since $e'$ is collinear with $w$, we have $r' \in \overline{wb'} \subset \triangle bwb' \subset MV(w).$ In this case, we choose $q_1^w = r$ and $q_2^w = r'$, noting that the points $r$ and $r'$, being the bases of windows, are reflex vertices. Furthermore, $iN(\overline{q_1^w q_2^w}) \subset iN(MV(w))\text{ and }\,\overline{q_1^w q_2^w} \subset MV(w), $ which follow from the fact that $\triangle bwb' \subset MV(w)$. See \cref{wktovetr} for an illustration.
    \end{itemize}



\end{proof}
}

By Step $1$, each witness point $w_j,  j \in \{1,\ldots,k-1\}$ has been replaced by $w_j^*$. Each $w_j^*$ is a clone of some point of $Q$, meaning that its position is infinitesimally close to a point of $Q$. Therefore, the setup of \cref{removeclonetrick} is applicable to $w_k$. Using \cref{removeclonetrick}, we obtain two points $q_1^{w_k}, q_2^{w_k}\in Q$ such that $ iN(\overline{q_1^{w_k} q_2^{w_k}}) \subset iN(MV(w_k)) ~ \text{and} ~ \overline{q_1^{w_k} q_2^{w_k}} \subset MV(w_k). $ Let $w_k'$ denote the midpoint of  $\overline{q_1^{w_k} q_2^{w_k}}$. Now $w_k'$ lies in the interior of $MV(w_k)$ and belongs to $Q$ using \cref{inq}. Thus, we obtain an updated witness set $\{w_1^*, \ldots, w_{k-1}^*, w_k'\}$ of size $k$, where both $w_1^*$ and $w_k'$ lie in $Q$, and for each $2 \le j \le k-1$,  $w_j^*$ is a clone of a point in $Q$.

\subparagraph{Step 3 (Dealing  with Clones):} 

Since $w_1^* = w_1$ is an original vertex of $\pol$, we simply set $w_1' \coloneq w_1$ and the same technique used for $w_k$ applies to every other witness point. For any $j \in \{2,\ldots,k-1\}$, every neighbor witness of $w_j^*$ is either a point of $Q$ or a clone of some point of $Q$. Hence, the setup of \cref{removeclonetrick} is valid for $w_j^*$. Applying \cref{removeclonetrick}, we obtain two points $q_1^{w_j^*},q_2^{w_j^*}\in Q$ such that $iN(\overline{q_1^{w_j^*} q_2^{w_j^*}})\subset iN(MV(w_j^*))$ and  $\overline{q_1^{w_j^*} q_2^{w_j^*}}$ lies within $MV(w_j^*)$. Let ${w_j}'$ denote the midpoint of $\overline{q_1^{w_j^*} q_2^{w_j^*}}$. Clearly, ${w_j}'$ lies in the interior of $MV(w_j^*)$ and belongs to $Q$ using \cref{inq}. We now replace $w_j^*$ with ${w_j}'$.


Of course, we can do this process by following any order of the set $\{2,\ldots,k-1\}$ but for convenience we do this replacing by some point of $Q$ process by the order $(k-1,\ldots, 2)$.  We begin by replacing $w_{k-1}^*$ with $w_{k-1}'$, after which the witness set becomes $\{w_1',\, w_2^*,\, \ldots,\, w_{k-2}^*,\, w_{k-1}',\, w_k'\}$. Next, we replace $w_{k-2}^*$ with $w_{k-2}'$ and update the witness set to $\{w_1',\, w_2^*,\, \ldots,\, w_{k-3}^*,\, w_{k-2}',\, w_{k-1}',\, w_k'\}$. Proceeding in this way, we finally replace $w_2^*$ with $w_2'$, resulting in the witness set $\{w_1',\, \ldots,\, w_k'\}$. Thus, each {witness point} has been replaced by a point of $Q$. This leads to the following theorem.

\begin{theorem}\label{mainthm}
    Let $W=\{w_1,\ldots,w_k\}$ be a {witness set} in a simple polygon $\pol$. Then there exist  $w_1',\ldots,w_k' \in Q$ such that  $\{w_1',\ldots,w_k'\}$ is also a {witness set} of cardinality $|W|$.
\end{theorem}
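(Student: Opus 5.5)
The proof assembles Steps 1--3 above into a single induction, checking at each stage that the iteration budget of \Cref{algo_1} is enough. We fix a witness set $W$ with $|W|=k$. By \cref{chordalll}, $\gc(W)$ is chordal, so we take a perfect elimination ordering $w_1,\dots,w_k$. By \cref{equivalbetw}, $w_i$ is simplicial in $\gc(W_i)=G_i$ for every $i$. We also assume, as shown earlier, that no $\vis(w)$ has arms.

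\textbf{Step 1.} By \cref{ReplacingSimplicialVertexByAVertex}, $w_1$ may be taken to be a polygon vertex, which lies in the initial $Q=V(\pol)$. We then run the induction on $i$. Given $\{w_1^*,\dots,w_i^*,w_{i+1},\dots,w_k\}$ with each $w_j^*$ a clone of a point in $Q$, we form $MV_i(w_{i+1})$. We split it by $l_{i+1}$, invoking \cref{onlyoneperfectwindowinvolvedforsimplicialvertex} and \cref{onlyoneside}. \Cref{General CaseexistanceOfgeneratedVertex} then gives $v_{i+1}\in Q\cap MV_i^{(2)}$. We put a clone of $v_{i+1}$ in $iN(MV_i^{(2)})$, and \cref{proofofcontainment} shows the witness property is preserved.

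\textbf{Iteration count.} The Type A vertices used in that claim are vertices of $\vis(\cdot)$ of earlier points, or intersections of edges of two such regions. By \cref{inq}, such a point enters $Q$ within a constant number of arrangement rounds (I expect two) after its generating points. So $v_{i+1}$ is available after about $2i$ rounds, and the $2(k-1)$ rounds of the first loop cover all of $w_1,\dots,w_{k-1}$.

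\textbf{Steps 2 and 3.} With $w_1^*,\dots,w_{k-1}^*$ at (clones of) points of $Q$, \cref{removeclonetrick} gives $q_1,q_2\in Q$ whose open segment lies in $iN(MV(w_k))$. Their midpoint is added by the midpoint step between the two loops, and it is a valid replacement by \cref{proofofcontainment}. We then process $j=k-1,\dots,2$ in turn. When $w_j^*$ is treated, every other witness is either a point of $Q$ or a clone of one, so \cref{removeclonetrick} again gives a segment in $iN(MV(w_j^*))$, and its midpoint replaces $w_j^*$. The points $q_1,q_2$ depend on the witnesses placed so far, so each replacement needs a fresh arrangement plus a midpoint. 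That is what the second loop provides, with $2(k-2)$ rounds for the $k-2$ clones. We keep $w_1'=w_1$ throughout.

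\textbf{Main obstacle.} The delicate part is justifying that clones can be treated as their base points. The regions $MV(\cdot)$ and $MV_i^{(2)}$ are defined using the clone positions. The lemmas, however, locate the vertices of these regions from the arrangements generated by the uncloned points $v_j\in Q$. I would handle this with a perturbation argument. Since $w_j^*$ lies in the open set $iN(MV_i^{(2)})$, a sufficiently small displacement changes the combinatorial structure of the visibility regions only near degenerate collinearities. In particular, the relevant Type A vertices and the reflex bases $r,r'$ move continuously and stay in the closure of the movable region. Taking a limit therefore gives points of $Q$ with the required containment, and the open-interior condition on $\overline{q_1q_2}$ is what keeps the final midpoint strictly inside $MV$.
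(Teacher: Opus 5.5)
Your outline is the paper's own argument. It assembles Steps~1--3: a perfect elimination ordering, a clone of the $Q$-point supplied by the Claim in Step~1, and midpoints from \cref{removeclonetrick}, first for $w_k$ and then for $w_{k-1}^*,\dots,w_2^*$. Your round count is consistent with the loop lengths of Algorithm~\ref{algo_1}. The one place you go beyond the paper is the perturbation/limit justification for clones, and that step does not work.

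The point $v_j$ given by the Claim lies on $\bd(MV_{j-1}^{(2)})$ and is a degenerate position. It is either a vertex of the arrangement or, in the Type~B case, the reflex vertex $c$ that is the base of a window of $\vis(w_j)$. So the clone is never in the same combinatorial cell as $v_j$, and in the reflex case $\vis(\cdot)$ is not even continuous at $v_j$. Concretely, suppose $x$ is close to a reflex vertex $c$ and the ray from $x$ through $c$ continues into $\pol$. Then $\vis(x)$ has a window based at $c$ on the line $xc$. Its direction is fixed by where the clone happens to sit, not by any chord of the arrangement, and the wedge of $\vis(c)$ it cuts off does not shrink as $x\to c$. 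Distinct directions from $c$ hit $\bd(\pol)$ at distinct points, so the far end of this window lies in the finite set $Q$ for only finitely many approach directions. The same holds generically for the vertices of that window's weak visibility region that lie on lines through its far end. Hence ``taking a limit gives points of $Q$'' is false, and \cref{inq} cannot be applied to $w_j^*$ as if it were $v_j$.

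The same identification carries Steps~2--3. There \cref{removeclonetrick} is invoked while some witnesses are still clones, although its hypothesis requires all other witnesses to sit at points of $Q$. Even where the limit points do lie in $Q$, you would also need the midpoint of the limiting segment to lie in $iN(MV(\cdot))$ for the region defined by the actual clone positions. That requires a lower bound, uniform in the clones, on the midpoint's distance to $\bd(MV)$, which you do not give. The paper does no better here: it simply declares the clone ``effectively identical'' to $v_{i+1}$. So your proposal reproduces its argument, including this gap, rather than closing it.

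A repair would need three things. First, control where each clone is placed, for instance on an arrangement chord through $v_j$, so that a window based at a reflex $v_j$ lies on a generated line. Second, carry an explicit $\varepsilon$ through the Claim and \cref{removeclonetrick}. Third, check that the rounds of Algorithm~\ref{algo_1} still generate every point used.
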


\subsubsection{Running Time analysis and Size of \texorpdfstring{$Q$}{Q}} \label{sec:Q}
 Let $n:=|V(\pol)|=|E(\pol)|$ and $r:=|R|$.

\subparagraph{Size of $Q$:}  We give a concise asymptotic bound on the cardinality of the set $Q$ produced by Algorithm~\ref{algo_1}.
The algorithm has three phases: a first refinement loop of $2(k-1)$
iterations, a midpoint-squaring step, and a second refinement loop of
$2(k-2)$ iterations. A brief  argument  yields the following estimates.

\begin{itemize}
    \item After the first refinement loop one obtains
$ |Q^{(1)}| = \mathcal{O}(n^{\,2k-1} r^{\,2k-2})
$

\item The midpoint insertion step can (in the worst case) square this quantity,
so immediately afterwards we have
$
q_0 =\mathcal{O} ((|Q^{(1)}|)^2) =  \mathcal{O}(n^{\,4k-2} r^{\,4k-4}).
$

\item In the second refinement loop the sizes satisfy a super-multiplicative
recurrence of the form
$
q_{i} \;=\; \mathcal{O}\bigl(q_{i-1}^4 r^4\bigr),
$
so that after $t=2(k-2)$ iterations one obtains
$
q_t =  \mathcal{O} ( q_0^{\,4^{t}}  r^{\,4t}).
$
\end{itemize}

Substituting $q_0 = \mathcal{O}(n^{4k-2} r^{4k-4})$ and $t=2(k-2)$ and simplifying we obtain

$|Q| \;=\; n^{(4k-2)\,4^{2k-4}} \cdot r^{\,\mathcal{O}(4^{2k-4}\,k)} .
$

\subparagraph{Running time} From the construction of Algorithm~\ref{algo_1}, it follows  that running time is $\mathcal{O}(\lvert Q\rvert)$. 

Hence we conclude the following result.

\thmDisFPT*

\subsection{Putting Everything Together: Solving {\sc Witness Set}}
\label{sec:witness}

We now show how the discretization framework developed earlier leads to an exact algorithm for the {\sc Witness Set} problem in simple polygons.

\subparagraph{From continuous to discrete.} Given the polygon $\pol$ and an integer $k$, we apply 
Algorithm~\ref{algo_1} to the instance $(\pol,k)$, which constructs a finite point 
set $Q\subseteq \pol$.  
By \cref{mainthm}, every witness set
$W$ of size $k$ admits an equivalent witness set $W'\subseteq Q$ of the same
size.  
Thus the continuous problem reduces to the following question: {\em Among all subsets of $Q$, find one of maximum cardinality whose 
pairwise visibility regions do not overlap}.

\subparagraph{Solving the discrete instance.}
Let $n = |V(\pol)|$ and let $|Q|$ denote the size of the discretization produced by
Algorithm~\ref{algo_1}.   By the analysis in \Cref{sec:Q}, we have
$
|Q| \;=\; n^{(4k-2)\,4^{2k-4}} \cdot r^{\,\mathcal{O}(4^{2k-4}\,k)}.
$
Since $r\le n$, this simplifies to
$
|Q| \;=\; n^{\,\mathcal{O}(4^{2k})}.
$

The discrete version of {\sc Witness Set} (where one must choose witnesses only
from $Q$) can be solved in
$
\mathcal{O}(|Q|^{3}\cdot |V(\pol)|^{3})
$
time using Theorem~1.1 of~\cite{das2025witnesssetmonotonepolygons}.
Applying this  to $Q$ yields a maximum witness set contained entirely in
$Q$, which by our replacement theorem corresponds to a witness set of size at least $k$
in polygon $\pol$. Thus we have the following theorem.

\thmfptalgorithm*

\section{{\sf NP}-hardness of {\sc Discrete Witness Set}}\label{sec:hard}

In this section, we deal with polygon with holes.

\begin{tcolorbox}[enhanced,
    title={\color{black}{\sc Discrete Witness Set}},
    colback=white,
    boxrule=0.4pt,
    attach boxed title to top center={xshift=-4.5cm, yshift*=-2.5mm},
    boxed title style={size=small, frame hidden, colback=white}]
    \vspace{-1mm}
    \textbf{Input:} A  polygon $\mathscr{P}$, a finite set of points $Q \subset \mathscr{P}$, and a positive integer $k$.\\
    \textbf{Question:} Does there exist a point set $S\subseteq Q$ with $|S|\ge k$
such that every point of $\pol$ is visible from at most one point of $S$?
\end{tcolorbox}


\thmnph*

\noindent We present a polynomial-time reduction from the \textsc{Planar Monotone Rectilinear 3SAT (PMR3SAT)} problem, which is known to be \textsf{NP}-complete~\cite{DBLP:journals/ijcga/BergK12}. 
\ifthenelse{\boolean{shortver}}{}{An instance of PMR3SAT is a CNF formula \(\Phi\) with \(n\) variables and \(m\) clauses such that:
\begin{itemize}
  \item each clause  of $\Phi$ has at most three literals,
  \item each clause is monotone (either all-positive or all-negative),
  \item the variable–clause incidence graph admits a rectilinear planar embedding where variables are represented by axis-aligned rectangles lying on a common horizontal line, positive clauses are above that line, negative clauses below, and each clause is connected to its variables by vertical segments without crossings.
\end{itemize}

The objective of \textsc{PMR3SAT} is to determine whether $\Phi$ is satisfiable,  that is, whether there exists a truth assignment to its variables such that every clause contains at least one literal that evaluates to \texttt{true}.}

\begin{figure}[ht!]
    \centering
    \includegraphics[width=1\linewidth]{figs/red_prm3sat.pdf}
    \caption{An illustration of the   reduction from {\sc PMR3SAT} to {\sc Discrete Witness Set}.}
    \label{wit1}
\end{figure}

\subparagraph{Brief Overview of the reduction.}
Given a rectilinear embedding of an instance \(\Phi\) of PMR3SAT (witn $n$ variable and $m$ clauses) we build a polygon \(\pol\) with holes and a candidate point set \(Q\subset \pol\) with the following properties (see Figure~\ref{wit1} for an illustration): each variable \(v\) corresponds to a \emph{variable gadget} \(R_v\) placed on the variable line; each clause \(C\) corresponds to a \emph{clause gadget} \(R_C\) placed either above (if \(C\) is positive) or below (if \(C\) is negative) the variable line; and whenever variable \(v\) occurs in clause \(C\) we connect \(R_v\) and \(R_C\) by a thin vertical corridor (a narrow rectilinear strip) that enforces specific visibility relations between points in the two gadgets. We choose the candidate set \(Q\) so that:
\begin{itemize}
  \item every clause gadget \(R_C\) contains exactly three \emph{blue} candidate points (one per incident variable edge position),
  \item every variable gadget \(R_v\) contains \(m\) \emph{orange} points and \(m\) \emph{red} points (for a uniform counting argument we place \(m\) copies of each color per variable).
\end{itemize}
Thus \(|Q| = 2mn + 3m\).
Intuitively each variable gadget allows us to pick either all orange points (representing truth) or all red points (representing false), but not both. Each blue point in a clause gadget can be selected only if the connected variable gadget has chosen the matching color (so a positive clause's blue can be selected only if the corresponding variable gadget chose orange; a negative clause's blue only if that variable chose red). Clause gadgets are arranged so that at most one blue point per clause can be selected (blue points in the same clause conflict pairwise). We set the target threshold \(k := mn + m\). The reduction ensures \(\Phi\) satisfiable if and only if  there exists an witness set \(S\subseteq Q\) with \(|S|\ge k\).

\ifthenelse{\boolean{shortver}}{}{

 \paragraph*{Gadget Construction.}

 \begin{description}
        \item[Variable Gadget.] Each variable gadget \(R_v\) is an rectilinear polygon   placed on the variable line. Inside \(R_v\) we place two groups of candidate points: \(m\) red points arranged along the top part of \(R_v\), \(m\) orange points arranged along the bottom part of \(R_v\). We placed the red-orange points in such a way that in each variable gadget \(R_v\) the following hold:
\begin{enumerate}
  \item Any two orange (resp, red) points have disjoint visibility regions inside \(R_v\) (so they can all be simultaneously chosen).

  \item Any orange point and the next  red point (according to there position along the boundary) have intersecting visibility regions, so we cannot choose both of them.
\end{enumerate}

        \item[Clause Gadget.] Each clause gadget \(R_C\) is axis-aligned rectangle  placed above (for positive) or below (for negative) the variable line. In \(R_C\) we place three blue points corresponding to the three incident variable-edges (if a clause has fewer than three literals we leave unused positions empty). We placed the blue points in such a way that in each clause gadget \(R_C\) the following hold:

\begin{enumerate}
  \item Any two blue points in the same clause gadget have intersecting visibility regions, so at most one blue point can be chosen from \(R_C\)).
  \item For each incident variable \(v\) of \(C\), the blue point placed at the corridor endpoint toward \(v\) has its visibility region intersecting exactly the opposite-color points in \(R_v\) (i.e., for a positive clause the blue point intersects red points' visibility but not orange points', and vice versa for negative clauses). This ensures consistency between clauses and variables as described below.
\end{enumerate}

        \item[Connection between variable and clause gadget.]
        For each incidence (edge) between variable \(v\) and clause \(C\) in the rectilinear embedding we create a narrow vertical corridor connecting the corresponding positions in \(R_v\) and \(R_C\). 
\begin{itemize}
  \item A blue point in \(R_C\) can see into \(R_v\) only the single pocket (the corresponding red or orange point) assigned for that corridor; it cannot see points elsewhere.
\end{itemize}

\item[Target Value] We construct a polygon \(\pol\) with holes and set of candidate points \(Q\) with $|Q| = 2mn + 3m$, and set the target $k := mn + m$.
\end{description}

The polygon can be obtained in time polynomial in the size of the PMR3SAT embedding (the embedding is planar and rectilinear). Next we prove the correctness of the reduction.

\paragraph*{Correctness of the reduction}
We prove that \(\Phi\) is  satisfiable if and only if  there exists an witness set  \(S\subseteq Q\) with \(|S|\ge k\).

\begin{claim}[Completeness]
    If $\Phi$ is satisfiable then $\pol$ has $k=mn+m$ witnesses \(S\subseteq Q\).
\end{claim}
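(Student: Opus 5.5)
Given a satisfying assignment $\sigma$ of $\Phi$, I will build $S$ directly and then verify that its visibility regions are pairwise disjoint.

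\textbf{Construction.} For each variable $v$: if $\sigma(v)=\texttt{true}$, put all $m$ orange points of $R_v$ into $S$; otherwise put all $m$ red points. This contributes exactly $mn$ points. For each clause $C$, pick one literal satisfied by $\sigma$, say on variable $v$, and put into $S$ the blue point of $R_C$ at the corridor toward $v$. This contributes $m$ more points, so $|S|=mn+m=k$.

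\textbf{Verification.} I check the pairs of points of $S$ by type, using the gadget properties fixed in the construction.

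\begin{enumerate}
\item \emph{Two points in the same variable gadget.} They have the same colour, so by property~1 of the variable gadget their visibility regions are disjoint inside $R_v$. Their visibility can leak out of $R_v$ only through the thin corridors. I will argue that each same-coloured point sees into a corridor only along its own narrow pocket, so disjointness is not broken outside $R_v$ either.
\item \emph{A variable point and a blue point.} Take the chosen blue point of $C$, attached to variable $v$ with a literal true under $\sigma$. If $C$ is positive, then $\sigma(v)=\texttt{true}$, so $R_v$ contributes orange points. Property~2 of the clause gadget says this blue point's visibility meets only red points' visibility in $R_v$, so there is no conflict; the negative case is symmetric. For any other variable $u\neq v$, the corridor property says the blue point sees into no gadget except the single pocket along its own corridor, so its region is disjoint from everything in $R_u$.
\item \emph{Two blue points.} They lie in different clause gadgets, since exactly one is chosen per clause. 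Planarity of the rectilinear embedding keeps the clause rectangles and their corridors separated, so the two regions do not meet.
\end{enumerate}

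\textbf{Main obstacle.} The delicate step is the leakage through corridors, in items~1 and~3. Properties~1 and~2 as stated only control visibility inside $R_v$ and between a blue point and its own pocket. I therefore need the corridor widths and pocket placements fixed so that each candidate point's visibility region is contained in its own gadget together with its own corridor. I will make this an explicit invariant of the construction (corridors narrow enough and entered at pocket positions), and then all the cross-gadget cases follow immediately from it.
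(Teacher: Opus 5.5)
Your construction of $S$ and the count $|S|=mn+m$ are exactly the paper's. The paper then simply asserts that pairwise disjointness ``follows from the gadget invariants''. The trouble is in the verification you add: it does not close the leakage issue you yourself identified.

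The critical pair sits inside a clause gadget. Let $C$ be positive with $u,v\in C$, $\sigma(u)=\texttt{false}$ and $\sigma(v)=\texttt{true}$. Then $S$ contains the blue point $b_v$ of $R_C$ and the red point $r$ of $R_u$ assigned to the corridor from $R_u$ to $R_C$.
\begin{itemize}
\item In item~2 you argue that since $b_v$ sees into no gadget except its own pocket, it is disjoint from everything in $R_u$. But the condition to check is $\vis(b_v)\cap\vis(r)=\emptyset$, and $\vis(r)$ is not confined to $R_u$.
\item Your invariant only requires $\vis(r)$ to lie in $R_u$ together with its own corridor. It therefore allows $\vis(r)$ to run up that corridor to its mouth on the boundary of $R_C$.
\item $b_v$ lies in the rectangle $R_C$, so it sees all of $R_C$, including that mouth. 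If $b_v$ is interior to $R_C$, it even sees a small wedge past the mouth into the corridor, so the containment invariant cannot hold for $b_v$ as you phrased it.
\end{itemize}
Hence the claim that ``all the cross-gadget cases follow immediately'' fails in exactly the interaction that makes the clause gadget work.

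What you need is a pairwise condition, not a containment one. For a variable point $p$ and a blue point $b$, require that $\vis(p)\cap\vis(b)\neq\emptyset$ only if $b$ is the blue point on $p$'s corridor and $p$ has the colour that $b$'s clause forbids. Geometrically, the part of a corridor seen from its variable-side point (for instance a bend or side niche at its lower end) must be visible from the clause side only by the blue point on that corridor. In particular, no variable point may see the upper part of its corridor or any point of a clause gadget. You also need distinct corridors entering the same $R_v$ to end in disjoint pockets. This is required in your item~3, since two chosen blue points in different positive clauses may both use corridors into the same true variable, and planarity alone says nothing about what they see inside $R_v$. With these conditions your three-case check goes through.
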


\begin{claimproof}
    We construct \(S\) as follows:
\begin{itemize}
  \item For each variable \(v\): if \(v\) is assigned \texttt{true}, add all \(m\) orange points of \(R_v\) to \(S\); if \(v\) is assigned \texttt{false}, add all \(m\) red points of \(R_v\) to \(S\). 
  
  \item For each clause \(C\): since the assignment satisfies \(C\), there exists a literal in \(C\) set to \texttt{true}. If \(C\) is positive there exists a variable \(v\in C\) with \(v=\texttt{true}\). Consider the blue point in \(R_C\) positioned toward \(v\); by construction that blue point does \emph{not} conflict with the orange points in \(R_v\) but would conflict with red points. Since we chose orange in \(R_v\), the blue point is compatible with all previously chosen points. Similarly for negative clauses and red choices.
\end{itemize}
Thus we choose \(mn\) points from variable gadgets (each of the \(n\) variables contributes \(m\) same-color points) and \(m\) blue points (one per clause). Total \(|S| = mn + m = k\). Pairwise disjointness of visibility regions follows from the gadget invariants described above. This proves completeness.
\end{claimproof}

\begin{claim}[Soundness]
     If $\pol$ has $k=mn+m$ witnesses \(S\subseteq Q\) then $\Phi$ is satisfiable.
\end{claim}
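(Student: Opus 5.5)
The soundness claim follows by a counting argument built on the gadget invariants; we then read a satisfying assignment off $S$. Let $S\subseteq Q$ be a witness set with $|S|\ge mn+m$.

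\textbf{Bound the blue points.} By invariant~1 of the clause gadget, any two blue points of the same $R_C$ have intersecting visibility regions. Hence $S$ contains at most one blue point per clause, and at most $m$ blue points in total.

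\textbf{Bound the variable-gadget points.} In each $R_v$ the $2m$ red and orange points are arranged along the boundary so that each orange point conflicts with the next red point. We will argue that the conflict graph inside $R_v$ contains a perfect matching, or more strongly an even cycle alternating orange/red. From this it follows that at most $m$ points of $R_v$ can be chosen. Moreover, if exactly $m$ are chosen, then they are all orange or all red: on an alternating even cycle of length $2m$, the only independent sets of size $m$ are the two colour classes.

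\textbf{Use tightness of the count.} Summing the two bounds gives $|S|\le mn+m$. Equality therefore forces two things: $S$ contains exactly one blue point in every clause gadget, and exactly $m$ monochromatic points in every variable gadget.

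\textbf{Define and check the assignment.} Set $v$ to \texttt{true} if $R_v$ contributes its orange points, and to \texttt{false} if it contributes its red points. Take a positive clause $C$ with chosen blue point $b$, and let $b$ sit at the corridor toward $v$. By invariant~2 of the clause gadget, $\vis(b)$ meets the visibility region of the red points of $R_v$ (the pocket assigned to that corridor). Since $S$ is a witness set, the red points of $R_v$ are not in $S$, so $R_v$ chose orange and $v$ is \texttt{true}, satisfying $C$. Negative clauses are handled symmetrically with the colours swapped.

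\textbf{Main obstacle.} The delicate step is the per-variable bound: showing that at most $m$ points of $R_v$ can be chosen and that a choice of exactly $m$ must be monochromatic. The gadget as stated only asserts that each orange point conflicts with the \emph{next} red point. To close the ring, I would try to arrange the points cyclically, so that each orange point also conflicts with the previous red point, making the conflict graph a $2m$-cycle. If the layout is instead a path, I would add a cyclic wrap-around conflict or use a slack argument. A related concern is that the blue point's single pocket must be seen by \emph{all} red points of $R_v$, or at least by one red point that is forced whenever any red point is chosen. Monochromaticity resolves this, since all $m$ chosen points have the same colour, so the pocket's red point is chosen exactly when $v$ is \texttt{false}.
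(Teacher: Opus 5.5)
Your proposal is correct and follows the paper's soundness proof essentially step for step: at most one blue point per clause, at most $m$ points per variable gadget with equality forcing a single colour, tightness of the total bound $mn+m$, and reading the assignment off the colour chosen in each $R_v$. The only difference is the per-variable bound, where the paper simply asserts that no orange point and red point of $R_v$ can be chosen together (stronger than the stated ``next red point'' invariant, and subsuming your alternating $2m$-cycle), so the obstacle you flag is a looseness in the paper's gadget description rather than a gap in your argument.
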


\begin{claimproof}
    First observe that  each clause gadget contributes at most one blue point to any valid witness set. Hence the total number of blue points in \(S\) is at most \(m\). Second, within each variable gadget \(R_v\),  we cannot simultaneously choose an orange and a red point. But  we can choose up to \(m\) points in \(R_v\) by selecting all orange or all red assigned copies.

Hence the maximum number of points one can choose while respecting pairwise-disjointness is achieved by, for each variable gadget, choosing all \(m\) points of one color (either all orange or all red), and for each clause choosing at most one blue; this yields an upper bound of \(mn + m\) points total. Since \(|S|\ge mn+m\), it follows that:
\begin{itemize}
  \item For every variable gadget \(R_v\) the set \(S\) must contain exactly \(m\) chosen points from \(R_v\), and these must all be of the same color (otherwise the variable gadget could contribute strictly fewer than \(m\) points because red and orange conflict pairwise).
  \item For every clause gadget, \(S\) must contain exactly one blue point (because if some clause contributed zero blue points, then even choosing \(m\) from each variable would give at most \(mn < mn+m\) total).
\end{itemize}

Thus from \(S\) we can extract a truth assignment: set variable \(v\) to \texttt{true} if \(S\) selects all orange points of \(R_v\), and set \(v\) to \texttt{false} if \(S\) selects all red points of \(R_v\).

It remains to verify that every clause \(C\) is satisfied by this assignment. Consider the blue point of \(R_C\) that is contained in \(S\). By construction of the corridors, that blue point is compatible with the chosen color in the variable gadget at the corridor it sits above/below; specifically:
\begin{itemize}
  \item If \(C\) is a positive clause and the chosen blue point in \(R_C\) is the one adjacent to variable \(v\), then that blue point is compatible only when \(R_v\) contributed orange points to \(S\) (i.e., \(v=\texttt{true}\)). Therefore this clause has a true literal.
  \item If \(C\) is negative, the analogous argument shows the adjacent variable \(v\) contributed red points to \(S\) (i.e., \(v=\texttt{false}\)), so the clause is satisfied.
\end{itemize}
Hence every clause contains at least one true literal under the derived assignment, and therefore \(\Phi\) is satisfiable. This proves soundness.
\end{claimproof}

Putting together we can conclude that Discrete Witness Set is NP-hard. Putting together we can conclude that  \textsc{Discrete Witness Set}  is {\sf NP}-hard.  It is easy
to observe that a candidate set $S$ can be verified in polynomial time (similar argument as \Cref{sec:np}).

Hence we have the  \Cref{theo-hard1}.
}

\section{{\sc Witness Set} is in {\sf NP}}\label{sec:np}

Proving $\mathsf{NP}$-hardness for visibility problems requires care, as the classical {\sc Art Gallery} problem is $\exists\mathbb{R}$-complete~\cite{DBLP:journals/jacm/AbrahamsenAM22}, and therefore not in $\mathsf{NP}$ unless $\mathsf{NP} = \exists\mathbb{R}$. In contrast, we show that its dual problem, the {\sc Witness Set} problem, lies in $\mathsf{NP}$. A certificate consists of a finite set of points $S = \{p_1, \dots, p_k\}$ inside a  polygon $\pol$. To verify that $S$ is a valid witness set, we proceed as follows.

\begin{enumerate}
\item For each point $p_i \in S$, compute its visibility polygon $\mathrm{Vis}(p_i)$ within $\pol$. It is well known that the visibility polygon of a point in a  polygon with $n$ vertices can be computed in linear time~\cite{DBLP:journals/cvgip/Lee83}.

\item For every pair of distinct points $p_i, p_j$, check whether the corresponding visibility polygons $\mathrm{Vis}(p_i)$ and $\mathrm{Vis}(p_j)$ intersect. This can be done by testing for intersection between two polygons, which reduces to checking for edge intersections and containment, and can be solved in polynomial time~\cite{DBLP:conf/focs/ShamosH76}.

\item Accept if and only if for all pairs $p_i, p_j$ with $i \neq j$, we have
$
\mathrm{Vis}(p_i) \cap \mathrm{Vis}(p_j) = \emptyset.
$
\end{enumerate}

\section{Conclusion}\label{sec-conclusion}

In this work, we present first  discretization framework for the
{\sc Witness Set}  in simple polygons, together with  structural
characterizations that yield an $n^{f(k)}$–time algorithm to solve {\sc Witness Set} with some  $f$, where $k$ is the desired witness-set size.  Several directions for future research remain open. 

\begin{itemize}

    \item Our discretization produces a finite candidate set
$Q$ of size $n^{h(k)}$ for computable function $h$, the current bounds are unlikely to be tight.
Obtaining substantially smaller discretizations -- ideally of size
$g(r,k)\,n^{\mathcal{O}(1)}$ would significantly strengthen the approach.

\item Determining whether constant-factor approximations or even
PTAS-type algorithms exist for {\sc Witness Set}  is an appealing  question.

\item While our exact algorithm places the problem in {\sf XP} when
parameterized by $k$, it remains open whether an {\sf FPT} algorithm in $k$ is possible.

\item Our {\sf NP}-hardness result is on  \textsc{Discrete Witness Set} for the polygons with holes.  It remains an open question whether the \textsc{Witness Set} problem is {\sf NP}-hard.
\end{itemize}

\bibliography{cite}
\end{document}